\setlist[enumerate]{align=left}
\numberwithin{equation}{section}
\newtheorem{theorem}{Theorem}
\newtheorem{fact}{Fact} 
\newtheorem{lemma}[theorem]{Lemma}
\newcommand{\myE}{\mathbb{E}}
\newcommand{\myQ}{\mathbb{Q}}
\newcommand{\CEP}{\text{CEP}}
\newcommand{\HR}{\text{HR}}
\newcommand{\FAR}{\text{FAR}}
\newcommand{\AUC}{\text{AUC}}
\newcommand{\myS}{\textsf{S}}
\newcommand{\BS}{\textsf{BS}}
\newcommand{\LogS}{\textsf{LogS}}
\newcommand{\SX}{\bar{\myS}}
\newcommand{\SC}{\bar{\myS}_{\textsf{C}}}
\newcommand{\SR}{\bar{\myS}_{\textsf{R}}}
\newcommand{\SXtheta}{\bar{\myS}_\theta}
\newcommand{\MCB}{\text{MCB}}
\newcommand{\DSC}{\text{DSC}}
\newcommand{\UNC}{\text{UNC}}
\title{Evaluating Probabilistic Classifiers: The Triptych}
\author{Timo Dimitriadis\thanks{Alfred Weber Institute of Economics, Heidelberg University and Computational Statistics (CST) group, Heidelberg Institute for Theoretical Studies, Germany; e-mail: \texttt{timo.dimitriadis@awi.uni-heidelberg.de}} 
\and Tilmann Gneiting\thanks{Computational Statistics (CST) group, Heidelberg Institute for Theoretical Studies, and Institute for Stochastics, Karlsruhe Institute of Technology, Germany; e-mail: \texttt{tilmann.gneiting@h-its.org}} 
\and Alexander I. Jordan\thanks{Computational Statistics (CST) group, Heidelberg Institute for Theoretical Studies, Germany; e-mail: \texttt{alexander.jordan@h-its.org}} 
\and Peter Vogel\thanks{\fussy Institute for Stochastics, Karlsruhe Institute of Technology, Germany; e-mail: \texttt{petervogel1991@googlemail.com}} 
}
\begin{document}

\maketitle

\begin{abstract}
Probability forecasts for binary outcomes, often referred to as probabilistic classifiers or confidence scores, are ubiquitous in science and society, and methods for evaluating and comparing them are in great demand.  We propose and study a triptych of diagnostic graphics that focus on distinct and complementary aspects of forecast performance: The reliability diagram addresses calibration, the receiver operating characteristic (ROC) curve diagnoses discrimination ability, and the Murphy diagram visualizes overall predictive performance and value.  A Murphy curve shows a forecast's mean elementary scores, including the widely used misclassification rate, and the area under a Murphy curve equals the mean Brier score.  For a calibrated forecast, the reliability curve lies on the diagonal, and for competing calibrated forecasts, the ROC and Murphy curves share the same number of crossing points.  We invoke the recently developed CORP (Consistent, Optimally binned, Reproducible, and Pool-Adjacent-Violators (PAV) algorithm based) approach to craft reliability diagrams and decompose a mean score into miscalibration (\MCB), discrimination (\DSC), and uncertainty (\UNC) components.  Plots of the $\DSC$ measure of discrimination ability versus the calibration metric $\MCB$ visualize classifier performance across multiple competitors.  The proposed tools are illustrated in empirical examples from astrophysics, economics, and social science.
\end{abstract}

\textit{Keywords:} Calibration error, economic utility, logarithmic score, {\MCB}--{\DSC} plot, misclassification loss, proper scoring rule, score decomposition, sharpness principle 

\vfill
\section{Introduction}  \label{sec:introduction}

Across science and society, probability forecasts for the occurrence of a binary outcome, also referred to as probabilistic classifiers or confidence scores, are widely used.  Prominent examples include a patient's recovery or survival, weather events, solar flares, the designation of email as spam, credit approval, and recidivism of criminal defendants, to name but a few applications.  Evidently, our ability to develop and improve probability forecasts depends on the availability of diagnostic tools for the assessment and comparison of predictive power.  \vfill \pagebreak

While some applications call for the use of a single numerical performance measure, with forecast contests and leader boards being prime examples, the condensation of forecast quality into a single number prevents detailed analyses.  As \citet{Janssens2020} notes, 
\begin{quote}
\footnotesize
``Some prediction researchers prefer one metric or graph that captures the overall performance of prediction models.  Others prefer one for each different aspect of performance, such as calibration, discrimination, predictive value (risks), and utility.'' 
\end{quote} 
Not surprisingly, numerous types of diagnostic graphics for the evaluation of probability forecasts exist \citep{Murphy1992, Prati2011, Filho2021}, and practitioners may wonder which ones are to be preferred.

In this article we propose the use of a triptych of diagnostic graphics and provide theoretical support for our choices.  The triptych consists of the reliability diagram in the recently proposed CORP (Consistent, Optimally binned, Reproducible, and Pool-Adjacent-Violators (PAV) algorithm based) form to assess calibration \citep{Dimitriadis2021}, the receiver operating characteristic (ROC) curve to judge discrimination ability \citep{Swets1973, Gneiting2022a}, and the Murphy diagram for the assessment of overall predictive performance and utility \citep{Ehm2016}.  Figure \ref{fig:C1_triptych} illustrates the triptych for probabilistic classifiers from an astrophysical forecast challenge \citep{Leka2019a, Leka2019b} as introduced in Table \ref{tab:C1} and discussed in detail in Section \ref{sec:solar}. 

\begin{table}[t]
\centering
\footnotesize
\caption{Probability forecasts for class C1.0+ solar flares at a prediction horizon of a day ahead from a joint test set within calendar years 2016 and 2017 \citep{Leka2019a, Leka2019b}: Acronym, source, mean Brier score, mean logarithmic (Log) score, and misclassification rate (MR).  Details of the data example are discussed in Section \ref{sec:solar}.}  \label{tab:C1}		
\begin{tabular}{llrccc}
\toprule
\multicolumn{2}{c}{Probability Forecast} && \multicolumn{3}{c}{Mean Score}  \\
\cmidrule{1-2} \cmidrule{4-6} 
Acronym & Source                                          && Brier & Log & MR \\
\cmidrule{1-2} \cmidrule{4-6} 
NOAA    & National Oceanic and Atmospheric Administration && 0.144 & 0.449    & 0.205 \\ 
SIDC    & Royal Observatorium Belgium                     && 0.172 & 0.515    & 0.263 \\  
ASSA    & Korean Space Weather Agency                     && 0.184 & $\infty$ & 0.273 \\ 
MCSTAT  & Trinity College Dublin                          && 0.193 & 0.587    & 0.275 \\
\bottomrule
\end{tabular}
\end{table}

\begin{figure}[t]
\centering
\includegraphics[width=\linewidth]{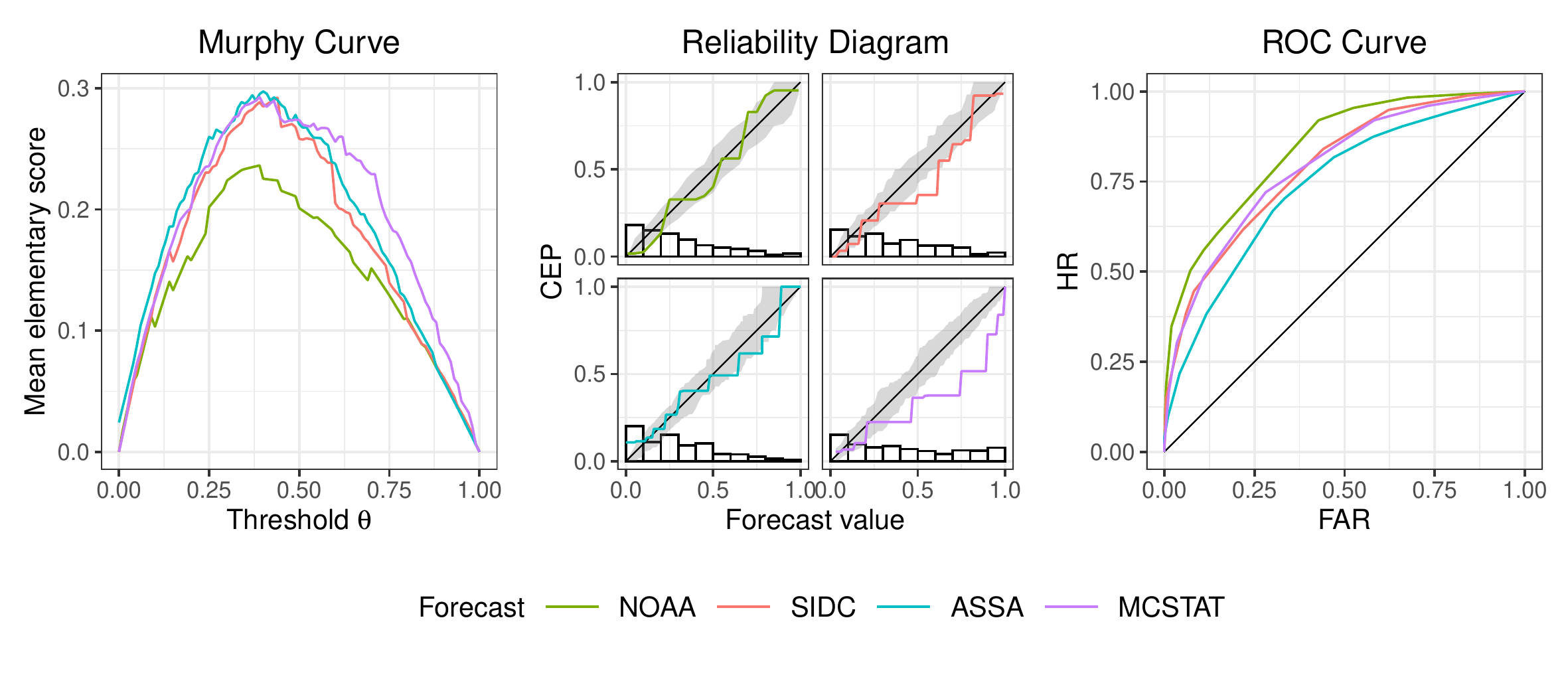}
\vspace{-11mm}
\caption{Triptych of diagnostic graphics for evaluating and comparing the probability forecasts of class C1.0+ solar flares from Table \ref{tab:C1}: Murphy curve, reliability diagram, and ROC curve.}  \label{fig:C1_triptych}
\end{figure}

Starting from the left, the Murphy diagram assesses overall predictive performance in terms of proper scoring rules.  To provide background, a scoring rules assigns a score $\myS(x,y)$ to each pair of a probability forecast $x \in [0,1]$ and a binary outcome $y \in \{ 0, 1 \}$, where 1 stands for an event and 0 for a non-event.  A scoring rule is proper if a forecaster minimizes the expected score by issuing a probability forecast that corresponds to her true belief, with the Brier score $\myS(x,y) = (x-y)^2$ and the logarithmic (Log) score $\myS(x,y) = -y \log x - (1-y) \log(1-x)$ being prominent examples \citep{Gneiting2007a}.  Scores then are averaged over a test set, and the forecast with the smallest mean score is considered best.  The widely used misclassification rate (MR) arises as a special case, namely, by assigning a score of 1 if the probability forecast is less than $\frac{1}{2}$ and the event realizes, or the forecast is greater than $\frac{1}{2}$ and the event does not realize, and assigning a score of 0 otherwise.  Distinct proper scoring rules may yield distinct forecast rankings, so practitioners may wonder which one to use, and guidance is essential.  In the case of a binary outcome, any proper scoring rule can be represented as a mixture over so-called elementary scoring rules, and so it suffices to consider only those.  Fortunately, the family of the elementary scoring rules is linearly parameterized by a threshold or cost-loss parameter $\theta$.  In a nutshell, a Murphy curve depicts the mean elementary score as a function of the threshold $\theta$, with lower scores being preferable.  The height of the Murphy curve at $\theta = \frac{1}{2}$ equals the misclassification rate, and the area under the Murphy curve equals the mean Brier score.  If a forecast has a Murphy curve below that of a competitor, then it is superior in terms of any proper scoring rule, and has superior economic utility to any decision maker.  For example, we see from the Murphy diagram in Figure \ref{fig:C1_triptych} that the NOAA forecast dominates the ASSA forecast, regardless of the use intended.  

A probability forecast is calibrated if, conditional on any forecast value $p$, the event realizes in $100 \cdot p$ percent of the instances considered.  Reliability diagrams visualize calibration, by plotting an estimate of the conditional event probability (\CEP) as a function of the forecast value.  While reliability curves close to the diagonal are compatible with assumptions of calibration, notable departures from the diagonal suggest miscalibration and can be interpreted diagnostically.  We adopt the recently proposed CORP approach of \citet{Dimitriadis2021} for the estimation of CEPs by nonparametric isotonic regression, as illustrated in Figure \ref{fig:C1_triptych}, where the SIDC and MCSTAT forecasts exhibit overprediction, with estimates below the diagonal. 

Receiver operating characteristic (ROC) curves visualize the discrimination ability of the forecasts --- that is, they judge to what extent the forecast values distinguish situations with lower or higher true event probabilities.  Specifically, as one issues hard classifiers based on successively higher forecast thresholds, a ROC curve plots the hit rate (HR) on the ordinate against the false alarm rate (FAR) on the abscissa.  As the ROC curve is invariant under strictly increasing transformations of the forecast values, it diagnoses discrimination ability only, while ignoring issues of calibration.  Hit rates close to 1 and false alarm rates close to 0 are desirable, so ROC curves at upper left are indicative of superior discrimination ability.  In the ROC curves in Figure \ref{fig:C1_triptych} the NOAA forecast shows the highest and the ASSA forecast the lowest discrimination ability.  Featuring both excellent calibration and superior discrimination ability, the NOAA forecast also performs best in terms of scoring rules and economic utility, as evidenced by the Murphy curves. 

The choice of the triptych graphics reflects theoretically supported, desirable properties.  Reliability curves exclusively diagnose calibration, ROC curves assess discrimination ability only, and Murphy curves quantify overall predictive performance.  Moreover, the novel Fact \ref{fact:crossingpoints} and Theorem \ref{thm:crossingpoints} below demonstrate that, under perfect calibration, Murphy curves and ROC curves yield congruent insights, as they share the same number of crossing points.  Similarly, for forecasts with shared discrimination ability, Murphy curves assess calibration only.  

Following the pioneering work of \citet{Murphy1973}, researchers have sought decompositions of mean scores into intuitively appealing components that reflect calibration and discrimination, respectively.  We utilize the CORP decomposition of \citet{Dimitriadis2021}, which decomposes a mean score 
\begin{align*}
\SX =  \MCB - \DSC + \UNC
\end{align*}
into readily interpretable components that represent miscalibration (\MCB), discrimination (\DSC), and uncertainty (\UNC), respectively.  In contrast to earlier approaches, CORP reliability diagrams and CORP score components do not depend on user choices or tuning parameters, and they show appealing finite and large sample optimality properties.  The mean score $\SX$ equals a weighted area under the Murphy curve and serves as a summary measure of predictive performance, the {\MCB} component quantifies deviations of the CORP reliability diagram from the diagonal and can be used as a calibration metric, and the {\DSC} component serves as an appealing alternative to the widely used Area Under the ROC Curve (\AUC) measure of discrimination ability.  

If many competing forecasting methods are to be compared, the triptych graphics yield crowded displays.  With such settings in mind, we propose a simple alternative, namely, {\MCB}--{\DSC} plots, that show, for each competitor involved, the {\DSC} measure plotted against the {\MCB} component, augmented by parallel contour lines that indicate an equal mean score.  Due to their simplicity and the joint assessment of overall predictive ability, calibration, and discrimination, {\MCB}--{\DSC} plots visualize strengths and weaknesses of forecasting methods and facilitate the identification of methods of interest that can be analyzed in more detail via the triptych graphics.  In Figure \ref{fig:MCB_DSC} we show Brier score {\MCB}--{\DSC} plots for probability forecasts from solar flare and social science forecast contests. 

\begin{figure}[t]
\centering
\includegraphics[width=\linewidth]{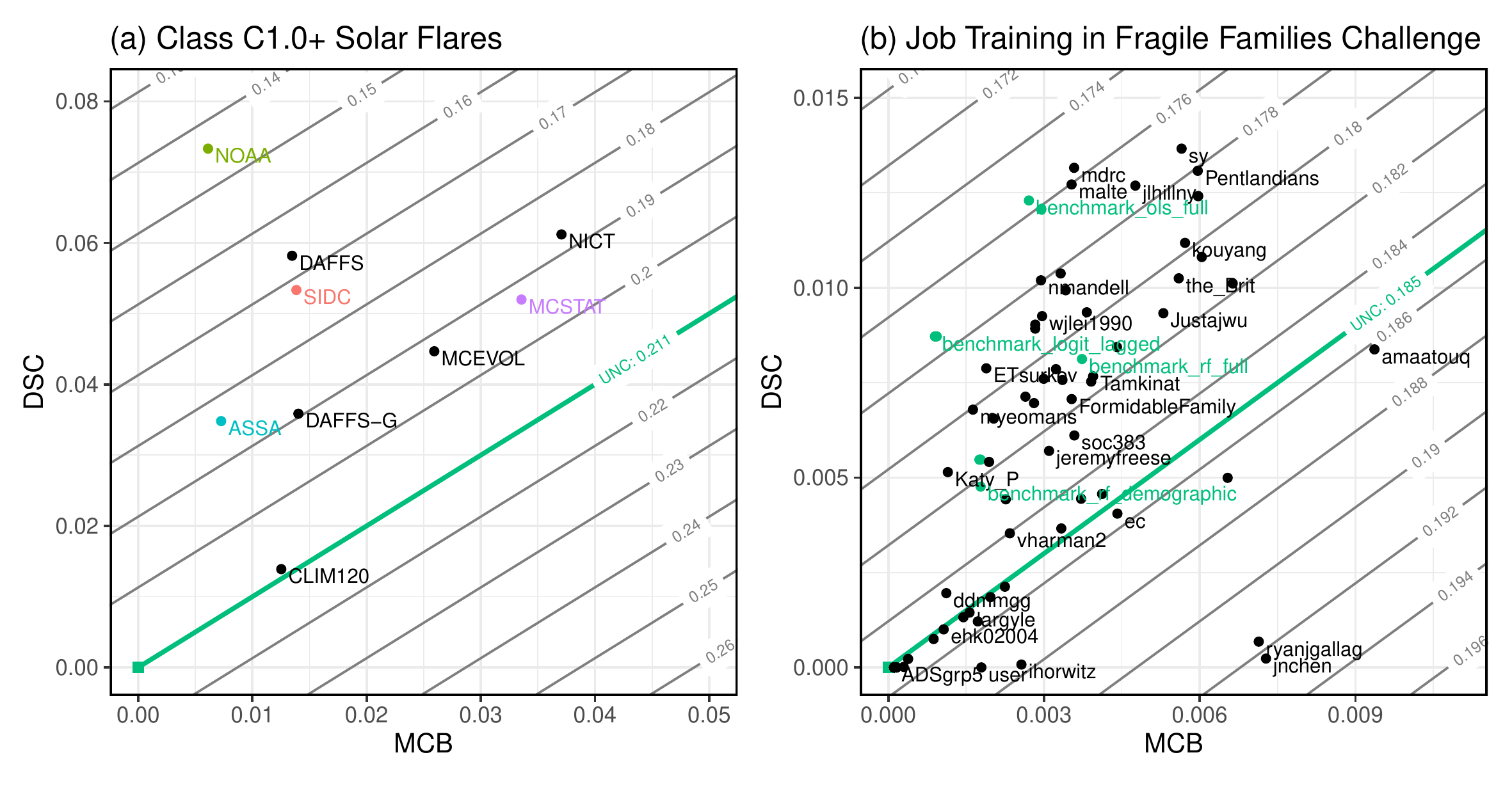}
\caption{Brier score {\MCB}--{\DSC} plots for competitors in forecast contests for (a) class C1.0+ solar flares \citep{Leka2019a}, and (b) job training in the Fragile Families Challenge \citep{Salganik2020b}.  The colors in panel (a) align with Figure \ref{fig:C1_triptych}; in panel (b) benchmark forecasts are represented in green.  The green square at the origin represents the ex post best constant forecast, that is, the unconditional event frequency, and the thick green line separates forecasts that are better (above the line) and that are worse (below the line) than this baseline.  Details of the data examples from astrophysics and social science are discussed in Sections \ref{sec:solar} and \ref{sec:FFC}, respectively.}  
\label{fig:MCB_DSC}
\end{figure}

While there is a rich literature on the evaluation of probabilistic classifiers and associated graphical displays, as reviewed by \citet{Murphy1992}, \citet{Prati2011}, \citet{Richardson2012}, \citet{Alba2017}, \citet{Filho2021}, and \citet{Xenopoulos2023}, and variants of the joint triptych display feature in the extant literature, as in Figure 1 of \citet{Flach2017} and graphics in \citet{TaillardatMestre2020}, the original contributions of our work include the presentation of the triptych as an argumentatively complete set of displays, connections to the CORP approach of \citet{Dimitriadis2021}, the development of {\MCB}--{\DSC} plots, and novel theoretical results proved in Appendix \ref{app:proofs}. 
  
The remainder of the article is organized as follows.  Sections \ref{sec:scores}, \ref{sec:reliability}, and \ref{sec:ROC} concern the individual triptych displays, by discussing proper scoring rules and Murphy curves, CORP reliability diagrams and score decompositions, and ROC curves, respectively.   Section \ref{sec:together} argues for the simultaneous use of the triptych displays, studies the connections between the individuals displays, and discusses {\MCB}--{\DSC} plots.  In particular, we show that for two competing forecasts that are both calibrated, Murphy curves and ROC curves yield congruent insights, as they share the same number of crossing points.  In Section \ref{sec:empirical} we apply the proposed methods in case studies from astrophysics, economics, and social science.  The paper closes in Section \ref{sec:discussion}, where we discuss open problems and directions for future research.  Software for the implementation of the proposed tools is available in \textsf{R} \citep{R, replication_triptych}.

\section{Scoring rules assess overall predictive performance}  \label{sec:scores}

Comparative assessments of overall forecast quality rely on proper scoring rules that encourage honest and careful forecasting \citep{Brier1950, Gneiting2007a}.  A scoring rule is a function $\myS(x,y)$ that assigns a numerical score or penalty based on the probability forecast $x \in [0,1]$ and the binary outcome $y \in \{ 0, 1\}$, where 1 stands for an event and 0 for a non-event.  Infinite penalties are permitted only if an outcome was declared to have probability zero and, thus, be impossible.  Throughout the paper we assume that scoring rules are negatively oriented, so that smaller scores are preferable.

\subsection{Proper and strictly proper scoring rules}  \label{sec:proper}

A scoring rule is proper if, given a Bernoulli random variable $Y$ with success probability $p$, 
\begin{align}  \label{eq:proper}  \textstyle
\myE \left[ \myS(p,Y) \right] \leq \myE \left[ \myS(x,Y) \right]
\end{align}
for all forecast values $x$.  It is strictly proper if, furthermore, equality in \eqref{eq:proper} implies that $x = p$, so that the true success probability is the unique minimizer of the expected score.  The key benefit of propriety is the implicit enforcement of honest and careful forecasts: If a forecaster believes that an event has success probability $p$, then $p$ is her best forecast in terms of the expected score or penalty.  In practice, for a given record
\begin{align}  \label{eq:data} 
(x_1, y_1), \ldots, (x_n, y_n)
\end{align} 
of probability forecasts $x_1, \ldots, x_n$ and associated binary outcomes $y_1, \ldots, y_n$, the mean score 
\begin{align}  \label{eq:SX} 
\SX = \frac{1}{n} \sum_{i=1}^n \myS(x_i,y_i)     
\end{align}
is used to rank competing forecasts. Evidently, the expression on the right-hand side of equation \eqref{eq:SX} corresponds to the expectation $\myE \left[ \myS(X, Y) \right]$ when the tuple $(X, Y)$ of random quantities follows the joint empirical distribution of the record \eqref{eq:data}.  

The most popular examples of strictly proper scoring rules are the Brier score (\BS) and the Logarithmic score (\LogS), defined by 
\begin{align}
\BS(x, y) =  (x-y)^2 \quad \text{and} \quad \LogS(x,y) = - y \log(x) - (1-y) \log(1-x)
\end{align}
for $x \in [0,1]$ and $y \in \{ 0, 1 \}$.  The zero--one loss or score
\begin{align}  \label{eq:MR}  \textstyle
\myS_{\frac{1}{2}}(x, y) = \mathds{1} \! \left( x > \frac{1}{2}, y = 0 \right) 
                         + \mathds{1} \! \left( x < \frac{1}{2}, y = 1 \right) 
                         + \frac{1}{2} \, \mathds{1} \! \left( x = \frac{1}{2} \right)
\end{align}
is a prominent example of a scoring rule that is proper, but not strictly proper.  When averaged in the form of \eqref{eq:SX}, it yields the widely reported misclassification rate.  The zero--one loss arises as the special case $\theta = \frac{1}{2}$ of the general elementary scoring rule 
\begin{align}  \label{eq:elementary}  \textstyle
\myS_\theta(x,y) = 2 \theta \, \mathds{1}(x > \theta, y = 0) 
                 + 2 (1-\theta) \, \mathds{1}(x < \theta, y = 1)
                 + 2 \theta(1-\theta) \, \mathds{1}(x = \theta)
\end{align}
with decision threshold or cost--loss parameter $\theta \in (0,1)$, which is proper, but not strictly proper, as it only takes into account whether a predicted probability is smaller or larger than $\theta$, so cannot distinguish between forecasts that are on the same side of $\theta$.  From an economic perspective, $\myS_\theta$ specifies the loss of a rational decision maker when the ratio of the monetary cost of a false alarm versus the cost of a missed event equals $\theta/(1-\theta)$; see \citet{Ehm2016} and references therein.\footnote{The economic interpretation applies to the left-continuous version of $\myS_\theta$ in eq.~(14) of \citet{Ehm2016}.  Here we use the symmetric version in \eqref{eq:elementary}, which assigns a fixed penalty of $2 \theta(1-\theta)$ when $x = \theta$, independently of the binary outcome $y \in \{ 0, 1 \}$.  Both versions are proper, but not strictly proper.}  In turn, $\myS_\theta$ can be identified with the special case $t = c = \theta$ of the general, cost-weighted misclassification loss at decision threshold $t$ and cost proportion $c$, as studied in the machine learning literature \citep{Hand2009, HernandezOrallo2011, HernandezOrallo2012, HernandezOrallo2013}. 

\subsection{Representations of proper scoring rules}  \label{sec:mixture}

The special role of the elementary scoring functions $\myS_\theta$ from \eqref{eq:elementary} is highlighted in a mixture representation studied by \citet{Schervish1989}.  Subject to technical conditions that are immaterial in practice, every proper scoring rule admits a representation of the form 
\begin{align}  \label{eq:Schervish}
\myS(x,y) = \int_0^1 \myS_\theta(x,y) \: \mathrm{d}H(\theta)
\end{align}
for forecast values $x \in [0,1]$ and outcomes $y \in \{ 0, 1 \}$, where $H$ is a measure that assigns non-negative weight to cost--loss parameters $\theta \in (0,1)$.  If the assigned weight is positive almost everywhere, then the corresponding score is strictly proper.  The elementary score $\myS_\eta$ arises when $H$ is a point measure that assigns mass one to $\eta \in (0,1)$ and no mass elsewhere, the Brier score emerges when the mixing measure $H$ is uniform, and the logarithmic score arises when $H$ has density proportional to $(\theta(1-\theta))^{-1}$.  Hence, the logarithmic score assigns infinite mass to the integrand in \eqref{eq:Schervish} at the very boundaries of the unit interval.  It discourages predictions with forecast probabilities at or near 0 or 1, and may render a single (and, hence, a mean) score infinite, as for the ASSA forecast from Table \ref{tab:C1}.  As Nobel laureate \citet[p.~51]{Selten1998} argued,  
\begin{quote} 
\footnotesize
``The use of the logarithmic scoring rule implies the value judgment that small differences between small probabilities should be taken very seriously and that wrongly describing something extremely improbable as having zero probability is an unforgivable sin.  The author thinks that this value judgment is unacceptable.''
\end{quote}
The mixture representation \eqref{eq:Schervish} also is a powerful tool for the construction of proper scoring rules.  For instance, \citet{Buja2005} introduce the flexible Beta family that arises when the mixing measure $H$ has density proportional to a Beta density.  The members of the Beta family include the Brier score, the logarithmic score, and the H-measure of \citet{Hand2009}. 

An alternative, essentially equivalent characterization is due to \citet{Savage1971}, who showed that, subject to technical conditions, any proper scoring rule allows a representation of the form 
\begin{align}  \label{eq:Savage}
\myS(x,y) = \phi(y) - \phi(x) - \phi'(x) (y-x),
\end{align}
where the function $\phi$ is convex with subgradient $\phi'$.  A subgradient is a generalized version of the classical derivative, and whenever the latter exists the subgradient equals the derivative.  In relation to the mixture representation from \eqref{eq:Schervish} it holds that $\mathrm{d}H(\theta) = \mathrm{d}\phi'(\theta) = \phi''(\theta) \, \mathrm{d}\theta$, with slight technical adaptations when $\phi$ is convex, but not strictly convex \citep{Gneiting2007a}.  Under the Savage representation \eqref{eq:Savage} the Brier score arises when $\phi(t) = t^2$, the logarithmic score emerges when $\phi(t) = t \log(t) + (1-t) \log(1-t)$, and the elementary scoring rule $\myS_\theta$ arises under the convex, but not strictly convex, function $\phi(t) = 2 \max(\theta t, (1-\theta) (1-t))$. 

In practice, it is not uncommon that different proper scoring rules yield distinct forecast rankings.  For example, in Table \ref{tab:C1} the Brier score and the logarithmic score disagree in the ranking of the ASSA and MCSTAT forecasts.  As there is no obvious reason for a specific strictly proper scoring rule to be preferred over any other, a natural question is which one to choose \citep{Merkle2013}.  

\subsection{Murphy curves and Murphy diagrams}  \label{sec:Murphy}

The mixture representation \eqref{eq:Schervish} allows for a compelling resolution of the challenge for guidance in the choice of proper scoring rules.  As the representation shows, any strictly proper scoring rule arises as a mixture over the family of the elementary scoring functions $\myS_\theta$ from \eqref{eq:elementary}.  Thus, it suffices to consider the family of mean elementary scores, 
\begin{align}  \label{eq:SXtheta} 
\SXtheta = \frac{1}{n}\sum_{i=1}^n \myS_\theta(x_i,y_i),
\end{align}
where $\theta \in (0,1)$.  \citet{Ehm2016} proposed plots of the Murphy curve, that is, the graph of $\SXtheta$ as a function of the decision threshold or cost--loss parameter $\theta \in (0,1)$, which allows users to assess forecast performance with respect to all scoring rules simultaneously.  In particular, the height of a Murphy curve at $\theta = \frac{1}{2}$ equals the misclassification rate, but note that sole focus on the misclassification rate as a general measure of predictive performance is problematic, because any single $\myS_\theta$ represents a particular economic scenario as mentioned in Section \ref{sec:proper} and fails to be strictly proper.  For a general measure, it is better to look at the area under a Murphy curve which equals the mean Brier score.

\citet{HernandezOrallo2011} had proposed the same tool under the name of Brier curve, and related displays had been studied by \citet{Murphy1992} and \citet{Drummond2006}, among other authors.  A Murphy diagram then is a plot of Murphy curves for competing forecasts.  If a forecast exhibits a lower mean score than another under all $\myS_\theta$, then it dominates the competitor in terms of any proper scoring rule $\myS$, as lower values under $\myS_\theta$ carry over to $\myS$ through integration in the mixture representation \eqref{eq:Schervish}.

\begin{figure}[t]
\centering
\includegraphics[width=\linewidth]{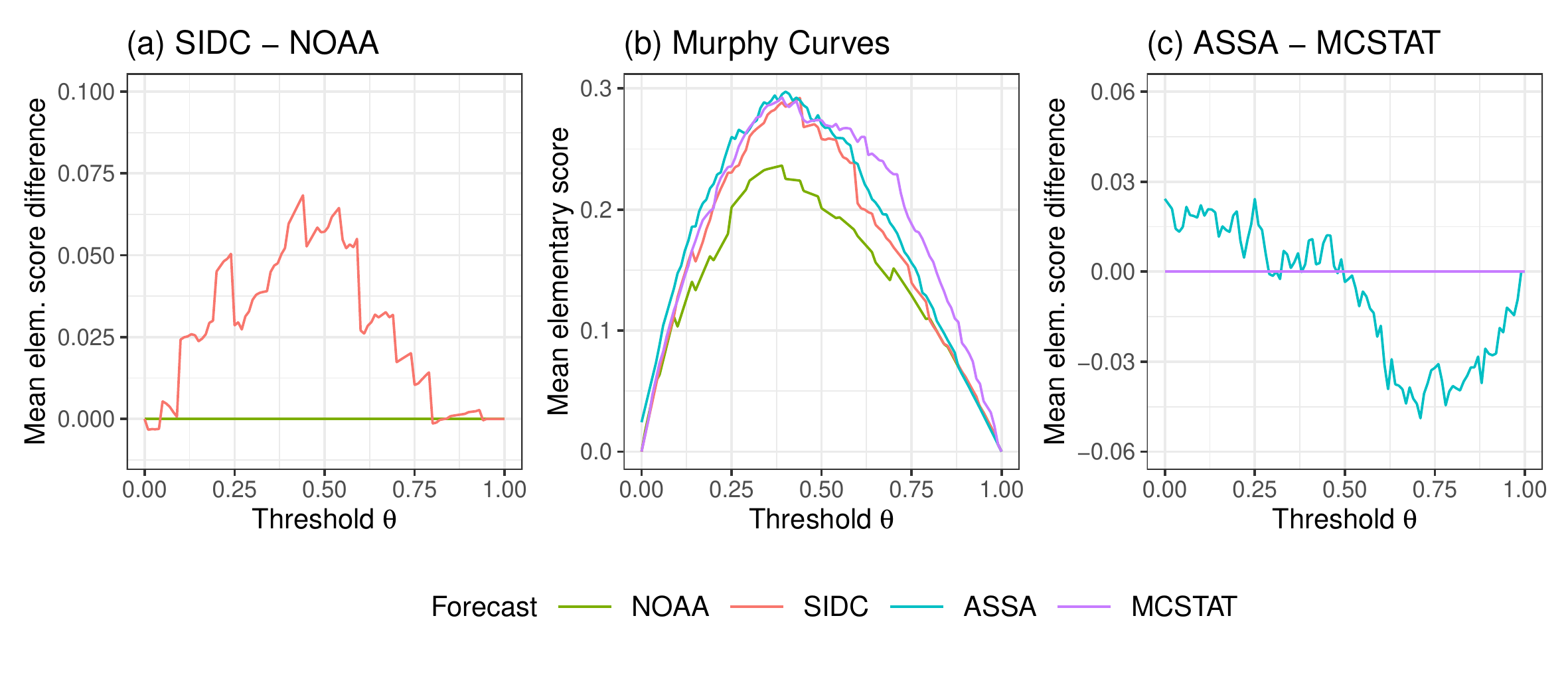}
\vspace{-9mm}
\caption{Murphy curves for the probability forecasts of class C1.0+ solar flares from Table \ref{tab:C1}.  Panel (b) shows the mean elementary score from \eqref{eq:SXtheta}; panels (a) and (c) show the difference between the mean elementary score for SIDC and NOAA, and for ASSA and MCSTAT, respectively.}  
\label{fig:C1_Murphy}
\end{figure}

In Figure \ref{fig:C1_Murphy} we take a closer look at the Murphy diagram for the class C1.0+ solar flare forecasts.  The panel at left compares the leading contenders from Table \ref{tab:C1}, the NOAA and the SIDC forecasts, and we note that NOAA has smaller $\SXtheta$ from \eqref{eq:SXtheta} nearly throughout, so that integration over $\SXtheta$ with respect to typically used mixing measures yields smaller mean scores.  The panel at right compares the MCSTAT and ASSA forecasts, and we see that, while for decision thresholds $\theta$ up to about 0.30 the former has lower $\SXtheta$, the situation is reversed for $\theta$ greater than 0.50.  Hence, it depends on the mixing measure $H$ in \eqref{eq:Schervish} whether a scoring rule prefers the MCSTAT or the ASSA forecast. 

To summarize, Murphy diagrams assess overall predictive performance, with the evaluation being complete in terms of proper scoring rules and economic utility.  Still, a more detailed assessment of the merits and deficiencies of competing forecasts is often desirable.  For example, a forecast might be deficient by systematically over- or underpredicting, or by an inability to distinguish between instances of higher and lower true CEPs.  In a nutshell, these two types of deficiencies correspond to lack of  calibration and lack of discrimination, respectively.  While Murphy diagrams rank forecasts with such deficiencies lower, they are not capable of diagnosing the form and extent of these issues.  Reliability diagrams and ROC curves, to which we turn in the sequel, serve this purpose. 

\section{Reliability diagrams assess calibration}  \label{sec:reliability}

A crucial, desirable property of a probabilistic classifier is that, when looking back at a collection of forecasts and associated binary outcomes, whenever the forecast value $x$ was issued, the outcome ought to occur in about $100 \cdot x$ percent of the respective instances.  To formalize this property, it is useful to think of the forecast and the outcome as random variables $X$ and $Y$, respectively, with joint distribution $\myQ$.  Then the probability forecast $X$ is calibrated if the conditional event probability, 
\begin{align}  \label{eq:CEP} 
\CEP(x) = \myQ \left( Y=1 \mid X=x \right) = \myE \left[ Y \mid X=x \right] \! ,
\end{align}
agrees with the forecast value $x$ for all relevant $x \in [0,1]$.  As \citet[Theorem 2.11]{Gneiting2013} have shown, the condition in \eqref{eq:CEP} serves as as unified notion of calibration for binary outcomes.

\subsection{Reliability curves and reliability diagrams}  \label{sec:reliabilitydiagrams}

Calibration is typically assessed graphically via reliability diagrams \citep{Murphy1977a, Murphy1992, Brocker2007} that plot an estimated version of the conditional event probability $\CEP(x)$ against the forecast value $x$, with deviations from the diagonal suggesting lack of calibration.  Classical approaches to estimating $\CEP(x)$ rely on binning and counting and have been hampered by ad hoc implementation decisions, instability under unavoidable choices regarding binning, and inefficiency \citep{Dimitriadis2021, ArrietaIbarra2022, Roelofs2022}.  To resolve these issues, \citet{Dimitriadis2021} introduced the CORP (Consistent, Optimally binned, Reproducible, and Pool-Adjacent-Violators (PAV) algorithm based) reliability diagram that plots an estimate of $\CEP(x)$ obtained through nonparametric isotonic regression, subject to the regularizing constraint of isotonicity in $x$, as implemented via the Pool-Adjacent-Violators (PAV) algorithm \citep{Ayer1955, deLeeuw2009, Jordan2022}.  

While isotonic regression and the PAV algorithm have been well known as tools for re-calibration \citep{Zadrozny2002}, their usage in the construction of reliability diagrams is a recent development.  For a given record of the form \eqref{eq:data} suppose without loss of generalization that $x_1 \leq \cdots \leq x_n$, and let 
\begin{align}  \label{eq:xhat}
\hat{x}_1 \leq \cdots \leq \hat{x}_n
\end{align}
denote the re-calibrated values generated by the PAV algorithm.  The CORP reliability diagram shows the piecewise linear curve that connects the points $(x_1, \hat{x}_1), \ldots, (x_n, \hat{x}_n)$.  If the original forecast is calibrated, then $x_1 = \hat{x}_1, \ldots, x_n = \hat{x}_n$, and the reliability curve lies on the diagonal.  Otherwise, systematic deviations from the diagonal suggest a lack of calibration.  

In contrast to classical approaches that estimate a reliability curve by assigning forecast values to bins, and counting events per bin, which mandates user choices, the CORP approach does not require any tuning parameters or user intervention, benefits from the regularizing constraint of isotonicity, and has appealing finite sample optimality and asymptotic consistency properties \citep{Dimitriadis2021}.  If desired, CORP reliability curves allow for an interpretation in terms of binning and counting, by identifying any horizontal segment with a bin, and interpreting the CEP as the corresponding empirical event frequency.  Generally, we supplement the reliability curve with a histogram that depicts the unconditional distribution of the forecast values.  

\begin{figure}[t]
\centering
\includegraphics[width=\linewidth]{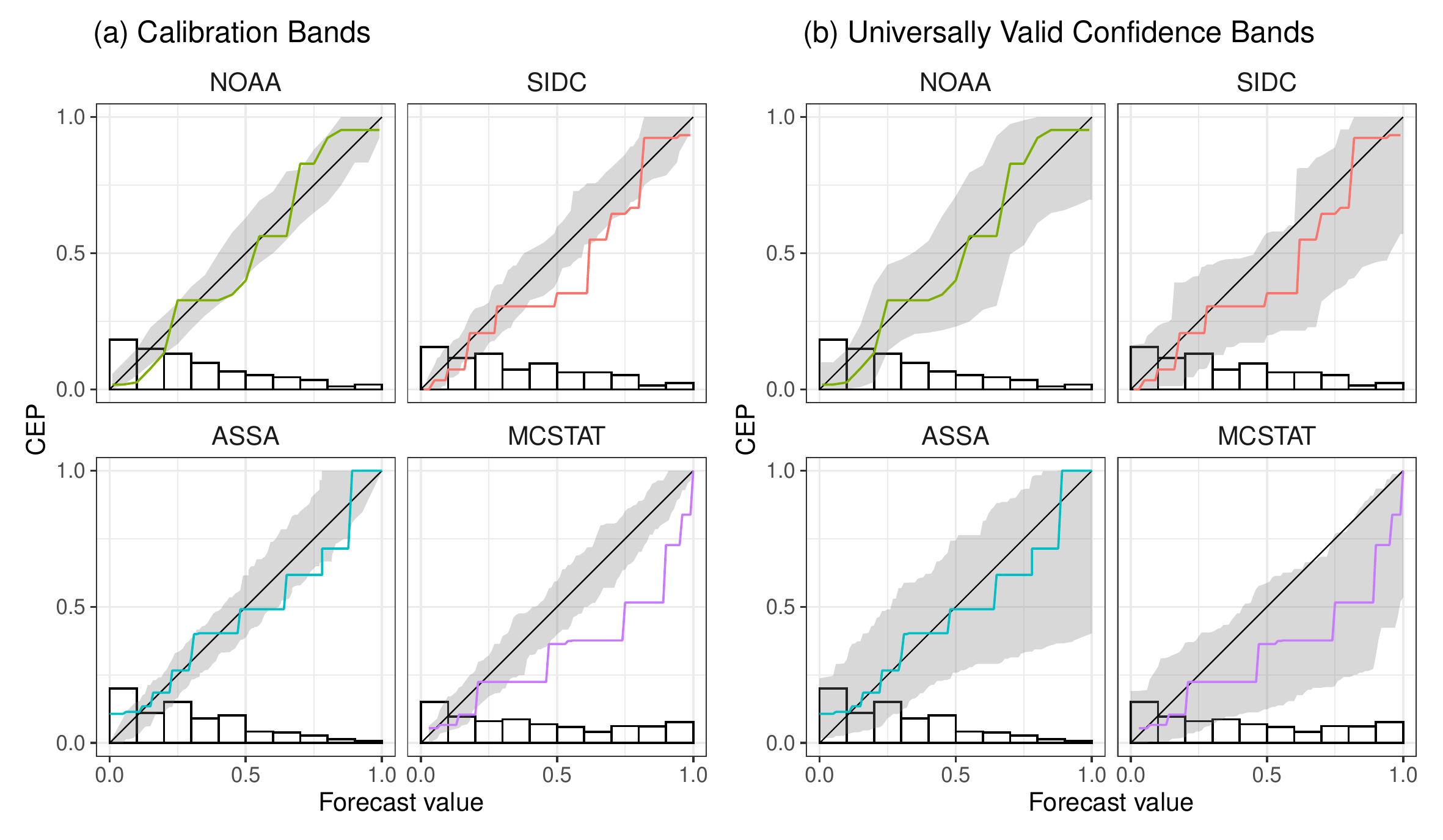}
\vspace{-7mm}
\caption{CORP reliability diagrams for the probability forecasts of class C1.0+ solar flares from Table \ref{tab:C1}, along with (a) consistency bands around the diagonal, and (b) confidence bands, both at the 90 percent level.}  \label{fig:C1_reliability}
\end{figure}	

In Figure \ref{fig:C1_reliability}, we return to the solar flare forecasts from Table \ref{tab:C1} and Figure \ref{fig:C1_triptych}.  At left the CORP reliability diagrams are supplemented by consistency bands.  Segments of a reliability curve that lie outside the consistency band, which represents 90 percent of the reliability curves that arise under the assumption of a calibrated forecast, suggest that the lack of calibration should not be attributed to estimation noise alone.  At right we show universally valid confidence bands \citep{Dimitriadis2022}.  The NOAA and ASSA forecasts are well calibrated, with reliability curves mostly within the consistency bands, and the diagonal well within the confidence bands.  In contrast, the SIDC and MCSTAT forecasts shows substantial underprediction.  

While consistency bands and confidence bands have complementary tasks \citep{Dimitriadis2021}, we use the former as a default, as they address the natural question whether (or not) observed differences between a CORP reliability curve and the diagonal can reasonably be attributed to chance alone, despite the forecast probabilities being perfectly calibrated. 

\subsection{Empirical score decomposition: Miscalibration (\MCB), discrimination (\DSC), and uncertainty (\UNC) components}  \label{sec:decomposition}

For several decades, researchers have sought decompositions of the mean score $\SX$ from \eqref{eq:SX} into nonnegative components that allow for persuasive interpretation \citep{Murphy1973, Ferro2012}.  Typically, a decomposition involves a reliability or miscalibration (\MCB) term that indicates how much the predicted probabilities differ from the conditional event frequencies, a resolution or discrimination (\DSC) term that measures a forecast's ability to distinguish between events and non-events, and an uncertainty (\UNC) component that quantifies the inherent difficulty of the prediction problem, but does not depend on the forecast under consideration.  While extant approaches lack stability under mandatory user decisions, particularly about binning, and may fail to provide an exact decomposition, or might yield components that fail to be nonnegative, the CORP approach yields a new type of decomposition that resolves these issues. 

As before, for a given record \eqref{eq:data} suppose without loss of generality that $x_1 \leq \cdots \leq x_n$, and let $\hat{x}_1 \leq \cdots \leq \hat{x}_n$ denote the PAV re-calibrated values from \eqref{eq:xhat}, as plotted in the CORP reliability diagram.  Furthermore, let $r = \bar{y} = \frac{1}{n} \sum_{i=1}^n y_i$ be the realized unconditional event frequency.  With $\myS$ being any proper scoring rule, let
\begin{equation}  \label{eq:SCR} 
\SC = \frac{1}{n} \sum_{i=1}^n \myS(\hat{x}_i,y_i) \quad \textrm{and} \quad \SR = \frac{1}{n} \sum_{i=1}^n \myS(r,y_i)
\end{equation} 
denote the mean score for the (re)Calibrated probabilities, and for the constant Reference forecast $r$, respectively.  Then the mean score $\SX$ from \eqref{eq:SX} decomposes as
\begin{equation}  \label{eq:decomposition} 
\SX = \underbrace{\left( \hspace{0.3mm} \SX - \SC \right)}_\MCB 
	- \underbrace{\left( \hspace{0.3mm} \SR - \SC \right)}_\DSC 
	+ \underbrace{\SR}_{\UNC}.
\end{equation} 
The miscalibration term $\MCB = \SX - \SC$ equals the difference in the mean score of the original versus the (re)calibrated forecast.  It expresses deviations of the CORP reliability curve from the diagonal in terms of the score under consideration.  The discrimination component $\DSC = \SR - \SC$ quantifies how much the (re)calibrated forecast improves upon the reference score $\SR$ that is based on a calibrated, but constant forecast, and we note that, by construction, {\DSC} is invariant under strictly increasing transformations of the forecast values.  While small values of {\MCB} are preferable, so are large values of the {\DSC} component.  The uncertainty term $\UNC = \SR$ is independent of the forecast at hand and provides a natural benchmark, as it equals the score of the (ex post) best constant forecast.  In contrast to earlier types of decomposition, the CORP decomposition from \eqref{eq:SCR} and \eqref{eq:decomposition} is exact and guarantees that $\MCB \geq 0$ with equality if the original forecast is calibrated, and $\DSC \geq 0$ with equality if the (re)calibrated forecast is constant \citep[Theorem 1]{Dimitriadis2021}.  

\begin{table}[t]
\centering
\footnotesize
\caption{CORP decomposition of the mean score in \eqref{eq:SX} for the probability forecasts of class C1.0+ solar flares from Table \ref{tab:C1}, under the Brier score ($\UNC = 0.211$), the logarithmic score ($\UNC = 0.614$), and misclassification rate ($\UNC = 0.303$).}  \label{tab:C1_decom}
\begin{tabular}{l r ccc r ccc r ccc}
\toprule
Forecast && \multicolumn{3}{c}{Brier Score} && \multicolumn{3}{c}{Logarithmic Score} && \multicolumn{3}{c}{Misclassification Rate} \\
\cmidrule{3-5} \cmidrule{7-9} \cmidrule{11-13} 
         && $\SX$ & {\MCB} & {\DSC} && $\SX$    & {\MCB}   & {\DSC} && $\SX$ & {\MCB} & {\DSC} \\ 
\midrule
NOAA     && 0.144 & 0.006  & 0.073  && 0.449    & 0.027    & 0.191  && 0.205 & 0.004  & 0.102 \\
SIDC     && 0.172 & 0.014  & 0.053  && 0.515    & 0.036    & 0.135  && 0.263 & 0.038  & 0.078 \\
ASSA     && 0.184 & 0.007  & 0.035  && $\infty$ & $\infty$ & 0.085  && 0.273 & 0.006  & 0.036 \\
MCSTAT   && 0.193 & 0.034  & 0.052  && 0.587    & 0.101    & 0.128  && 0.275 & 0.042  & 0.071 \\
\bottomrule
\end{tabular}
\end{table}

The CORP decomposition applies under any proper scoring rule $\myS$.  When $\myS$ is the Brier score, it agrees with the classical decomposition of \citet{Murphy1973} in  the special case where the bins reduce to unique forecast values with an associated nondecreasing sequence of conditional event frequencies \citep[Theorem 2]{Dimitriadis2021}.  Under the misclassification rate that arises from \eqref{eq:SX} under the zero--one score in \eqref{eq:MR}, the components admit appealing interpretations in terms of the original, the (re)calibrated, and the constant reference forecast being on the same or distinct side(s) of $\frac{1}{2}$.  Table \ref{tab:C1_decom} shows the CORP decomposition of the mean Brier score, the mean logarithmic score and the misclassification rate for the solar flare forecasts from Table \ref{tab:C1}.  The {\MCB} components confirm the visual appearance of the CORP reliability diagrams in Figures \ref{fig:C1_triptych} and  \ref{fig:C1_reliability}.  The NOAA forecast exhibits the least and MCSTAT the most pronounced lack of calibration.  As discussed, the mean score $\SX$ and the {\MCB} component under the logarithmic score are infinite for the ASSA forecast.  We defer consideration of the {\DSC} components to Section \ref{sec:ROC}, where we focus attention on ROC curves. 

\subsection{Calibration metrics and the Brier score {\MCB} component}  \label{sec:MCB} 

Recently, there has been a surge of interest in calibration metrics in the machine learning literature.  The widely used metric of the expected or estimated calibration error \citep[ECE:][]{Naeini2015, Guo2017} depends on binning and counting and thus is subject to the aforementioned types of instabilities \citep{Dimitriadis2021, Roelofs2022} and biases \citep{Brocker2012, Ferro2012}.  In a recent review, \citet[p.~3]{ArrietaIbarra2022} summarize that
\begin{quote} 
\footnotesize
``the classical empirical calibration errors based on binning vary significantly based on the choice of bins. The choice of bins is fairly arbitrary and enables the analyst to fudge results (whether purposefully or unintentionally).''
\end{quote} 
To address these issues, \citet{Roelofs2022} use equal-mass bins and select the number of bins as large as possible while preserving isotonicity in the calibration curve.  \citet{ArrietaIbarra2022} and \citet{Brocker2022} recommend graphical displays, calibration metrics, and tests, that are based on cumulative differences between predicted and observed event probabilities.  While approaches of this type have appealing mathematical properties, cumulative quantities lack intuition and ease of interpretation.  Similar to the method proposed by \citet{Roelofs2022}, the CORP approach to reliability diagrams enforces isotonicity, but uses the PAV algorithm to select the number and the arrangement of the bins in fully automated, optimal ways \citep{Dimitriadis2021}.  The CORP decomposition from \eqref{eq:SCR} and \eqref{eq:decomposition} is based on the CORP reliability curve, and when $\myS$ is the Brier score it yields an {\MCB} component that reduces to a classical calibration metric under modest conditions \citep[Theorem 2]{Dimitriadis2021}.  For these reasons, we propose the use of the Brier score {\MCB} component as a calibration metric.  

\section{Receiver operating characteristic (ROC) curves visualize discrimination ability}  \label{sec:ROC}

While calibration is an important quality of probabilistic classifiers, a calibrated forecast is not necessarily powerful, as it may lack the ability to discriminate between events of low and high event probability.  ROC curves are key tools in the assessment of this ability \citep{Egan1975, Swets1973, Fawcett2006}.  In a nutshell, ROC curves visualize potential predictive power, detached from considerations of calibration.  

\subsection{ROC curves}  \label{sec:ROCcurves}

To introduce ROC curves, suppose that we use the threshold $t$ to construct a hard classifier from the probability forecast $x$ in the usual way, by predicting an event ($y = 1$) if $x > t$, and predicting a non-event ($y = 0$) if $x \leq t$.  For a record of the form \eqref{eq:data}, the resulting False Alarm Rate (FAR) and Hit Rate (HR) are given by
\begin{align}
\HR(t) = \frac{\sum_{i=1}^n \mathds{1}(y_i = 1, x_i > t)}{\sum_{i=1}^n \mathds{1}(y_i = 1)} 
\quad \text{and} \quad 
\FAR(t) = \frac{\sum_{i=1}^n \mathds{1}(y_i = 0, x_i > t)}{\sum_{i=1}^n \mathds{1}(y_i = 0)},  
\end{align}	
respectively.  The ROC curve is the piecewise linear curve that connects the at most $n + 1$ unique points of the form $(\FAR(t), \HR(t))$ that arise as the threshold $t$ decreases.\footnote{Some researchers talk of ROC as relative operating characteristic \citep{Swets1973}, and the hit rate is also referred to as probability of detection, recall, sensitivity, or true positive rate.  The false alarm rate is also known as probability of false detection, fall-out, or false positive rate.  It equals one minus the specificity, selectivity, or true negative rate.  See \url{https://en.wikipedia.org/wiki/Precision_and_recall\#Definition_(classification_context)}, accessed 27 No\-vember 2022.  Moreover, some researchers define the ROC curve as a display that connects the points $(1-\HR(t), 1-\FAR(t))$ for $t \in [0,1]$, resulting in a curve that is mirrored at the anti-diagonal of the unit square, which maintains the interpretation that ROC curves at upper left are desirable \citep{HernandezOrallo2013}.}  Informally, the threshold $t$ parameterizes the ROC curve, with the points (0,0) and (1,1) corresponding to $t \geq 1$ and $t < 0$, respectively.  

ROC curves assess the discrimination ability of forecasts and can be interpreted diagnostically \citep{Marzban2004}.  If the empirical conditional distributions for data \eqref{eq:data} given an event ($y = 1$) and a non-event ($y = 0$) coincide, then the forecast is unable to distinguish between events and non-events, and its ROC curve lies on the diagonal.  The larger the separation between these conditional distributions, the higher the discriminatory power of the forecast, and the further to the upper left the ROC curve.  For a perfectly discriminating probabilistic classifier, there is a threshold value $t$ such that $y_i = 0$ if $x_i \leq t$ and $y_i = 1$ if $x_i > t$, and hence it exhibits an ideal ROC curve along the left and upper edges of the unit square.  By construction, ROC curves are invariant under strictly increasing transformations of the forecast values.  

\begin{figure}[tbp]
\centering
\includegraphics[width=\linewidth]{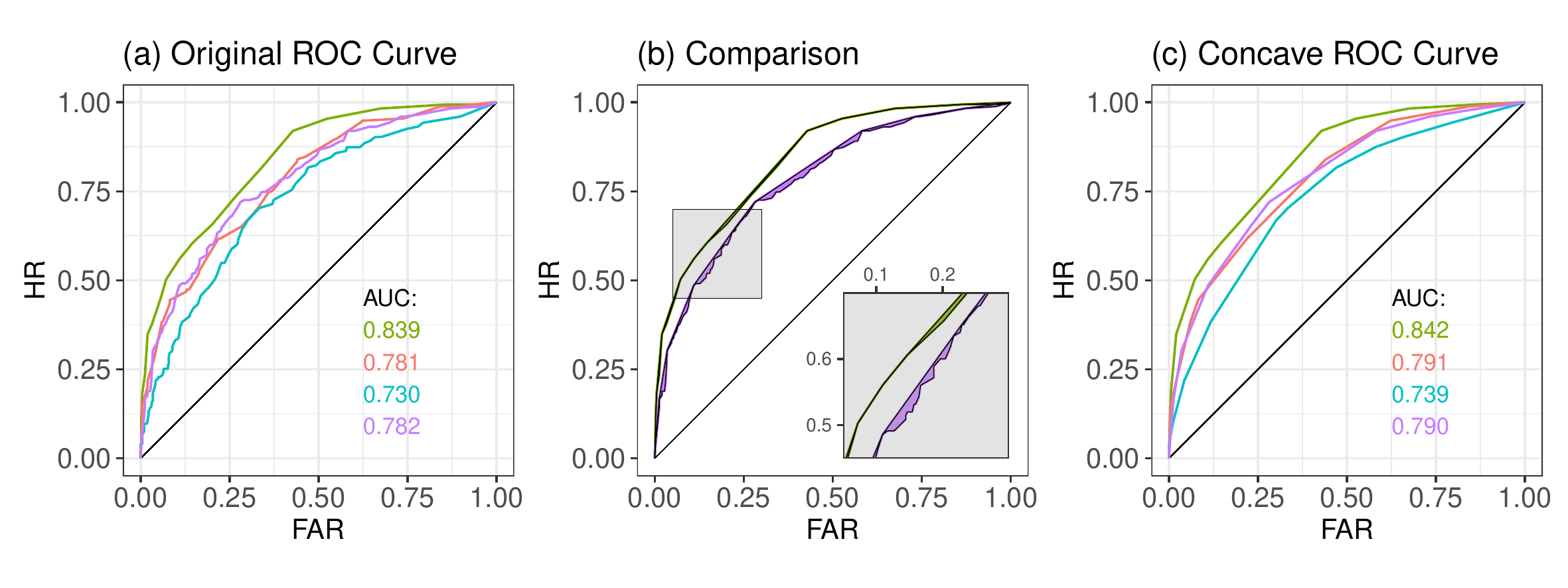}

\vspace{-2mm}

\includegraphics[width=0.56\linewidth]{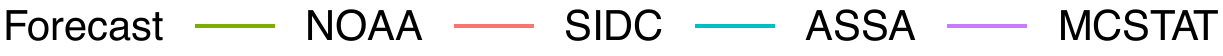}

\caption{ROC curves for the probability forecasts of class C1.0+ solar flares from Table \ref{tab:C1}.  Panel (a) shows the original ROC curves for the forecasts $x_1, \ldots, x_n$ from \eqref{eq:data}, panel (c) the concave ROC curves for the PAV re-calibrated forecasts $\hat{x}_1, \ldots, \hat{x}_n$ from \eqref{eq:xhat}.  In panel (b) both the original and the concave ROC curves are shown for the NOAA and SIDC forecasts.  The magnified details demonstrate that the original ROC curve morphs into its concave hull.}
\label{fig:C1_ROC}
\end{figure}

An often neglected but important consideration concerns the concavity of ROC curves.  In practice, the original ROC curves constructed from empirical data \eqref{eq:data} almost inevitably fail to be concave, as illustrated on the forecasts from Table \ref{tab:C1} in Figure \ref{fig:C1_ROC}.  This observation is explained by Theorems 3 and 4 of \citet{Gneiting2022a}, according to which a ROC curve is concave if, and only if, the conditional event probability is nondecreasing with the forecast value $x$, which for empirical data is hardly ever the case.  ROC curves assess discrimination ability --- that is, potential predictive ability --- and while potential predictive ability is ignorant of calibration, it can only be assessed under the assumption of larger forecast values implying higher event probabilities.  In this light, displays of nonconcave ROC curves have been harshly criticized, with researchers positing that they ``must be considered irrational'' and ``unethical when applied to medical decisions'' \citep{Pesce2010}. 

Fortunately, there is a straightforward remedy.  If one computes the ROC curve from the PAV transformed forecast values \eqref{eq:xhat} in lieu of the original forecasts from \eqref{eq:data}, the ROC curve morphs into its concave hull, that is, the smallest concave curve that lies to its upper left \citep{Fawcett2007}.  The corrected, concave version of the ROC curve serves exclusively to compare discrimination ability, as differences in calibration get eliminated through re-calibration.  In contrast, while original ROC curves focus on discrimination ability, conditional events frequencies that fail to be monotone generate confounding effects.  In general, the transformation from the original probabilities $x_1 \leq \cdots \leq x_n$ to the PAV transformed, re-calibrated probabilities $\hat{x}_1 \leq \cdots \leq \hat{x}_n$ is monotonic, but not strictly monotonic, so a change in the ROC curve does not contradict the aforementioned invariance under strictly increasing transformations.  

In summary, we strongly recommend the use of concave ROC curves, as computed from PAV transformed forecast values, in triptych graphics for empirical data.  In Figure \ref{fig:C1_ROC} the left-hand panel illustrates the original versions of the ROC curves for the solar flare forecasts, the right-hand panel displays the concave ROC curves, as in the triptych graphics in Figure \ref{fig:C1_triptych}, and the middle panel illustrates the transition from the original curve to the concave hull.  The NOAA forecast clearly discriminates the most and the ASSA forecast the least.  The MCSTAT and SIDC forecasts exhibit roughly equal discrimination ability, with ROC curves that are nested in between the curves for the NOAA and ASSA forecasts.

\subsection{The area under the curve (\AUC) measure and the Brier score discrimination (\DSC) component}  \label{sec:AUC}

Myriads of scientific papers have employed the Area Under the ROC Curve \citep[AUC:][]{Hanley1982, DeLong1988, Bradley1997, Marzban2004} measure to compare the predictive performance of probabilistic classifiers.  AUC admits an appealing interpretation as the probability of a value drawn at random from the empirical distribution of forecast values for an event being higher than a value drawn from the distribution for a non-event.  An {\AUC} value of 1 signifies perfect discrimination ability; a value of $\frac{1}{2}$ indicates no discrimination, corresponding to the trivial ROC curve on the diagonal.  A value smaller than $\frac{1}{2}$ implies that interchanging the predictions for 0 and 1 would improve forecast accuracy.  As the ROC curve is invariant under strictly increasing transformations, so is {\AUC}, and we note that {\AUC} exclusively concerns discrimination ability, while ignoring (mis)calibration.  

However, {\AUC} is not suitable as an overall performance metric.  To summarize arguments of \citet{Hand2009} in an informal way, and specializing them to calibrated forecasts, {\AUC} assumes the form in \eqref{eq:Schervish} with a mixing measure $H$ that depends on the forecast values in complex ways.  As argued by \citet{Hand2009}, such a dependence ``is absurd'', with \citet{Hand2022} recently adding that
\begin{quote}
\footnotesize
``The dependence of the {\AUC} on the classifier itself means that it can lead to seriously misleading results, so it is concerning that it continues to be widely used.''
\end{quote}
Both {\AUC} and the {\DSC} measure from the CORP score decomposition in \eqref{eq:decomposition} are measures of discrimination ability.  Under calibration, the {\MCB} component vanishes, and as the {\UNC} component is independent of the forecast, $\SX$ equals the {\DSC} component, except for an additive constant.  Furthermore, if $\myS$ is the Brier score then $\DSC$ reduces to a classical component, subject to conditions \citep[Theorem 2]{Dimitriadis2021}. 

\begin{figure}[t]
\centering
\includegraphics[width=\linewidth]{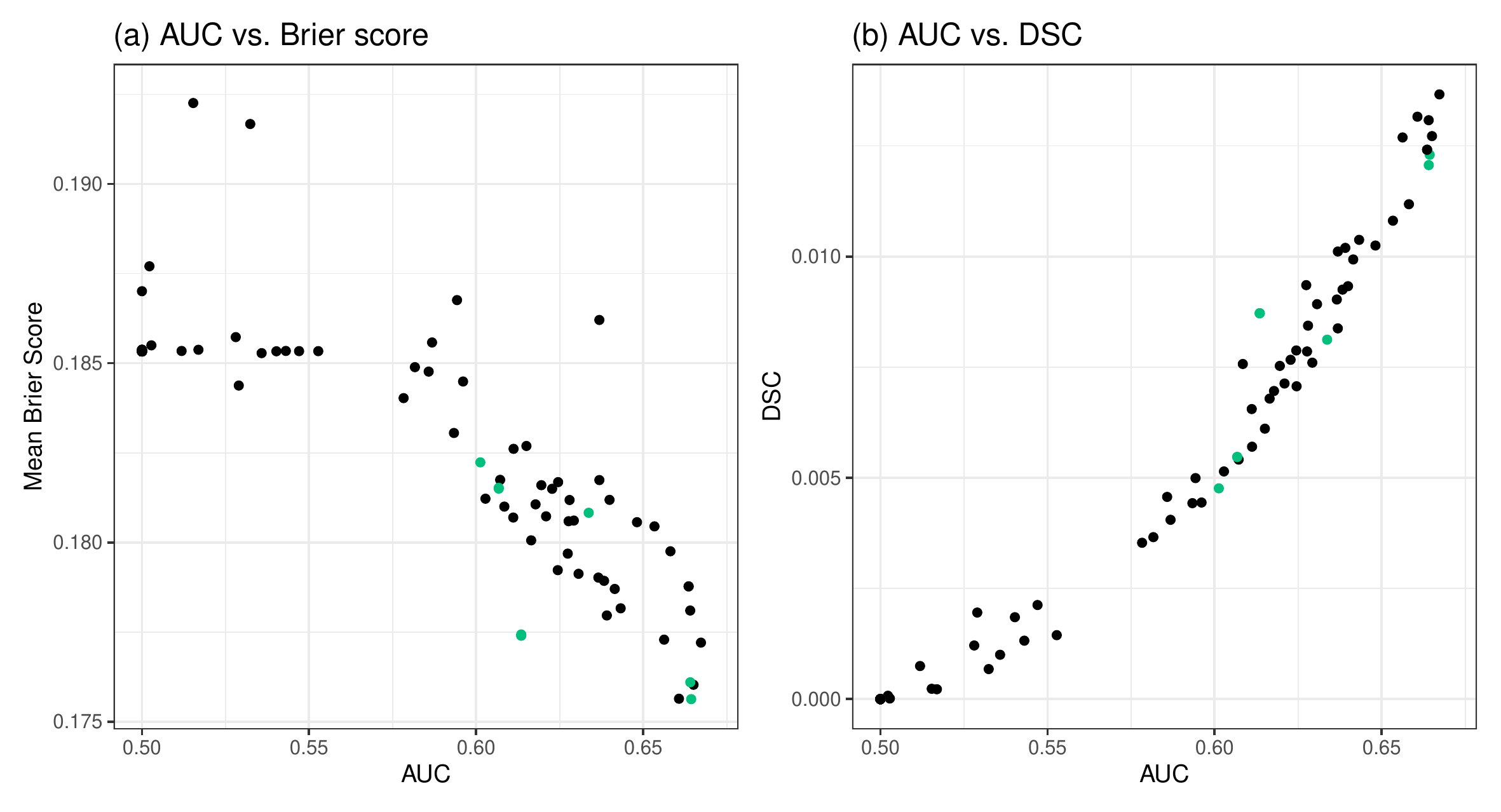}
\caption{Scatter plots of (a) {\AUC} vs.~mean Brier score, and (b) {\AUC} vs.~the Brier score {\DSC} component for the forecasts from panel (b) of Figure \ref{fig:MCB_DSC} with mean Brier score less than 0.195.  Benchmark forecasts are represented in green.  While {\AUC} and the Brier score {\DSC} component are positively oriented, the mean Brier score is negatively oriented.}
\label{fig:AUC_DSC}
\end{figure}

These relationships suggest the use of the Brier score {\DSC} component as an attractive alternative to {\AUC} if a measure of discrimination ability is sought.  For illustration, Figure \ref{fig:AUC_DSC} shows scatter plots of performance measures for the solar flare forecasts from Figure \ref{fig:MCB_DSC}.  In contrast to the association between {\AUC} and the mean Brier score itself, which is weak, the association between {\AUC} and the Brier score {\DSC} component is strong.  While both {\AUC} and {\DSC} are invariant under strictly increasing classifier transformations, $\DSC = \SR - \SC$ admits the appealing interpretation as (the negative of) the mean Brier score for the PAV (re)calibrated forecast, up to the constant $\SR$.  Evidently, this interpretation is retained when $\myS$ is the logarithmic score or any other proper scoring rule. 

\section{Putting it together: The triptych graphics and {\MCB}--{\DSC} plots}  \label{sec:together}

Considering the class C1.0+ solar flares example from the previous sections and Figures \ref{fig:C1_triptych}--\ref{fig:C1_ROC}, the NOAA forecast is superior in all facets.  However, rankings of the other forecasts from Table \ref{tab:C1} depend on the criteria used: The ASSA forecast is well calibrated, but exhibits poor discrimination ability.  The MCSTAT and SIDC forecasts show discrimination ability in between the NOAA and ASSA forecast, but lack calibration.  Murphy curves provide an overall assessment of predictive performance, covering both calibration and discrimination ability, and favor the NOAA forecast, followed by the SIDC forecast, whereas rankings of the MCSTAT and ASSA forecasts depend on the scoring rule used.  

These considerations illustrate that the evaluation of predictive performance is multi-faceted and highly complex, even in the basic setting of probabilistic classifiers for binary outcomes.  To address this challenge, we propose the use of a triptych of diagnostic graphics, consisting of Murphy curves, CORP reliability diagrams, and ROC curves (Figure \ref{fig:C1_triptych}).  The triptych displays present a wealth of information in disentangled form: The CORP reliability diagram diagnoses calibration (only), the ROC curve in its concave form assesses discrimination ability (only), and Murphy curves consider economic utility and overall predictive performance, encompassing both calibration and discrimination.   

\subsection{Theoretical guarantees}  \label{sec:theory}

In the triptych graphics, information about overall predictive ability in the Murphy curves is disentangled into facets of calibration, as displayed in CORP reliability diagrams, and facets of discrimination ability, as visualized by ROC curves.  We now summarize existing and new theoretical results that support this intuition and provide new insights about links between the displays, with technical details being available in Appendix \ref{app:proofs}.  The findings are illustrated in idealized settings, where we consider the joint distribution of a pair $(X,Y)$ of random variables, with $X$ representing the probability forecast and $Y$ the binary outcome.  The triptych graphics in these ideal settings derive from the population quantities, and can be interpreted as the triptych graphics that arises when a record of the form in \eqref{eq:data} is generated by ever larger samples from the joint distribution of $(X,Y)$.  As the populations involved show nondecreasing CEPs, original and re-calibrated probabilities coincide, and yield the same, concave ROC curve. 

We begin with a discussion of the role of calibration.  As noted, any forecast can be re-calibrated ex post, by applying the PAV algorithm that converts the original forecast probabilities $x_1 \leq \cdots \leq x_n$ from \eqref{eq:data} into the calibrated probabilities $\hat{x}_1 \leq \cdots \leq \hat{x}_n$ from \eqref{eq:xhat}.  The following stylized fact summarizes findings from \citet[Theorem 6.3]{Schervish1989} and \citet[Corollary 2]{Holzmann2014}.  

\begin{fact}  \label{fact:re-calibrated}
If a probability forecast fails to be calibrated, its re-calibrated version is superior in terms of Murphy curves.
\end{fact}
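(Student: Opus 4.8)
The plan is to prove the stronger statement that the re-calibrated forecast $\hat{X}$ dominates $X$ under \emph{every} proper scoring rule, since by the mixture representation \eqref{eq:Schervish} this is precisely domination of the Murphy curves. Write $\hat{X} = g(X)$, where $g$ is the nondecreasing map produced by the PAV algorithm. The two structural facts I would exploit are that $\hat{X}$ is calibrated, i.e.\ $\myE[Y \mid \hat{X}] = \hat{X}$, and that $g$ is the $L^2(\myQ)$ isotonic regression of $Y$ on $X$. First I would fix an arbitrary proper scoring rule $\myS$ and invoke the Savage representation \eqref{eq:Savage}, namely $\myS(x,y) = \phi(y) - \phi(x) - \phi'(x)(y-x)$ with $\phi$ convex and $\phi'$ a (nondecreasing) subgradient selection, and expand the score gap $\myE[\myS(X,Y)] - \myE[\myS(\hat{X},Y)]$.

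The second step is to simplify this gap using calibration. Conditioning on $\hat{X}$ and using $\myE[Y \mid \hat{X}] = \hat{X}$ makes the term $\myE[\phi'(\hat{X})(Y - \hat{X})]$ vanish, while writing $\myE[Y \mid X] = \CEP(X)$ collapses the remainder to $\myE[\phi(\hat{X}) - \phi(X) - \phi'(X)(\CEP(X) - X)]$. Adding and subtracting $\phi'(X)(\hat{X} - X)$ splits this into $\myE[\phi(\hat{X}) - \phi(X) - \phi'(X)(\hat{X} - X)]$ plus $\myE[\phi'(X)(\hat{X} - \CEP(X))]$. The first summand is a Bregman divergence and is nonnegative by the subgradient inequality for $\phi$; for the strictly proper Brier score, where $\phi(t)=t^2$, it equals $\myE[(\hat{X}-X)^2]$.

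The crux, and the step I expect to be the main obstacle, is showing that the second summand $\myE[\phi'(X)(\hat{X} - \CEP(X))] \geq 0$. Here I would use the variational characterization of isotonic regression: $g$ is the projection of $\CEP$ onto the closed convex cone of nondecreasing functions in $L^2(\myQ)$, so the residual satisfies $\myE[(\CEP(X) - g(X))\,k(X)] \leq 0$ for every nondecreasing $k$. Since $\phi$ is convex, its subgradient $\phi'$ is nondecreasing, so taking $k = \phi'$ yields exactly the required inequality. Combining the two nonnegative summands gives $\myE[\myS(X,Y)] \geq \myE[\myS(\hat{X},Y)]$ for every proper $\myS$, hence for every elementary $\myS_\theta$, which is the claimed domination of Murphy curves. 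Finally, to upgrade domination to strict superiority I would note that miscalibration forces $\myQ(\hat{X} \neq X) > 0$, because $\hat{X}$ is calibrated while $X$ is not; the Bregman term for the Brier score is then strictly positive, and since the area under a Murphy curve equals the mean Brier score, the Murphy curve of $\hat{X}$ must lie strictly below that of $X$ on a set of thresholds $\theta$ of positive measure.
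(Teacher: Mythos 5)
Your proposal is correct, and it takes a genuinely different route from the paper: the paper does not prove Fact \ref{fact:re-calibrated} at all, but presents it as a stylized fact summarizing Theorem 6.3 of \citet{Schervish1989} and Corollary 2 of \citet{Holzmann2014}, which concern the improvement obtained by transforming a forecast into a conditional event probability. Your argument is a self-contained replacement, and its key step --- the variational inequality $\myE \left[ \left( \CEP(X) - g(X) \right) k(X) \right] \leq 0$ for every nondecreasing $k$, which holds because $g$ is the $L^2(\myQ)$ projection of $\CEP$ onto the closed convex cone of nondecreasing functions, applied with $k = \phi'$ --- is exactly what lets the argument cover PAV recalibration $\hat{X} = g(X)$ directly, rather than only recalibration by the exact conditional mean $\CEP(X)$, which is the case handled by the cited results; the Bregman term $\myE [ \phi(\hat{X}) - \phi(X) - \phi'(X)(\hat{X} - X) ] \geq 0$ and the cancellation $\myE [ \phi'(\hat{X})(Y - \hat{X}) ] = 0$ under autocalibration of $\hat{X}$ are both sound, as is the strictness argument via the Brier gap $\myE [ (\hat{X} - X)^2 ] > 0$ and the area-under-the-Murphy-curve identity. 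Two minor points to tighten: (i) for scores with unbounded subgradient, such as the logarithmic score, integrability of $\phi'(X)$ is needed for your expansions, so it is cleaner to run the argument only for the elementary scores $\myS_\theta$, whose subgradients are bounded step functions --- this is all that Murphy-curve dominance requires, and dominance for every proper score then follows from the mixture representation \eqref{eq:Schervish} anyway (with the subgradient at the kink chosen as the midpoint value, matching the symmetric version \eqref{eq:elementary}); and (ii) the two structural facts you import --- that the PAV output satisfies $\myE [ Y \mid \hat{X} ] = \hat{X}$, and the cone-projection characterization of isotonic regression --- are standard but deserve explicit citation \citep{Ayer1955, Dimitriadis2021, Jordan2022}. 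What your route buys is a single transparent decomposition of the score gap into two nonnegative pieces with clear interpretations (forecast movement and projection residual); what the paper's route buys is brevity and generality, since the cited results dispose of the claim for arbitrary recalibration maps without any convex-analysis bookkeeping.
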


To illustrate Fact \ref{fact:re-calibrated} we consider the idealized Scenario A, where the binary outcome $Y$ has event probability $X_0$, which is uniformly distributed on the unit interval.  We compare to the probability forecast $X_1 = \frac{3}{8} + \frac{1}{4} X_0$, which is a strictly increasing transformation of $X_0$.  Part (a) of Figure \ref{fig:ABC} shows idealized triptych plots, where the reliability diagrams are augmented by density plots for the unconditional distribution of the forecast values.  While $X_0$ and $X_1$ have the same discrimination ability and identical ROC curves, $X_1$ fails to be calibrated, whereas $X_0$ is calibrated.  In fact, $X_0$ is the re-calibrated version of $X_1$, and thus is superior in terms of Murphy curves.   

\begin{figure}[p]
\centering
\begin{subfigure}{\linewidth}
\caption{Scenario A}  \label{fig:A}
\includegraphics[width=\linewidth]{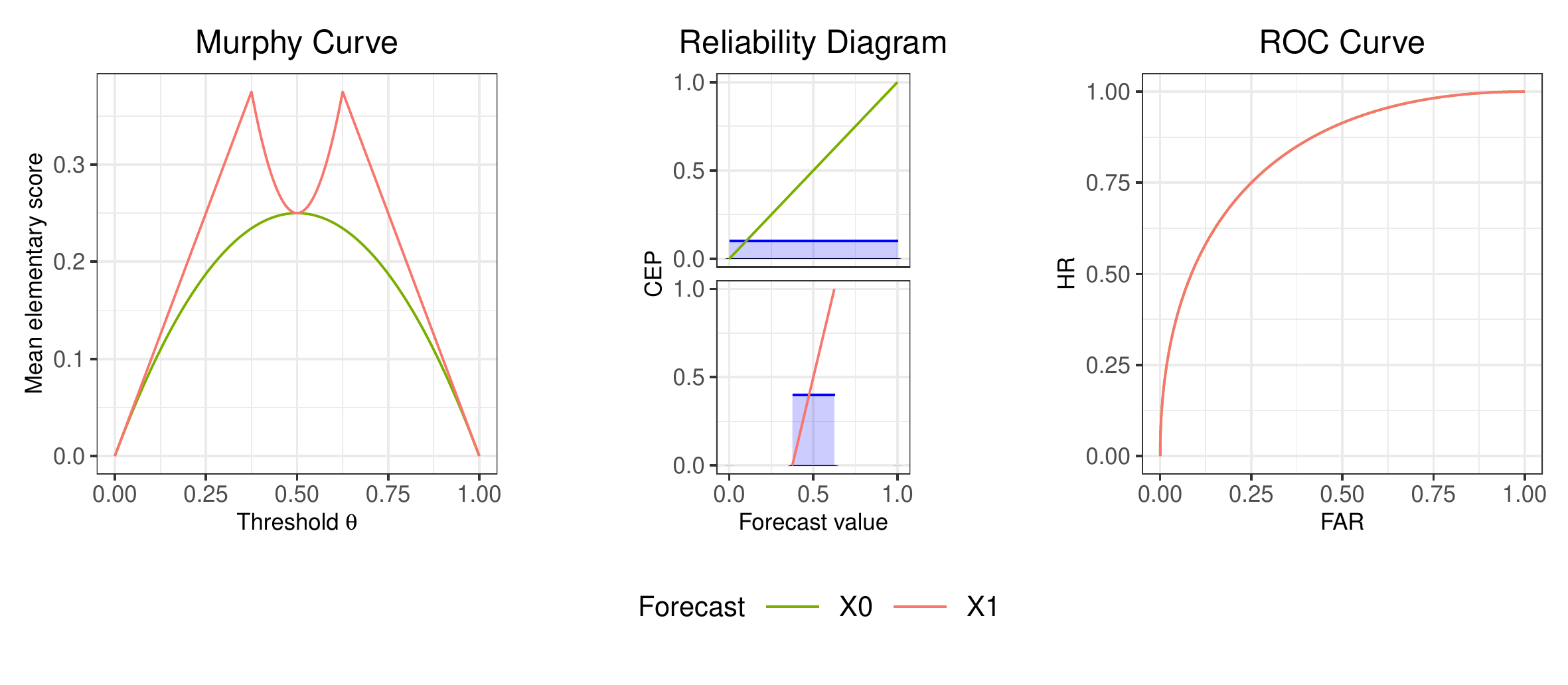}
\end{subfigure}
\\
\begin{subfigure}{\linewidth}
\caption{Scenario B}  \label{fig:B}		
\includegraphics[width=\linewidth]{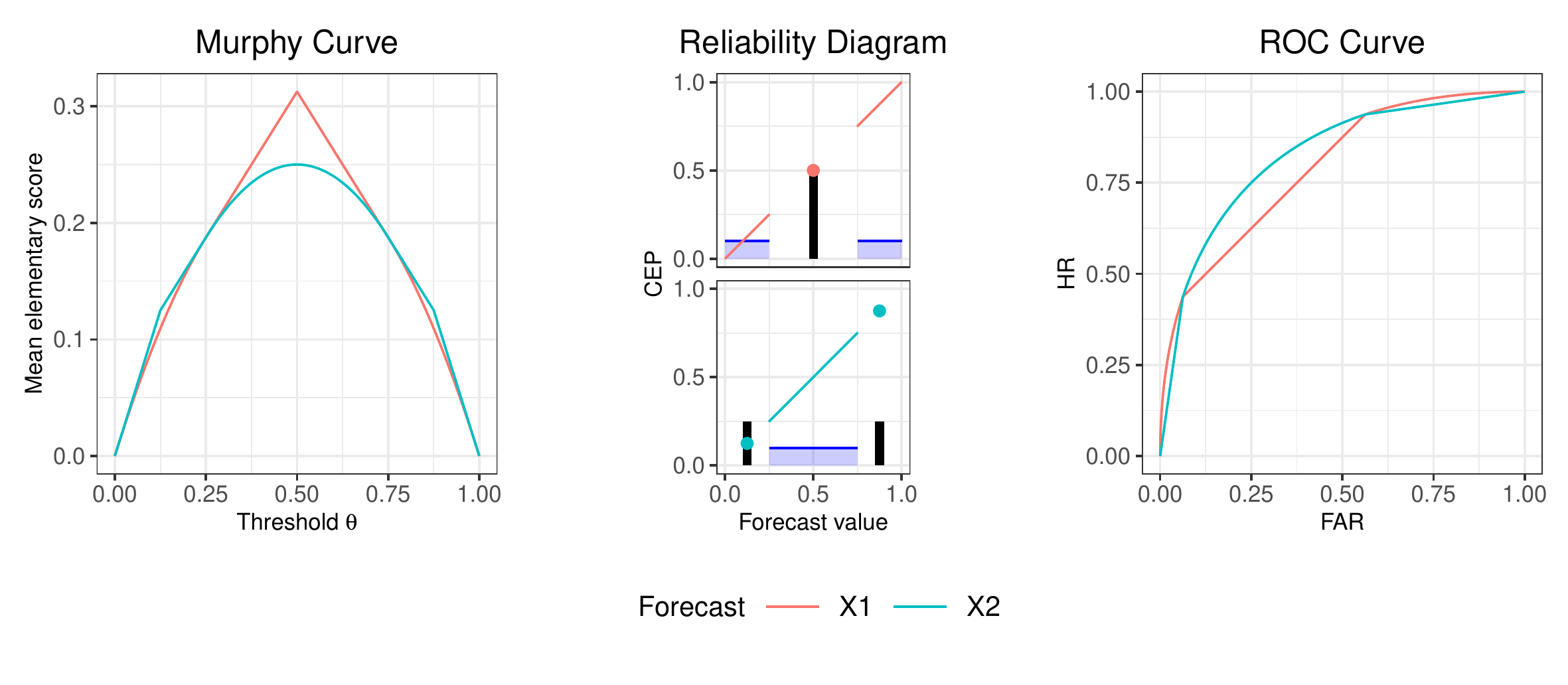}
\end{subfigure}
\\
\begin{subfigure}{\linewidth}
\caption{Scenario C}  \label{fig:C}
\includegraphics[width=\linewidth]{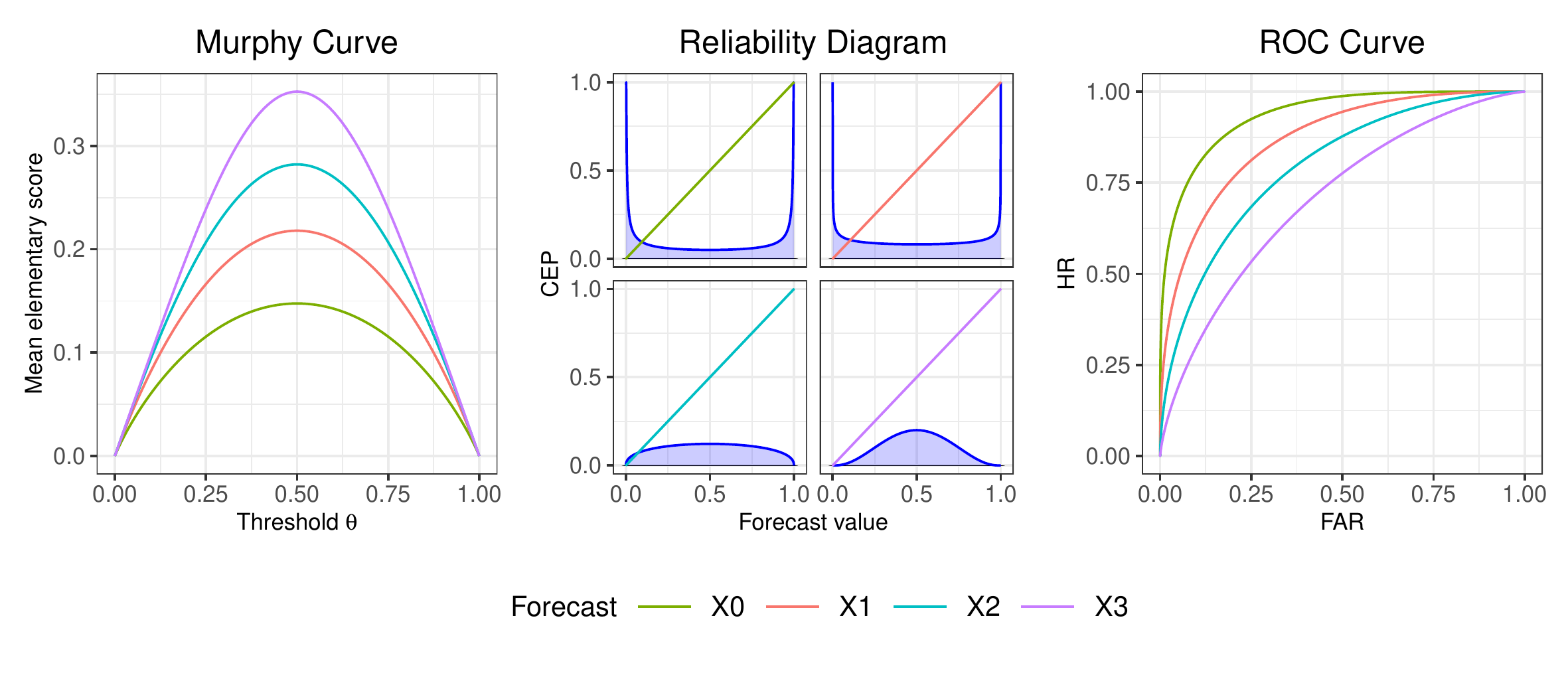}
\end{subfigure}
\vspace{-3mm}
\caption{Triptych displays in idealized Scenarios A, B, and C}  \label{fig:ABC}
\end{figure}

The next fact summarizes a crucial, novel finding.  We provide a rigorous version as Theorem \ref{thm:crossingpoints} in Appendix \ref{app:proofs}, which also contains its proof.

\begin{fact}  \label{fact:crossingpoints}
For two competing probability forecasts that are both calibrated, the number of crossing points of the ROC curves equals the number of crossing points of the Murphy curves.
\end{fact}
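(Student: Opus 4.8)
The plan is to reduce both crossing counts to the number of sign changes of a single scalar function of the threshold, and then to link the two reductions through the common tangent-slope structure that calibration imposes. Throughout I work in the idealized setting of two calibrated forecasts $X_1, X_2$ for the same outcome $Y$, so that $\myE[Y \mid X_j] = X_j$ and both share the unconditional event rate $p = \myE[Y] = \myE[X_j]$. I abbreviate $K_j(\theta) = \myE[(X_j - \theta)^+]$ for the population expected exceedance of $X_j$ over the level $\theta$.

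First I would evaluate the population Murphy curve of a calibrated forecast. Using $\myE[Y \mid X_j] = X_j$ to replace the outcome by the forecast inside the elementary score $\myS_\theta$ from \eqref{eq:elementary}, a short computation yields the closed form $\bar{\myS}^{(j)}_\theta = 2\bigl[(1-\theta)p - K_j(\theta)\bigr]$, which vanishes at $\theta = 0$ and $\theta = 1$. Consequently the difference of the two Murphy curves equals $2\bigl(K_2(\theta) - K_1(\theta)\bigr)$, so the crossing points of the Murphy curves are exactly the interior sign changes of the map $\theta \mapsto K_1(\theta) - K_2(\theta)$.

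Next I would bring the ROC curves into the same parameterization. Again invoking calibration, the hit and false-alarm rates at threshold $\theta$ are $\HR_j(\theta) = \myE[\mathds{1}(X_j > \theta)X_j]/p$ and $\FAR_j(\theta) = \myE[\mathds{1}(X_j > \theta)(1-X_j)]/(1-p)$, and differentiating reveals that the tangent slope of the (concave) ROC curve at the point indexed by $\theta$ equals $s(\theta) = \theta(1-p)/\bigl(p(1-\theta)\bigr)$, \emph{independent of the forecast}. Thus $\theta \mapsto s(\theta)$ is a common, strictly increasing reparameterization of both ROC curves by tangent slope. For each slope I would track the support value $M_j(s) = \HR_j(\theta) - s\,\FAR_j(\theta)$, the intercept of the supporting line of slope $s$; substituting $s(\theta)$ collapses this to $M_j(s) = K_j(\theta)/\bigl(p(1-\theta)\bigr)$. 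Since $p(1-\theta) > 0$ and $s(\theta)$ is monotone, the sign changes of $M_1 - M_2$ coincide one-to-one with those of $K_1 - K_2$, hence with the Murphy crossings.

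It remains to identify the number of sign changes of $M_1 - M_2$ with the number of geometric crossings of the two ROC curves, and this convex-duality step is where I expect the real work to lie. Here $M_j$ is precisely the concave conjugate (support function) of the ROC curve $R_j$, so the claim is that two concave curves sharing the endpoints $(0,0)$ and $(1,1)$ cross as often as the difference of their support functions changes sign. The plan is a local, orientation-preserving bijection: at a transversal crossing $P$ where $R_1$ passes from above to below $R_2$, the overtaking curve is the steeper one, so $R_1'(P) < R_2'(P)$; comparing the two tangent intercepts at these slopes and using strict concavity shows that $M_1 - M_2$ flips from negative to positive as $s$ sweeps the interval between $R_1'(P)$ and $R_2'(P)$, with reversed orientation at crossings of the opposite type. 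The delicate points, which I would handle by approximating with smooth, strictly concave curves and exploiting the monotonicity of $R_j'$, are to show that the slope intervals attached to distinct crossings are disjoint and that they exhaust all sign changes of $M_1 - M_2$, so that no crossing is double-counted or missed; the shared endpoints supply the matching boundary behavior $M_j \to 1$ as $s \to 0$ and $M_j \to 0$ as $s \to \infty$. Chaining the three identifications then shows that the ROC curves and the Murphy curves have the same number of crossing points.
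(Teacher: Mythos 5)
Your computations in the first three paragraphs are correct, and your route is genuinely different from the paper's. Writing $K_j(\theta) = \myE[(X_j-\theta)^+] = \int_\theta^1 (1-F_j(x)) \, \mathrm{d}x$, your identity for the Murphy curve of a calibrated forecast is equivalent to the paper's Lemma \ref{lem:crossingpoints}: the paper's Murphy difference is $\int_{[0,\theta]} (F_2 - F_1)$, which equals exactly $-(K_1 - K_2)(\theta)$ because calibration forces $\myE X_1 = \myE X_2$. Your observation that calibration makes the tangent slope $s(\theta) = \theta(1-p)/(p(1-\theta))$ forecast-independent, and that the support value collapses to $M_j(s(\theta)) = K_j(\theta)/(p(1-\theta))$, is an elegant Legendre-dual reformulation; note that it survives atoms and kinks if you replace differentiation by the supporting-line (supergradient) property of the concave ROC curve of a calibrated forecast, so no density assumptions are needed there. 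The paper instead parameterizes both ROC curves by a common index $c = F(t)$ and measures their difference along parallel lines of the \emph{fixed} slope $-\pi_0/\pi_1$, obtaining $\mathrm{D}_{X_1,X_2}^\mathrm{ROC}(c) = \int_{[0,c]} (Q_1 - Q_2)$; your sweep of supporting lines with \emph{varying} slope is the dual picture, and both are close to the Kendall-curve idea of Hern\'andez-Orallo et al.\ that the paper cites.

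The genuine gap is your fourth step, as you yourself flag: equating sign changes of $M_1 - M_2$ with geometric crossings of the ROC curves. Your local-bijection sketch needs transversality and strict concavity, neither of which holds in the settings the paper insists on covering --- empirical distributions yield piecewise linear ROC curves with nonunique tangent slopes at vertices and possibly shared linear segments --- and the proposed smoothing argument must establish that both the crossing count and the sign-change count are stable under approximation, which fails in degenerate configurations without extra hypotheses, so this step would require substantial work to make rigorous. The paper sidesteps the issue entirely: Theorem \ref{thm:crossingpoints}(b) matches the sign-change partitions of the integrands $F_2 - F_1$ and $Q_1 - Q_2$, which have equally many elements by the CDF--quantile mirror symmetry, pairwise into blocks, notes that $\mathrm{D}_{X_1,X_2}^\mathrm{MC}$ and $\mathrm{D}_{X_1,X_2}^\mathrm{ROC}$ agree at block boundaries and are co-monotone within each block, and concludes blockwise that either both differences change sign or neither does, with no regularity conditions whatsoever. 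If you want to salvage your route, the cleanest repair is to adopt the paper's definition of ROC crossings as sign changes of the parallel-line difference $\mathrm{D}_{X_1,X_2}^\mathrm{ROC}(c)$ and verify that your slope-parameterized comparison $M_1 - M_2$ exhibits the same sign pattern --- a change of variables between the index $c$ and the supporting slope $s(\theta)$ --- rather than attempting a general crossing theorem for concave conjugates.
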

  
For illustration we tend to Scenario B, where the forecasts $X_1$ and $X_2$ are both calibrated.  As before, let $X_0$ be uniformly distributed, and let the outcome $Y$ have true event probability $X_0$.  We consider the probability forecasts  
\begin{align}  \label{eq:B} 
X_1 = \begin{cases}
X_0         & \text{if } X_0 < \frac{1}{4},  \\
\frac{1}{2} & \text{if } \frac{1}{4} \leq X_0 \leq \frac{3}{4}, \\
X_0         & \text{if } X_0 > \frac{3}{4}, 
\end{cases} 
\quad \text{and} \quad 
X_2 = \begin{cases}
\frac{1}{8} & \text{if } X_0 < \frac{1}{4}, \\
X_0         & \text{if } \frac{1}{4} \leq X_0 \leq \frac{3}{4}, \\
\frac{7}{8} & \text{if }  X_0 > \frac{3}{4},   
\end{cases} 
\end{align}
respectively.  The triptych plots in part (b) of Figure \ref{fig:ABC} illustrate that the ROC and Murphy curves share the same number, namely two, of crossing points. 

In particular, Fact \ref{fact:crossingpoints} implies that if two competing probability forecasts are both calibrated, then there is a superiority relation in terms of ROC curves if, and only if, there is a superiority relation in terms of Murphy curves.  The following fact sharpens this statement and relates to considerations of sharpness \citep{Gneiting2007}.  Informally, a probability forecast is sharper than another if its forecast values are closer to the most confident values of 0 and 1, respectively.  In Appendix \ref{app:proofs} we state and prove a rigorous version of the subsequent Fact \ref{fact:sharper} in Theorem \ref{thm:sharper}.  

\begin{fact}  \label{fact:sharper}
If two competing probability forecasts are both calibrated, and one of them is sharper than the other, then the sharper one is superior in terms of both ROC curves and Murphy curves, and vice versa. 
\end{fact}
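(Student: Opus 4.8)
The plan is to reduce all three notions appearing in the statement --- sharpness, superiority in terms of Murphy curves, and superiority in terms of ROC curves --- to a single family of functionals, the stop-loss transforms $t \mapsto \myE[(X-t)^+]$, and then to invoke their classical equivalence with the convex order. First I would fix the rigorous notion of sharpness. Since $X_1$ and $X_2$ are calibrated, $\myE[X_i] = \myE[\myE[Y \mid X_i]] = \myE[Y] = \mu$, so the two forecast distributions share the common mean $\mu$. I would therefore declare $X_1$ to be sharper than $X_2$ precisely when $X_2 \leq_{\mathrm{cx}} X_1$ in the convex (mean-preserving spread) order, which formalizes the idea that the values of $X_1$ sit closer to $0$ and $1$. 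By the standard characterization of convex order for integrable random variables with equal means, $X_2 \leq_{\mathrm{cx}} X_1$ holds if and only if the stop-loss transforms satisfy $\myE[(X_1-t)^+] \geq \myE[(X_2-t)^+]$ for every $t \in (0,1)$.

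Next I would express the Murphy curves of calibrated forecasts through these functionals. Using the Savage representation \eqref{eq:Savage} with the generating function $\phi_\theta(t) = 2\max(\theta t, (1-\theta)(1-t))$ of the elementary score $\myS_\theta$, and the fact that calibration, $\myE[Y \mid X] = X$, annihilates the subgradient term, the mean elementary score reduces to $\SXtheta = \myE[\phi_\theta(Y)] - \myE[\phi_\theta(X)]$, where the first summand depends on $\mu$ only and hence is common to both forecasts. A short rearrangement gives $\phi_\theta(t) = 2\,(t-(1-\theta))^+ + (\text{affine in } t)$, so that, because the affine parts cancel under the shared mean, the gap between the two Murphy curves at $\theta$ equals $\myE[(X_1-(1-\theta))^+] - \myE[(X_2-(1-\theta))^+]$ up to a positive factor and sign. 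Consequently $X_1$ has the uniformly lower, hence superior, Murphy curve if and only if the stop-loss transforms of $X_1$ dominate those of $X_2$ at every level, that is, if and only if $X_1$ is sharper.

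For the ROC curves I would exploit that calibration makes the conditional event probability the identity, which is nondecreasing, so by Theorems 3 and 4 of \citet{Gneiting2022a} both ROC curves are concave. Writing $F$ for the forecast distribution, calibration gives $\myE[\mathds{1}(X>t)Y] = \myE[\mathds{1}(X>t)X]$, and a direct computation shows that the ROC curve has slope $s(t) = \frac{1-\mu}{\mu}\cdot\frac{t}{1-t}$ at the point attained by threshold $t$, a quantity depending on $t$ and $\mu$ only and therefore shared by both forecasts. The intercept of the corresponding tangent line works out to $b(t) = \frac{1}{\mu(1-t)}\,\myE[(X-t)^+]$. Since a concave curve is the lower envelope of its tangent lines, uniform domination of intercepts at matched slopes transfers to domination of the envelopes: $X_1$ has the upper-left, hence superior, ROC curve if and only if $\myE[(X_1-t)^+] \geq \myE[(X_2-t)^+]$ for all $t$, that is, once more, if and only if $X_1$ is sharper. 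Chaining the two characterizations yields the asserted equivalence in both directions, and incidentally re-derives Fact \ref{fact:crossingpoints} at the level of a single stop-loss functional.

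The hard part will be the passage from tangent-intercept domination to curve domination in the ROC argument when $F$ has atoms, so that the concave ROC curve has corners and the threshold-to-slope map $s(\cdot)$ fails to be a bijection; I would handle this by phrasing the domination through the support-function (Legendre) representation of the concave hull rather than through pointwise slopes, and by checking that $b_1 \geq b_2$ at every slope still forces the envelopes to be nested. A secondary technical point is the bookkeeping of the affine terms and of the tie conventions in $\myS_\theta$ and in the ROC construction, which must be reconciled so that the two stop-loss parametrizations --- at level $1-\theta$ for the Murphy side and at level $t$ for the ROC side --- line up exactly; once this is in place, the classical equivalence between ordered stop-loss transforms and the convex order closes both directions simultaneously.
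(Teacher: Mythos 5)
Your proof is correct, but it takes a genuinely different route from the paper's. The paper disposes of Theorem \ref{thm:sharper} (the rigorous form of this fact) in two citations: the equivalence of sharpness (convex order) and Murphy dominance is invoked as a special case of Theorem 3.1 of \citet{Kruger2021}, while the equivalence of ROC and Murphy dominance follows from the block-matching argument in part (b) of Theorem \ref{thm:crossingpoints}, built on Lemma \ref{lem:crossingpoints}, which expresses the Murphy difference as the CDF integral $\int_{[0,\theta]} (F_2(x)-F_1(x))\,\mathrm{d}x$ and the ROC difference as the quantile integral $\int_{[0,c]} (Q_1(\alpha)-Q_2(\alpha))\,\mathrm{d}\alpha$. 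You instead funnel all three notions through the stop-loss transforms $t \mapsto \myE[(X-t)^+]$. Your Murphy computation is sound and is in fact the same object as the paper's lemma in disguise: since calibration forces $\myE X_1 = \myE X_2$, one has $\int_{[0,\theta]} (F_2(x)-F_1(x))\,\mathrm{d}x = \myE[(X_2-\theta)^+]-\myE[(X_1-\theta)^+]$. Your ROC computation also checks out: under calibration the supporting line at threshold $t$ has slope $\frac{1-\mu}{\mu}\cdot\frac{t}{1-t}$ and intercept $\frac{1}{\mu(1-t)}\,\myE[(X-t)^+]$, and the support-function repair you anticipate for atoms is exactly right --- because $t \mapsto \frac{1-\mu}{\mu}\cdot\frac{t}{1-t}$ is strictly increasing, the line of slope $s(t)$ through the point attained by threshold $t$ remains a supporting line at corners of the piecewise linear concave ROC curve, so the matched-slope intercepts equal the concave conjugate, and envelope nesting transfers in both directions. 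What your route buys is self-containedness (it effectively reproves the needed case of Kr\"uger--Ziegel rather than citing it) and an explicit stop-loss interpretation of ROC tangents; what the paper's route buys is the stronger crossing-count statement of Theorem \ref{thm:crossingpoints}(b), of which dominance is the zero-crossings boundary case. Two small repairs: the generator $\phi_\theta(t)=2\max(\theta t,(1-\theta)(1-t))$ that you copied from Section \ref{sec:mixture} has its kink at $1-\theta$ and in fact generates $\myS_{1-\theta}$; the generator of $\myS_\theta$ is $2\max((1-\theta)t,\theta(1-t))$, so the Murphy-side stop-loss level is $\theta$ itself, which lines up exactly with the ROC-side level $t$ and dissolves your bookkeeping worry (either way the equivalence is unaffected, since it quantifies over all levels). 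Finally, your parenthetical claim to re-derive Fact \ref{fact:crossingpoints} is too quick: your ROC comparison is at matched slope, the paper's crossing count is at matched index $c$, and transferring sign-change counts between these two parametrizations is precisely the content of the paper's block-matching argument, not a freebie of the stop-loss reduction.
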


For an illustration in terms of nested information sets, which imply Murphy dominance as proved by \citet[Corollary 4]{Holzmann2014} and \citet[Proposition 3.1]{Kruger2021}, we consider Scenario C.  Specifically, let the binary outcome $Y$ have true event probability $X_0 = \Phi( \, \sum_{i=1}^4 a_i \, )$, where $a_1, a_2, a_3$, and $a_4$, respectively, are independent standard normal variates, and $\Phi$ is the cumulative distribution function of the standard normal distribution.  We consider the probability forecasts 
\begin{align}  \label{eq:C}
X_j = \Phi \left( \frac{1}{(j+1)^{1/2}} \sum_{i=1}^{4-j} a_i \right) 
\end{align} 
for $j = 0, 1, 2$, and $3$.  So, there are four independent sources of information, represented by $a_1, a_2, a_3$, and $a_4$, and the forecast $X_j$ provides the correct specification of the event probability conditional on $4 - j$ sources being available.  Thus the information sets are nested, the forecasts are calibrated, and they exhibit an increase in sharpness as $j$ decreases.  The triptych graphs in part (c) of Figure \ref{fig:ABC}, which include density plots in the reliability diagrams, illustrate the increase in sharpness and the associated gain in terms of both ROC curves and Murphy curves.  Pairwise comparisons between the forecasts illustrate the relationships guaranteed by Fact \ref{fact:sharper}.

\subsection{Visualizing classifier performance for many competitors simultaneously: {\MCB}--{\DSC} plots}  \label{sec:MCB_DSC}

It is not uncommon that a multitude of competing forecasts are to be compared, with forecast contests being prime examples of such settings \citep{Leka2019a, Salganik2020b}.  The consideration of all competing forecasts in the triptych graphics then results in overcrowded displays.  However, the components of the CORP decomposition of a mean score $\SX$ from \eqref{eq:decomposition} can serve as numerical summaries.  For a succinct comparison that considers multiple facets of forecast performance we propose a simple tool, which we call an {\MCB}--{\DSC} plot, namely, a scatter plot of the miscalibration (\MCB) versus the discrimination (\DSC) component of the CORP decomposition, augmented with a set of parallel contour lines that according to \eqref{eq:decomposition} correspond to an equal mean score.  Importantly, the joint consideration of the {\MCB} and {\DSC} components enables a comparison in terms of the mean score $\SX$ as well, contrary to the joint use of the (traditional) Brier score reliability component and AUC in the extant literature, as exemplified in \citet[Figure 2]{Hewson2021}. 

{\MCB}--{\DSC} plots admit appealing interpretations that apply under any choice of the underlying proper scoring rule $\myS$, as summarized now.  
\begin{itemize}
\item
For any forecast method considered, the mean score $\SX$ and the associated {\MCB} and {\DSC} components from \eqref{eq:decomposition} can be read off immediately.  The {\UNC} component depends on the outcomes only, is shared by all methods considered, and equals the label attached to the diagonal. 
\item 
The origin of the coordinate system in an {\MCB}--{\DSC} plot, where $\MCB = \DSC = 0$, corresponds to the best constant forecast, namely, the unconditional event frequency in the test set.  As noted, the diagonal corresponds to its mean score, namely, $\SR = \UNC$.  Forecasts that appear above the diagonal perform better than this reference, forecast below the diagonal perform worse. 
\item
The mean score $\SC = \UNC - \DSC$ of the PAV-(re)calibrated forecast equals the (negative of the $\DSC$ component, up to a constant.  This illustrates that the forecast with the largest {\DSC} component has the greatest potential, provided (re)calibration is an option.  
\end{itemize} 

Figures \ref{fig:MCB_DSC} and \ref{fig:M1_MCB_DSC} show {\MCB}--{\DSC} plots for competing solar flare forecasts \citep{Leka2019a, Leka2019b}, and for a considerably larger number of forecasts for a binary outcome from the Fragile Families Challenge \citep{Salganik2020b}, at which we take a closer look in Section \ref{sec:FFC}.  We focus on the Brier score decomposition, which is of particular appeal, as all terms involved are guaranteed to be finite, the mean Brier score $\SX$ equals the area under the Murphy curve, and under modest conditions, the Brier score $\MCB$ component reduces to a classical measure of deviations from the diagonal in a reliability diagram \citep{Dimitriadis2021}.  Under the logarithmic score, the mean score $\SX$ equals a weighted area under the Murphy curve, and both $\SX$ and the $\MCB$ component may become infinite.  While {\MCB}--{\DSC} plots might mask details of the predictive performance, they are well suited to the task of selecting subsets of interesting forecasts that can be analyzed further by plotting triptych graphics, as exemplified in the subsequent suite of data examples. 

\section{Empirical examples}  \label{sec:empirical}

We illustrate the use of the triptych displays and  {\MCB}--{\DSC} plots for probabilistic classifiers from the academic literature in astrophysics, economics, and the social sciences.

\subsection{Solar flares}  \label{sec:solar}

\begin{figure}[p]
\centering
\includegraphics[width=\linewidth]{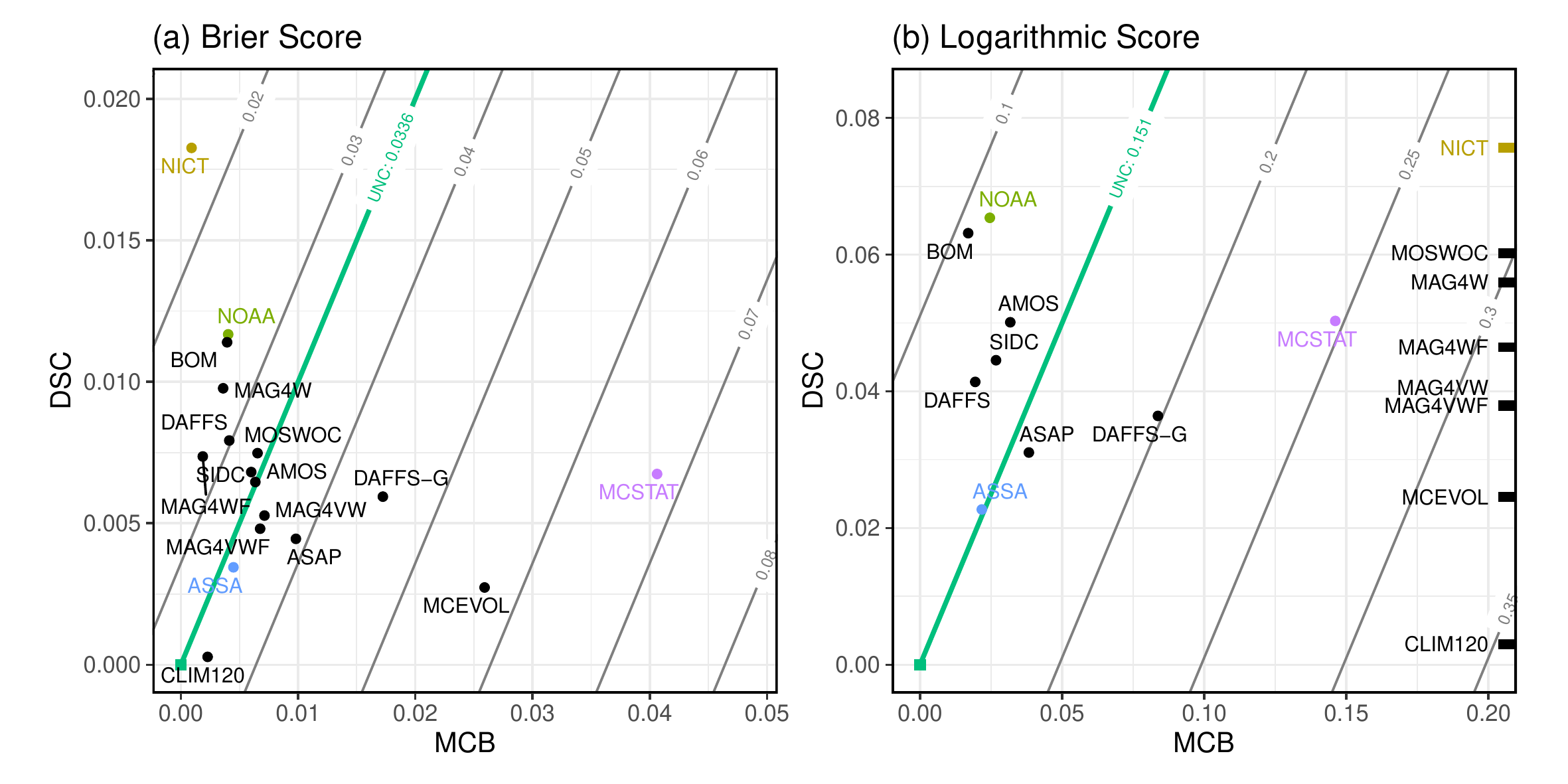}
\caption{{\MCB}--{\DSC} plots for probability forecasts of class M1.0+ solar flares under (a) the Brier score and (b) the logarithmic score.  Colors align with Figure \ref{fig:M1_triptych}.  The green square at the origin represents the ex post best constant forecast, that is, the unconditional event frequency, and the thick green line separates forecasts that are better (above the line) and that are worse (below the line) than this baseline.  Forecasts shown along the right margin in panel (b) have an infinite mean logarithmic score.}  \label{fig:M1_MCB_DSC}
\end{figure}

\begin{figure}[p]
\centering
\includegraphics[width=\linewidth]{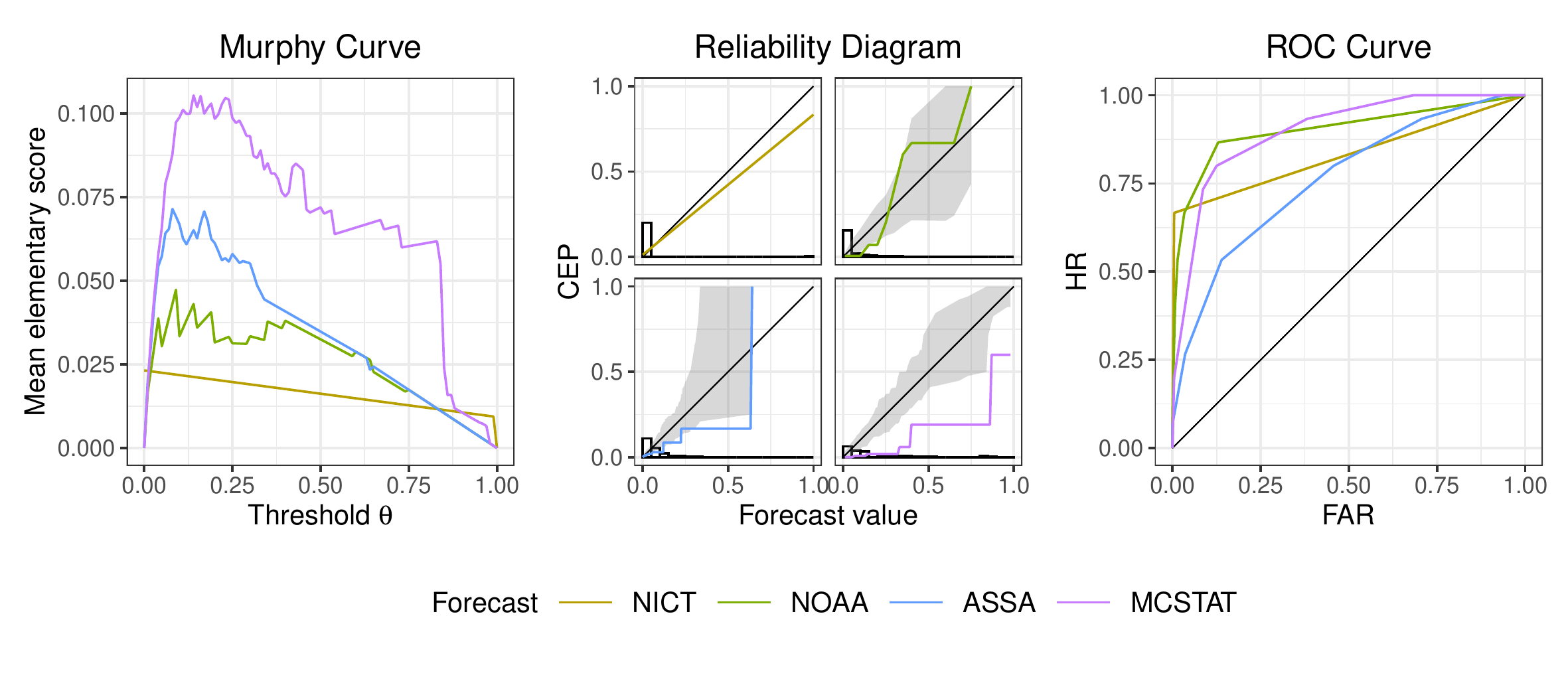}
\vspace{-10mm}
\caption{Triptych graphics for probability forecasts of class M1.0+ solar flares. Reliability curves are shown on (the smallest contiguous interval containing) the support of the forecast distribution.}  \label{fig:M1_triptych}
\end{figure}

Solar flares are energetic phenomena with potentially disastrous effects on modern terrestrial communications systems.  Owing to the increased availability of astrophysical data in real time, numerous forecasting systems for solar flares have been developed, and in a series of workshops a data repository for comparative evaluation has been created \citep{Barnes2016, Leka2019a, Leka2019b}.

We consider operational probability forecasts at a prediction horizon of a day ahead for solar flares of class C1.0+ and M1.0+, in exceedance of $10^{-6}$ and $10^{-5}$ Watts per square meter, respectively, as issued in calendar years 2016 and 2017.  While \citet{Leka2019a} and \citet{Leka2019b} describe 11 and 19 competing forecasts for C1.0+ and M1.0+ flares, respectively, there are substantial amounts of missing data in the records.  As fair comparisons require evaluation on a joint set of forecast situations, we restrict our analysis to test sets of 9 forecasts for C1.0+ flares on 577 days, as analyzed in Figures \ref{fig:C1_triptych} and \ref{fig:MCB_DSC}, and 17 forecasts for M1.0+ flares on 431 days.  On these test sets, records are complete and flares have unconditional event frequencies of 30.3 and 3.5 percent, respectively.

Turning to M1.0+ flares, Figure \ref{fig:M1_MCB_DSC} shows {\MCB}--{\DSC} plots under the Brier score and the logarithmic score.  Notably, under the logarithmic score most forecasts are outperformed by the best constant forecast.  For the triptych graphics in Figure \ref{fig:M1_triptych} we select the NOAA forecast, which performs well under either scoring rule, and the NICT forecast, which is by far the best performing method in terms of the Brier score.  Furthermore, we consider the MCSTAT forecast, which is poorly calibrated, and the ASSA forecast as a technique that lacks discrimination ability.  Due to the low unconditional event probability, most forecast values are small.  The NOAA forecast is of high quality in every regard.  The NICT forecast is a hard classifier, that is, it issues forecast probabilities of 0 and 1 only.  It performs best under most thresholds $\theta$ in the Murphy diagram, except at very high values.  Not surprisingly, it is penalized by an infinite mean logarithmic score.    The MCSTAT forecast exhibits good discrimination ability, but overpredicts, that is, the conditional event frequency is persistently lower than the forecast value, resulting in poor overall performance.  These issues can be addressed by (re)calibration, as opposed to the lack of discrimination ability of the ASSA forecast, which cannot be remedied.

\subsection{Survey of Professional Forecasters (SPF) probability forecasts of economic recessions} \label{sec:GDP}

We proceed to study probability forecasts for US GDP recessions, that is, quarters with a negative real GDP growth rate.  The data base of the Survey of Professional Forecasters \citep[SPF:][]{Croushore2019} includes probability forecasts for a GDP decline in the current and the following four quarters from the fourth quarter of 1968 through the third quarter of 2019.\footnote{SPF forecasts are available under \href{https://www.philadelphiafed.org/surveys-and-data/recess}{https://www.philadelphiafed.org/surveys-and-data/recess} and binary outcomes under \href{https://www.philadelphiafed.org/surveys-and-data/real-time-data-research/routput}{https://www.philadelphiafed.org/surveys-and-data/real-time-data-research/routput}.  Data for the third quarter of 1975 are missing.}   Following \citet{Lahiri2013}, we consider the mean over all individual SPF forecasts, which we denote SPF Consensus, and SPF forecaster \#65, who reports the second most frequently among the survey participants.  \citet{Lahiri2013} study SPF probability forecasts through the first quarter of 2011 by evaluating calibration, assessing potential predictive ability through ROC curves, and reporting mean Brier and mean logarithmic scores.  While their analysis is in the spirit of the triptych approach, it differs by necessity, as the methods proposed here depend on recent methodological advances \citep{Ehm2016, Dimitriadis2021} not yet available then.

\begin{figure}[t]
\centering
 \includegraphics[width=\linewidth]{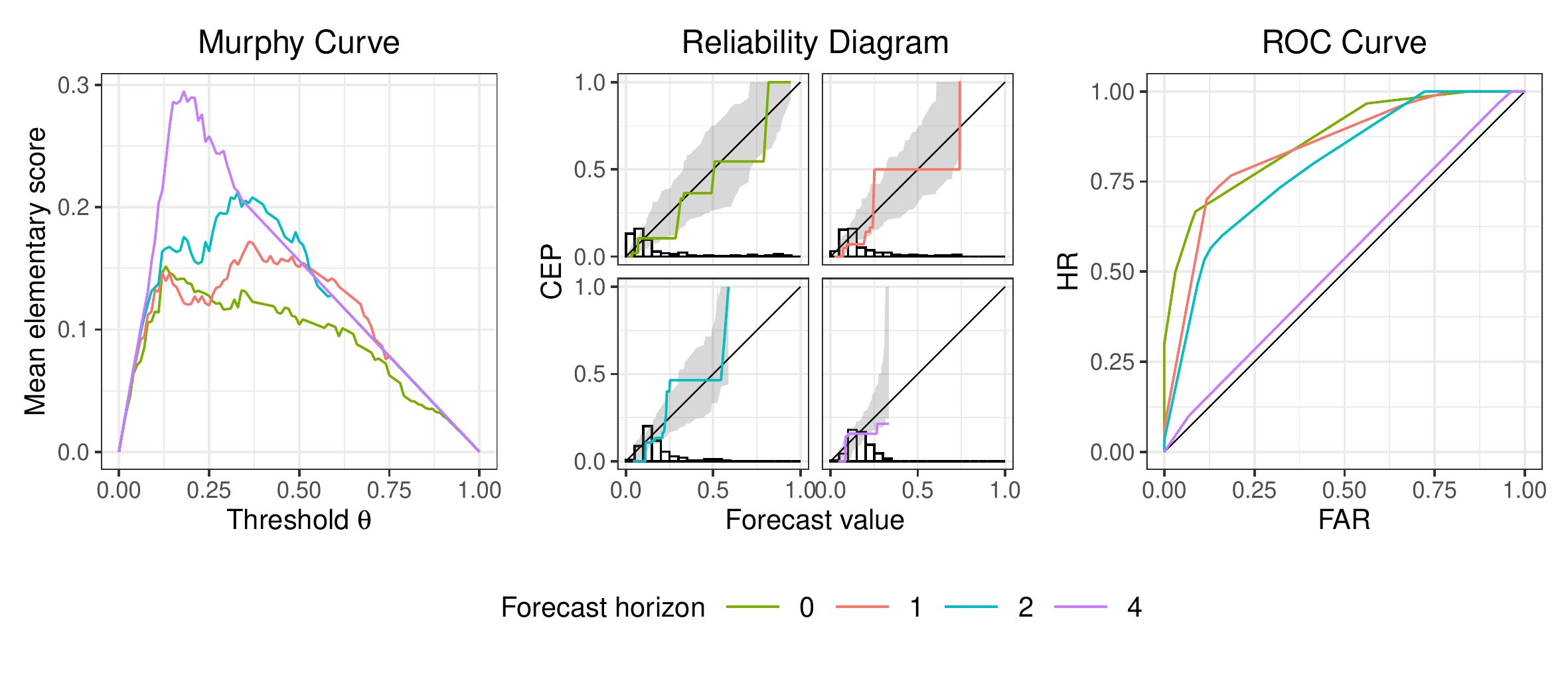}
\vspace{-10mm}
\caption{Triptych graphics for SPF Consensus forecasts of US recessions at different prediction horizons}  \label{fig:SPF_horizons}		
\end{figure}

The triptych graphics in Figure \ref{fig:SPF_horizons} serve to compare SPF Consensus forecasts at prediction horizons ranging from current quarter nowcasts to four quarters ahead.  The test set comprises 195 quarters between the second quarter of 1971 and the first quarter of 2019, with unconditional event frequency 0.16.  We note the unsurprising yet drastic effects of the forecast horizon on the quality of the SPF Consensus forecast.  While forecasts are reasonably well calibrated at all prediction horizons, the discrimination ability as shown by the ROC curves and, hence, overall predictive ability as visualized by the Murphy curves, deteriorate dramatically with the prediction horizon.  Forecasts four quarters ahead have virtually no discrimination ability.  Table \ref{tab:SPF_decom} concerns a test set of 61 quarters between 1972 and 2006, for which predictions by SPF forecaster \#65 are available, with unconditional event frequency 0.23.  The SPF Consensus forecast outperforms SPF forecaster \#65 at all prediction horizons considered.  The predictive performance of the individual forecaster falls markedly below that of the unconditional reference forecast at a prediction horizon of two quarters already.  The SPF Consensus forecast maintains superior discrimination ability and overall performance at a prediction horizon of two quarters, and performs en par with the unconditional reference forecast at a prediction horizon of four quarters ahead.

\begin{table}[t]
\centering
\footnotesize
\caption{CORP decomposition of mean Brier score for probability forecasts of US recessions from SPF Consensus and SPF \#65. The {\UNC} component equals the mean Brier score for the best constant forecast, namely, the unconditional event frequency in the test set, at 0.177.}  \label{tab:SPF_decom}
\begin{tabular}{l r ccc r ccc r ccc}
\toprule
Forecast && \multicolumn{3}{c}{$h = 1$} && \multicolumn{3}{c}{$h = 2$} && \multicolumn{3}{c}{$h = 4$} \\
\cmidrule{3-5} \cmidrule{7-9} \cmidrule{11-13} 
              && $\SX$ & {\MCB} & {\DSC} && $\SX$ & {\MCB} & {\DSC} && $\SX$ & {\MCB} & {\DSC} \\ 
\midrule
SPF Consensus && 0.118 & 0.045  & 0.104  && 0.144 & 0.043  & 0.075  && 0.177 & 0.018  & 0.018  \\
SPF \#65      && 0.143 & 0.019  & 0.053  && 0.207 & 0.043  & 0.013  && 0.212 & 0.036  & 0.001  \\
\bottomrule
\end{tabular}
\medskip
\end{table}

\subsection{Fragile Families Challenge}  \label{sec:FFC}

\begin{figure}[p]
\centering
\includegraphics[width=\linewidth]{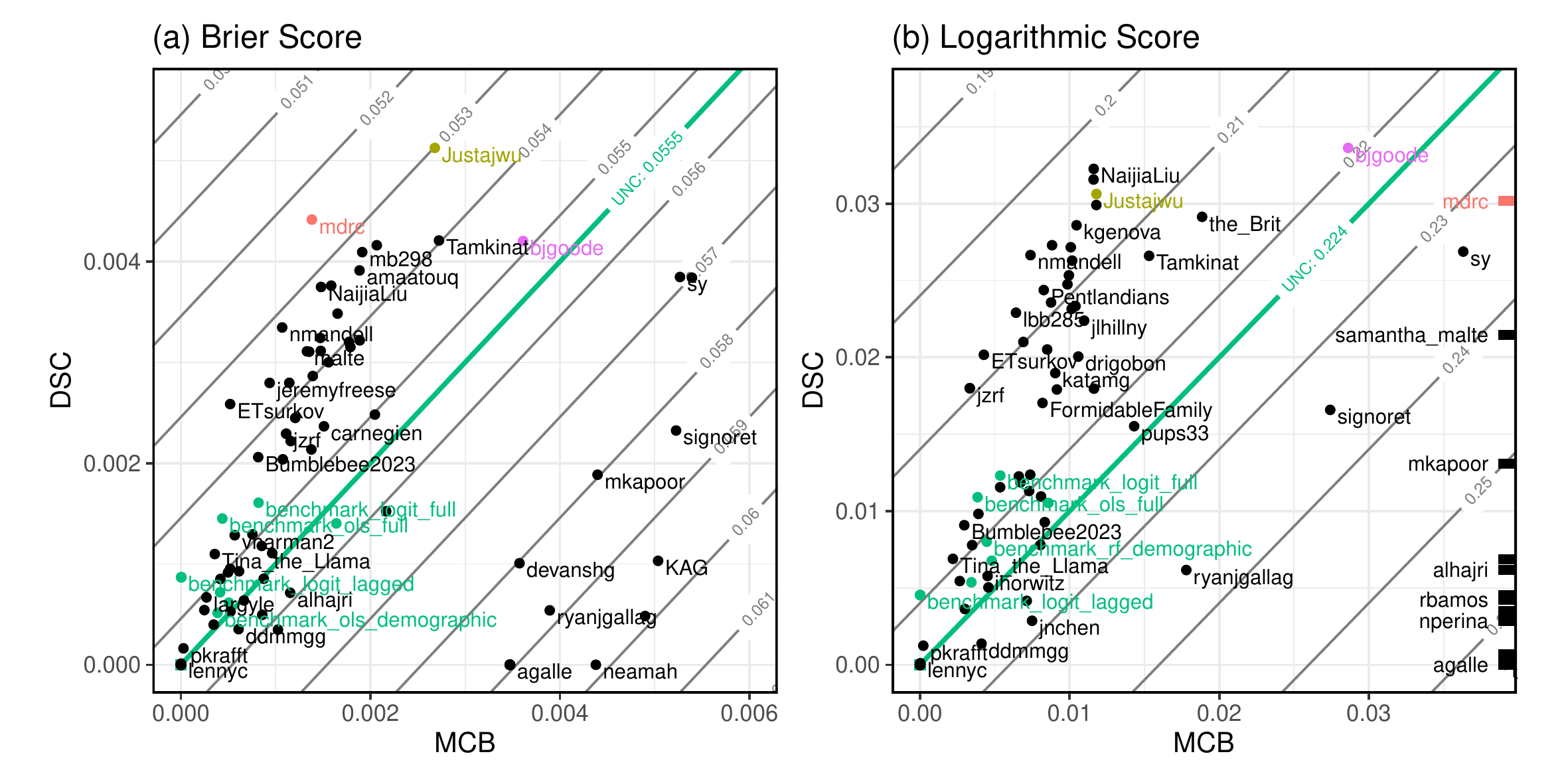}
\caption{{\MCB}--{\DSC} plots for probability forecasts of eviction from the Fragile Families Challenge under (a) the Brier score and (b) the logarithmic score.  Benchmark forecasts are marked in green; the other colors align with Figure \ref{fig:FFC_triptych}.   The green square at the origin represents the ex post best constant forecast, that is, the unconditional event frequency, and the thick green line separates forecasts that are better (above the line) and that are worse (below the line) than this baseline.  Forecasts shown along the right margin in panel (b) have infinite mean logarithmic score.  Various forecasts are not represented in the displays, due to trivial submissions \citep[Table S5]{Salganik2020b}, overlap in symbols or labels, or a particularly poor (but finite) mean score.}  \label{fig:FFC_MCB_DSC}
\end{figure} 

\begin{figure}[p]
\centering
\includegraphics[width=\linewidth]{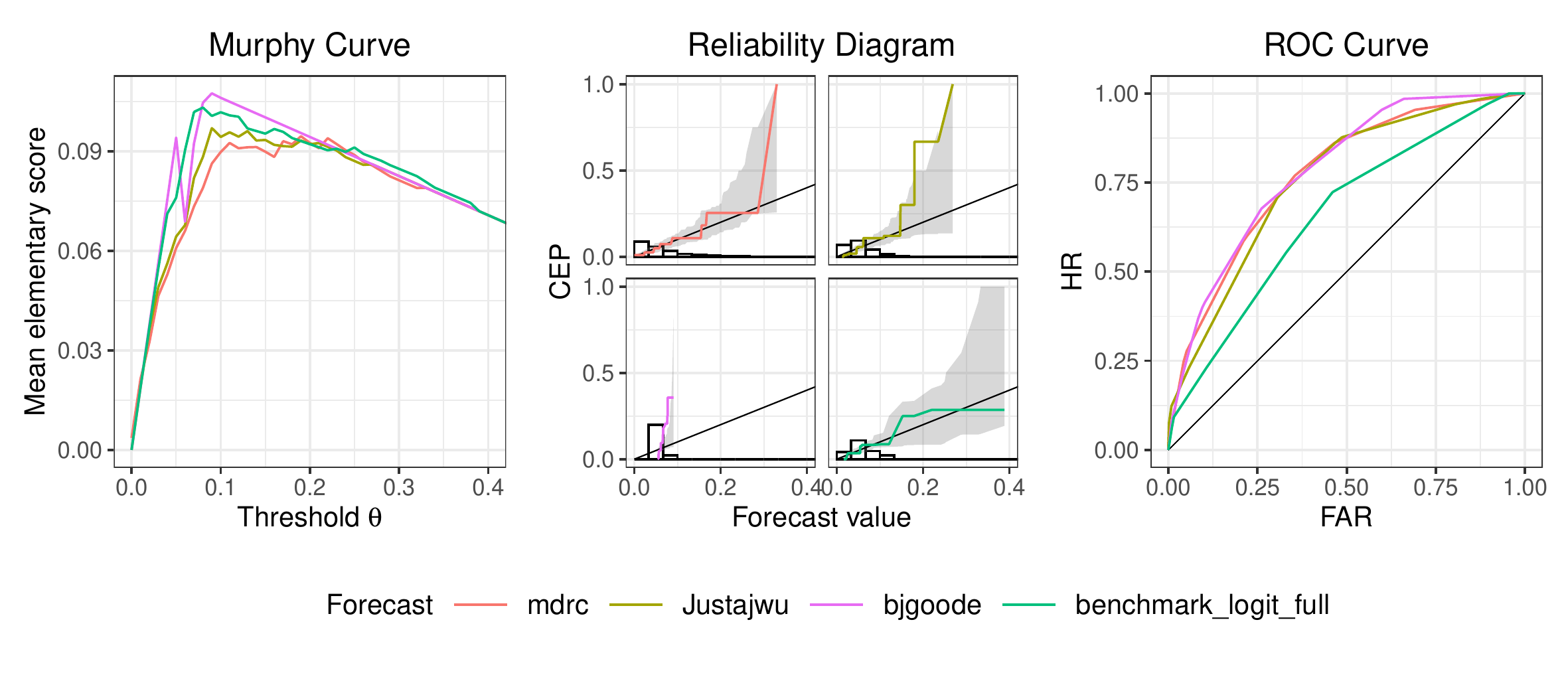}
\vspace{-10mm}
\caption{Triptych graphics for probability forecasts of eviction from the Fragile Families Challenge.}  \label{fig:FFC_triptych}
\end{figure}

The Fragile Families Challenge \citep{Salganik2020a, Salganik2020b, Salganik2021} is a scientific mass collaboration where teams supplied predictions for six (three binary and three real-valued) variables about life trajectories of children and families, based on a rich data set from the Fragile Families and Child Wellbeing Study \citep{Reichman2001}.  \citet{Salganik2020b} posit in their abstract that 
\begin{quote}
\footnotesize
''despite using a rich dataset and applying machine-learning methods optimized for prediction, the best predictions were not very accurate and were only slightly better than those from a simple benchmark model''.  
\end{quote}
We use the triptych methodology to shed detailed light on this claim for one of the binary outcomes in the study, namely, eviction (from a family's home or apartment, for not paying rent or mortgage).  For eviction, and also for the binary outcome job training (specifically, primary caregiver participation in job training) analyzed in panel (b) of Figure \ref{fig:MCB_DSC}, probability forecasts were sought for a holdout set of 1,103 families, with 160 teams providing valid contributions.  In addition, the Challenge organizers supplied nine benchmark forecasts based on commonly used statistical and machine learning techniques.  The unconditional event frequency in the holdout set is 0.059 for eviction and 0.246 for job training. 

The substantial numbers of up to 169 forecasts to be compared discourage the immediate use of the triptych graphics.  To enable the selection of methods of interest, Figure \ref{fig:FFC_MCB_DSC} shows {\MCB}--{\DSC} plots for the eviction data set under the Brier score and the logarithmic score, respectively.  Benchmark forecasts are represented in green and cluster near the origin, that is, they are well calibrated but lack discrimination.  Interestingly, a substantial number of competitors outperform the benchmarks of \citet{Salganik2020b} and the best constant forecast in terms of both scores, though the improvement is small.  However, for the job training data set in panel (b) of Figure \ref{fig:MCB_DSC} none of the teams show predictive ability superior to the benchmark forecasts.  

For the triptych graphics in Figure \ref{fig:FFC_triptych} we select mrdc as the best forecast in terms of the Brier score, bjgoode and Justajwu as the best discriminating forecasts with respect to the Brier score and the logarithmic score, respectively, and the baseline technique benchmark\_logit\_full of  \citet{Salganik2020b}.  The mrdc, bjgoode, and Justajwu forecasts outperform the baseline model in terms of discrimination ability, as depicted by the ROC curve.  Due to the low unconditional event frequency, the forecasts take on values below 0.40 only, and we restrict the reliability diagrams and Murphy curves to this range.  The baseline model is particularly well calibrated, which makes it competitive in terms of overall predictive performance, as demonstrated in the Murphy curves.  However, (re)calibrated versions of the mrdc, bjgoode, and Justajwu forecasts are bound to outperform the benchmarks by notable margins.

\section{Discussion}  \label{sec:discussion}

In this paper, we have proposed the joint use of a triptych of diagnostic graphics in the evaluation of probability forecasts, including the reliability diagram in the recently proposed CORP form to assess calibration, the concave variant of the receiver operating characteristic (ROC) curve to elucidate discrimination ability, and the Murphy curve for the overall assessment of predictive performance and economic utility.  For a succinct overview of the performance of multiple forecasts we have introduced {\MCB}--{\DSC} plots that leverage the CORP decomposition of a mean proper score into miscalibration (\MCB), discrimination (\DSC), and uncertainty (\UNC) components.  Software for the implementation of these tools and for the replication of the results in the article are available in \textsf{R} \citep{R, replication_triptych}.  An \textsf{R} package implementation is currently in development.

Our work builds on and supplements, and in a sense completes, extant software for the evaluation of probabilistic classifiers, or probabilistic forecasts in general, including but not limited to the \texttt{ROCR} \citep{Sing2005}, \texttt{pROC} \citep{Robin2011}, \texttt{Murphydiagram} \citep{Ehm2016}, \texttt{verification} \citep{R_verification}, and \texttt{reliabilitydiag} \citep{Dimitriadis2021} packages in \textsf{R}.  Arguably, closest in spirit are the \texttt{classifierplots} \citep{R_classifierplots} package, which generates a ``grid of diagnostic plots'' that includes reliability diagrams and ROC curves, and the interactive \texttt{Calibrate} approach of \citet{Xenopoulos2023}.  However, these packages do not use the CORP approach of \citet{Dimitriadis2021} for the generation of reliability diagrams and score decompositions, nor do they implement Murphy diagrams.  

In view of the general theory of calibration and score decompositions developed by \citet{Gneiting2021}, the triptych approach to the diagnostic evaluation of probability forecasts might serve as a blueprint for evaluation strategies in similar settings, including but not limited to ordinary least squares regression, forecasts in the form of the expected value of a general real-valued outcome, quantile regression, and quantile forecasts.  The case of quantiles has been studied by \citet{Gneiting2023}, whose toolbox includes variants of Murphy curves, CORP reliability diagrams, and the CORP score decomposition.  The recently developed universal ROC (UROC) curve of \citet{Gneiting2022b} generalizes the ROC curve from the classical case of a binary outcome to a general real-valued outcome, and the UROC curve might join CORP reliability diagrams and Murphy curves to form triptych graphics in the above types of settings.  While currently available implementations of the triptych graphics and {\MCB}--{\DSC} plots involve static graphics only, ever increasing numbers of competitors in forecast contests \citep{Salganik2020b, Makridakis2022} may warrant the development of interactive versions, where users can select competitors of interest in an {\MCB}--{\DSC} plot and generate the respective triptych graphics on the fly. 

\section*{Acknowledgement} 

Timo Dimitriadis, Tilmann Gneiting and Alexander Jordan gratefully acknowledge support by the Klaus Tschira Foundation.  Timo Dimitriadis gratefully acknowledges financial support from the German Research Foundation (DFG) through grant number 502572912.  The work of Peter Vogel was funded by DFG through grant number 257899354.  We thank Kajal Lahiri, Johannes Resin, Stefan Trautmann, and participants at the 2019 and 2022 International Symposium on Forecasting in Thessaloniki and Oxford, respectively, COMPSTAT 2022 in Bologna, and CFE 2022 in London for comments and advice.

\appendix

\section*{Appendix}

\section{Theoretical guarantees: Rigorous statements and proofs}  \label{app:proofs}

We work within the prediction space setting of \citep{Gneiting2013} and \citet{Gneiting2022a}, where we represent the probability forecast and the binary outcome as random variables $X$ and $Y$, respectively, with joint distribution $\myQ$, where $Y = 1$ represents an event and $Y = 0$ a non-event, with both types of outcomes having strictly positive probability.  The symbol $F$ denotes the marginal cumulative distribution function (CDF) of the forecast $X$.  When comparing competing probability forecasts $X_1$ and $X_2$ for the binary outcome $Y$, we also denote their joint distribution by $\myQ$.  

The probability forecast $X$ is \textit{calibrated} if $\myQ(Y = 1 \mid X) = X$ almost surely.  We define the conditional CDF $F_{|1}(t) = \myQ(X \leq t \mid Y=1)$  and $F_{|0}(t) = \myQ(X \leq t \mid Y=0)$ such that, for any threshold value $t \in [0,1]$, the population versions of \textit{hit rate}\/ (HR) and \textit{false alarm rate}\/ (FAR) are given by
\begin{align*}
\HR(t) = 1 - F_{|1}(t) = \myQ(X > t \mid Y = 1) \quad \textrm{and} \quad \FAR(t) = 1 - F_{|0}(t) = \myQ(X > t \mid Y = 0), 
\end{align*}	
respectively.  If $F_{|0}$ and $F_{|1}$ are continuous and strictly increasing, the ROC curve can be identified with a function $R: [0,1] \to [0,1]$, where $R(0) = 0$, $R(1) = 1$, and $R(p) = 1 - F_{|1}(F_{|0}^{-1}(1-p))$ for $p \in (0,1)$.  In the general setting, including but not limited to the case of empirical distributions for data of the form \eqref{eq:data}, the \textit{raw ROC diagnostic}\/ is the set-theoretic union of the points of the form $(\FAR(t), \HR(t))'$ within the unit square, from which the \textit{ROC curve}\/ is obtained by linear interpolation \citep{Gneiting2022a}.  On the support of $F$, we can index the ROC curve in terms of $c = F(t)$, with a natural extension to $c \in [0,1]$ via linear interpolation. 

\begin{figure}[t]
\centering
\includegraphics[width = \linewidth]{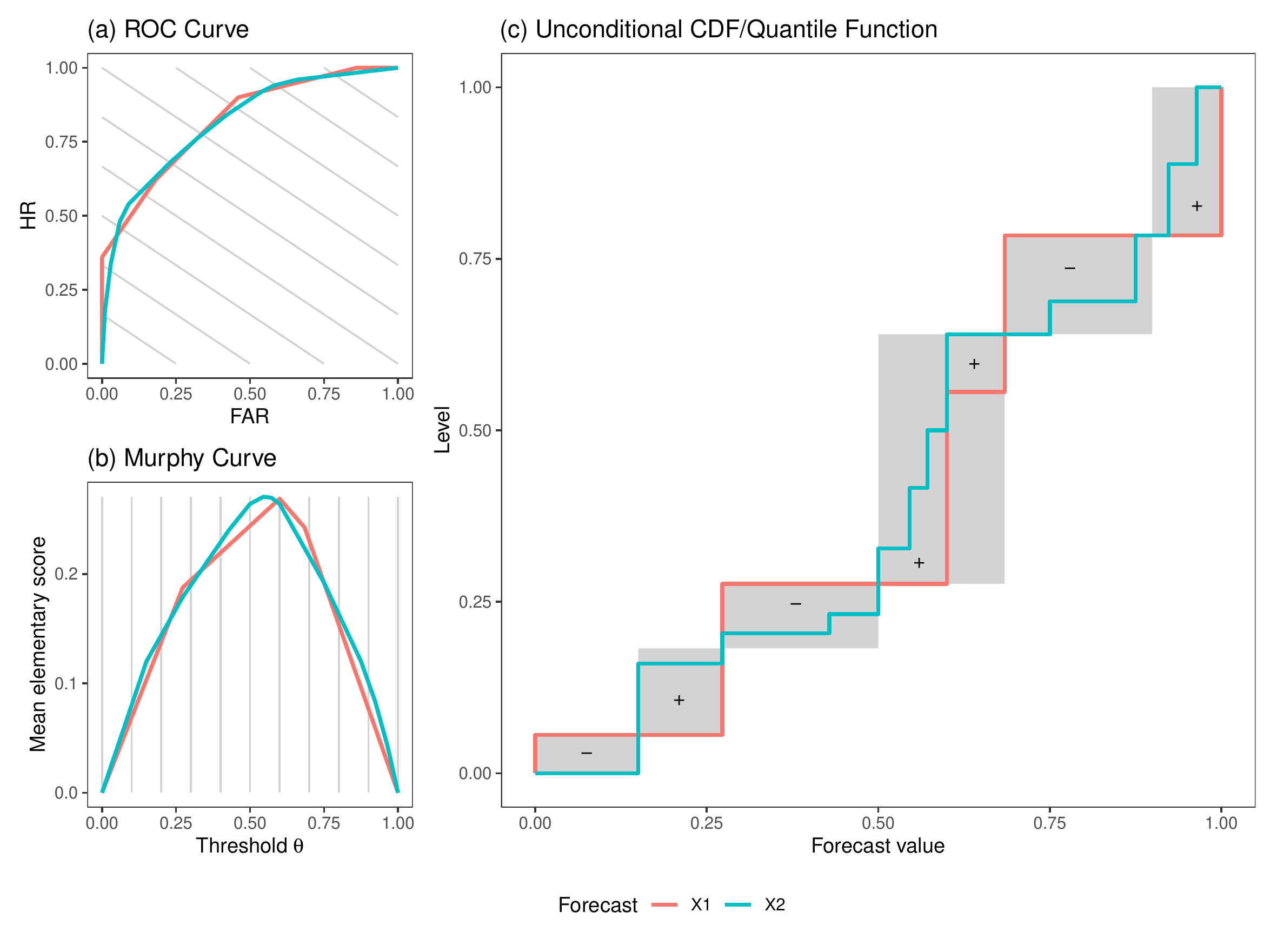}
\caption{Schematic illustration of the technical concepts in the statements and proofs of Lemma \ref{lem:crossingpoints} and Theorem \ref{thm:crossingpoints}.  Given calibrated classifiers $X_1$ and $X_2$ for the binary outcome $Y$, where $0 < \pi_0 = \myQ(Y = 0) < 1$, the panel at right shows the unconditional CDFs, $F_1$ and $F_2$, and generalized inverses or quantile functions, $Q_1$ and $Q_2$, respectively.   At lower left, the Murphy curve difference $\mathrm{D}_{{ X_1}, { X_2}}^\mathrm{MC}(\theta)$ arises as the vertical difference between the two Murphy curves at the cost--loss parameter $\theta$.  The panel at upper left concerns the ROC curve difference $\mathrm{D}_{{ X_1}, { X_2}}^\mathrm{ROC}(c)$.  The parallel lines with slope $- \pi_0/(1 - \pi_0)$ connect points on the two ROC curves with the same index, $c$.}  \label{fig:crossingpoints}
\end{figure}

Except for assuming that the competing probability forecasts $X_1$ and $X_2$ are calibrated, and requiring that $\myQ( Y = 0) \in (0,1)$, we do not impose any regularity conditions on the distribution $\myQ$.  Our proofs rely on the novel Lemma \ref{lem:crossingpoints} that expresses the difference of the Murphy curves, 
\begin{align*}   \label{eq:D_MC}
2\, \mathrm{D}_{{ X_1}, { X_2}}^\mathrm{MC}(\theta) = \myE_\myQ \, \myS_\theta(X_1,Y) - \myE_\myQ \, \myS_\theta(X_2,Y)
\end{align*} 
at the cost--loss parameter $\theta \in (0,1)$, as sketched in the lower left display of Figure \ref{fig:crossingpoints}, and the difference of the ROC curves, $\mathrm{D}_{{ X_1}, { X_2}}^\mathrm{ROC}(c)$, at $c = F(t)$ in terms of the unconditional CDFs, $F_1$ and $F_2$, and the associated quantile functions, $Q_1$ and $Q_2$, of the classifiers $X_1$ and $X_2$.  As the proof of Theorem \ref{thm:crossingpoints} demonstrates, and the upper left panel of Figure \ref{fig:crossingpoints} illustrates, the respective points on the ROC curves lie on parallel lines with slope $-\pi_0/(1 - \pi_0)$, and the ROC curve difference $\mathrm{D}_{{ X_1}, { X_2}}^\mathrm{ROC}(c)$ is taken in this direction.  We note the close relation to the idea that underlies the \textit{Kendall curve}\/ \citep{HernandezOrallo2013}, which considers the difference between a ROC curve and its optimal version, that is, the left and upper boundary of the unit square, in exactly this direction.  The Kendall curve then presents the information contained in the ROC curve in cost space, which coincides with the abscissa of the Murphy diagram.

\begin{lemma}  \label{lem:crossingpoints}
Suppose that $X_1$ and $X_2$ are calibrated probability forecasts for the binary outcome $Y$, where $\myQ(Y = 0) \in (0,1)$.  Then the Murphy curve difference and the ROC curve difference are 
\begin{equation}  \label{eq:lem}
\mathrm{D}_{X_1,X_2}^\mathrm{MC}(\theta) = \int_{[0,\theta]} \left( F_2(x) - F_1(x) \right) \mathrm{d}x 
\quad \text{and} \quad
\mathrm{D}_{X_1,X_2}^\mathrm{ROC}(c)
= \int_{[0,c]} \left( Q_1(\alpha) - Q_2(\alpha) \right) \mathrm{d}\alpha,
\end{equation}
where $F_1$ and $F_2$ are the unconditional CDFs for $X_1$ and $X_2$, and $Q_1$ and $Q_2$ are the left-continuous generalized inverses to $F_1$ and $F_2$, respectively.
\end{lemma}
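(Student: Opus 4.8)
The plan is to reduce both curve differences to expectations over the forecast distributions alone, exploiting two consequences of calibration. First, conditional expectations of the outcome may be replaced by the forecast itself, since $\myE[Y \mid X_i] = X_i$; second, both forecasts share the same unconditional mean, $\myE[X_1] = \myE[X_2] = \myE[Y] = 1 - \pi_0$, because each is calibrated for the same $Y$. This common mean is precisely what makes the leftover forecast-dependent terms collapse into the clean integrals in \eqref{eq:lem}.

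For the Murphy curve difference I would first compute the expected elementary score of a single calibrated forecast $X$ with CDF $F$. Substituting $\myQ(X > \theta, Y = 0) = \myE[(1-X)\mathds{1}(X > \theta)]$, $\myQ(X < \theta, Y = 1) = \myE[X\mathds{1}(X < \theta)]$, and $\myQ(X = \theta)$ into \eqref{eq:elementary}, then eliminating $\myE[X\mathds{1}(X>\theta)]$ via $\myE[X\mathds{1}(X<\theta)] + \myE[X\mathds{1}(X>\theta)] + \theta\,\myQ(X=\theta) = \myE[X]$, the coefficients of $\myE[X\mathds{1}(X<\theta)]$ and of $\myQ(X=\theta)$ combine cleanly; the symmetric treatment of the boundary $x = \theta$ in \eqref{eq:elementary} keeps this bookkeeping tidy. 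After using $\myE[(\theta - X)^+] = \int_0^\theta F(x)\,\mathrm{d}x$, this yields the compact form
\begin{align*}
\myE \, \myS_\theta(X,Y) = 2\theta(1 - \mu) - 2\int_0^\theta F(x)\,\mathrm{d}x,
\end{align*}
where $\mu = \myE[X]$. Taking the difference for $X_1$ and $X_2$, the term $2\theta(1-\mu)$ is identical for both because $\mu_1 = \mu_2 = 1 - \pi_0$, so it cancels, and dividing by two leaves exactly $\int_{[0,\theta]}(F_2 - F_1)\,\mathrm{d}x$.

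For the ROC curve difference I would again use calibration to write $(1-\pi_0)\,\HR(t) = \myE[X\mathds{1}(X>t)]$ and $\pi_0\,\FAR(t) = \myE[(1-X)\mathds{1}(X>t)] = (1 - F(t)) - \myE[X\mathds{1}(X>t)]$. Adding these gives the key identity $\pi_0\,\FAR(t) + (1-\pi_0)\,\HR(t) = 1 - F(t)$, which shows that at a common index $c = F(t)$ the points of both ROC curves lie on the single line $\pi_0 x + (1-\pi_0)y = 1 - c$ of slope $-\pi_0/(1-\pi_0)$, confirming the geometry sketched in Figure \ref{fig:crossingpoints}. Reparametrizing through the quantile function via $X = Q(U)$ with $U$ uniform gives $\myE[X\mathds{1}(X > Q(c))] = \int_c^1 Q(\alpha)\,\mathrm{d}\alpha$, so the horizontal gap between the two ROC points at index $c$ satisfies $\pi_0(\FAR_1 - \FAR_2) = -\int_c^1 (Q_1 - Q_2)\,\mathrm{d}\alpha$. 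Since $\int_0^1 Q_i\,\mathrm{d}\alpha = \myE[X_i] = 1 - \pi_0$ agrees for both forecasts, $\int_0^1 (Q_1 - Q_2)\,\mathrm{d}\alpha = 0$, and flipping the interval yields $\int_{[0,c]}(Q_1 - Q_2)\,\mathrm{d}\alpha$, the claimed difference measured in the direction of the line.

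The main obstacle I anticipate is the measure-theoretic bookkeeping at atoms and flat stretches of $F_1$ and $F_2$: the identities $\myE[(\theta - X)^+] = \int_0^\theta F$ and $\myE[X\mathds{1}(X > Q(c))] = \int_c^1 Q$ are transparent for continuous, strictly increasing CDFs but require care with the left-continuous generalized inverse $Q$, the right-continuity of $F$, and the linear interpolation used to extend both curves to all indices. I would handle this uniformly through the representation $X = Q(U)$ together with the generalized-inverse relation $Q(\alpha) \leq t \iff \alpha \leq F(t)$, which absorbs jumps and plateaus without case distinctions. The only remaining detail is to match the sign convention in the definition of $\mathrm{D}_{X_1,X_2}^\mathrm{ROC}(c)$ to the oriented gap depicted in Figure \ref{fig:crossingpoints}.
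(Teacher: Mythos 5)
Your proof is correct and takes essentially the same route as the paper's: calibration reduces the expected elementary score to $2\theta\pi_0 - 2\int_0^\theta F(x)\,\mathrm{d}x$ (your stop-loss identity $\myE[(\theta-X)^+] = \int_0^\theta F(x)\,\mathrm{d}x$ is equivalent to the paper's integration by parts for Lebesgue--Stieltjes integrals), and the ROC curve is reparameterized by $c = F(t)$ through the quantile function, with the same parallel-line geometry of slope $-\pi_0/(1-\pi_0)$, your passage from $\int_c^1$ to $\int_{[0,c]}$ via $\myE X_1 = \myE X_2$ matching the paper's direct expression. One small caution: the identity $\myE[X \mathds{1}(X > Q(c))] = \int_c^1 Q(\alpha)\,\mathrm{d}\alpha$ as written holds only for $c \in \mathrm{Im}(F)$ (it fails when $c$ lies strictly inside a jump of $F$, e.g.\ for a point mass), but since you apply it at $c = F(t)$ and correctly note that constancy of $Q$ on the gaps of $\mathrm{Im}(F)$ produces the linear interpolation, the argument goes through exactly as in the paper.
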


\begin{proof}
To simplify notation, let $X$ be a probability forecast for $Y$ with unconditional CDF $F$ and generalized inverse $Q$.  Let $F_{|i} = \myQ(X \leq \cdot \mid Y = i)$ and $\pi_i = \myQ(Y = i)$ for $i \in \{0, 1\}$.  

For the Murphy curve difference, consider the elementary scoring function $\myS_\theta$ from \eqref{eq:elementary}.  For $\theta \in (0,1)$ the expected elementary score of $X$ is
\begin{align*} 
\myE_\myQ \myS_\theta(X, Y) 
= 2 \theta \, \myQ(X > \theta, Y = 0) + 2 (1 - \theta) \left(\myQ(X < \theta, Y = 1) + \theta\, \myQ(X = \theta)\right).
\end{align*}
The Murphy curve is the graph of the map $\mathrm{MC} \colon \theta \mapsto \myE_\myQ \myS_{\theta}(X, Y)$.  Under calibration of $X$, we have $x = \myQ(Y = 1 \mid X = x)$ and $\theta\, \myQ(X = \theta) = \myQ(X = \theta, Y = 1)$, which yields
\begin{align*}
\mathrm{MC}(\theta) 
& = 2 \theta\, \myQ(Y = 0) - 2 \theta \, \myQ(X \leq \theta) + 2 \myQ(X \leq \theta, Y = 1) \\
& = 2 \theta \pi_0 - 2 \theta F(\theta) + 2 \int_{[0, \theta]} x \, \mathrm{d}F(x) = 2 \theta \pi_0 - 2 \int_{[0, \theta]} F(x) \, \mathrm{d}x
\end{align*}
using integration by parts for Lebesgue--Stieltjes integrals.  Therefore, a version of the Murphy curve difference of two forecasts ${ X_1}$ and ${ X_2}$ is 
\[
\mathrm{D}_{X_1,X_2}^\mathrm{MC}(\theta) 
= \tfrac{1}{2}\mathrm{MC}_{X_1}(\theta) - \tfrac{1}{2}\mathrm{MC}_{X_2}(\theta) 
= \int_{[0,\theta]} \left( F_2(x) - F_1(x) \right) \mathrm{d}x
\]
for $\theta \in [0,1]$, as claimed. 

Turning to the ROC curve distance, the raw ROC diagnostic of the probability forecast $X$ is the set-theoretic union of the points of the form $(\FAR(t), \HR(t))' = (1 - F_{|0}(t), 1 - F_{|1}(t))'$ for all threshold values $t$.  Under calibration of $X$, and using substitution for Lebesgue-Stieltjes integrals, we have 
\[
\pi_1 F_{|1}(t) = \int_{[0,t]} x \, \mathrm{d}F(x) = \int_{[0,F(t)]} Q(\alpha) \, \mathrm{d}\alpha.  
\]
Furthermore, $(F \circ Q \circ F)(\theta) = F(\theta) = \pi_0 F_{|0}(\theta) + \pi_1 F_{|1}(\theta)$, for $Q$ is a generalized inverse to $F$.  We use the idea that underlies the construction of the rate-driven cost curve and the Kendall curve \citep{HernandezOrallo2013}, namely, to substitute $Q(c)$ for $t$, where $c = F(t)$, in concert with the above facts, to write the points in the raw ROC diagnostic as
\begin{align*}
\left( 1 - \frac{1}{\pi_0} \left( c - \int_{[0,c]} Q(\alpha) \, \mathrm{d}\alpha \right) \! , 
       1 - \frac{1}{\pi_1} \int_{[0,c]} Q(\alpha) \, \mathrm{d}\alpha \right)',     
\end{align*}  
where $c \in \mathrm{Im}(F) = \{ F(t) \colon t \in \mathbb{R} \}$.  These expressions interpolate linearly when $c \in [0, 1] \setminus \mathrm{Im}(F)$, since $Q(\alpha) = Q(\min ( \mathrm{Im}(F) \cap [\alpha, 1]))$.  Therefore, the ROC curve as a linear interpolation of the raw ROC diagnostic is the graph of the map 
\[
\mathrm{ROC}_X^\mathrm{curve} \colon [0, 1] \to [0, 1]^2, \quad c \mapsto 
\left( 1 - \frac{1}{\pi_0} \left( c - \int_{[0,c]} Q(\alpha) \, \mathrm{d}\alpha \right) \! , 
       1 - \frac{1}{\pi_1} \int_{[0,c]} Q(\alpha) \, \mathrm{d}\alpha \right)'.    
\]
Given competing probability forecasts $X_1$ and $X_2$, the vector-valued difference between the ROC curves at $c \in [0,1]$ is  
\begin{align*}
\mathrm{ROC}_{X_1}^\mathrm{curve}(c) - \mathrm{ROC}_{X_2}^\mathrm{curve}(c) 
= \left( \frac{1}{\pi_0}, - \frac{1}{\pi_1} \right)' \left( \int_{[0,c]} \left( Q_1(\alpha) - Q_2(\alpha) \right) \mathrm{d}\alpha \right),
\end{align*}
which demonstrates that pointwise differences between ROC curves are to be measured along lines with slope $-\pi_0/\pi_1$ in the ROC curve plot, as illustrated in the upper left panel of Figure \ref{fig:crossingpoints}.  The factor at right is chosen as the pointwise distance between the ROC curves for $X_1$ and $X_2$ at index $c \in [0,1]$, that is, $\mathrm{D}_{{X_1}, {X_2}}^\mathrm{ROC}(c) = \int_{[0,c]} \left( Q_1(\alpha) - Q_2(\alpha) \right) \mathrm{d}\alpha$, as claimed.
\end{proof}

A function on the unit interval has $n$ \textit{sign changes}\/ if there exists a partition of the unit interval with $n + 1$ members, such that the function is nonnegative (nonpositive) with at least one nonzero value on the first and nonpositive (nonnegative) with at least one nonzero value on the second of any two consecutive members of the partition.  

We proceed to state and prove a rigorous version of Fact \ref{fact:crossingpoints} in Section \ref{sec:theory}. 

\begin{theorem}  \label{thm:crossingpoints}
Suppose that $X_1$ and $X_2$ are calibrated probability forecasts for the binary outcome $Y$, where $\myQ(Y = 0) \in (0,1)$.  Let $F_1$ and $F_2$ denote the unconditional CDFs of $X_1$ and $X_2$, and suppose that $F_1 - F_2$ has a finite number $n \ge 1$ of sign changes.  Then the following hold:
\begin{enumerate}
\item[(a)] The ROC curve difference $\mathrm{D}_{X_1, X_2}^\mathrm{ROC}$ has at most $n - 1$ sign changes.
\item[(b)] The Murphy curve difference $\mathrm{D}_{X_1, X_2}^\mathrm{MC}$ and the ROC curve difference $\mathrm{D}_{X_1, X_2}^\mathrm{ROC}$ have the same number of sign changes.
\end{enumerate}
\end{theorem}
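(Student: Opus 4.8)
The plan is to reduce both parts to an elementary counting argument, once I know that the two curve differences are piecewise monotone functions passing through the \emph{same} ordered sequence of node values. By Lemma~\ref{lem:crossingpoints}, the Murphy curve difference $\mathrm{D}_{X_1,X_2}^\mathrm{MC}(\theta) = \int_{[0,\theta]}(F_2 - F_1)$ and the ROC curve difference $\mathrm{D}_{X_1,X_2}^\mathrm{ROC}(c) = \int_{[0,c]}(Q_1 - Q_2)$ are antiderivatives of $F_2 - F_1$ and $Q_1 - Q_2$, respectively. Both vanish at the endpoints $0$ and $1$: at $0$ trivially, and at $1$ because calibration forces $\myE_\myQ X_1 = \myE_\myQ Y = \myE_\myQ X_2$, so that $\int_{[0,1]}(F_2 - F_1) = 0$ and $\int_{[0,1]}(Q_1 - Q_2) = 0$. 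I would first fix the sign changes $0 < x_1 < \cdots < x_n < 1$ of $F_1 - F_2$ and set $x_0 = 0$, $x_{n+1} = 1$. On each interval $(x_{j-1}, x_j)$ the integrand $F_2 - F_1$ has constant sign, so $\mathrm{D}_{X_1,X_2}^\mathrm{MC}$ is monotone there; I write $D_j = \mathrm{D}_{X_1,X_2}^\mathrm{MC}(x_j)$, with $D_0 = D_{n+1} = 0$.

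The first substantive step is to transfer this structure to the ROC side, which I would do in two moves. Invoking the Galois-type duality between a CDF and its left-continuous generalized inverse, $Q_i(\alpha) \le t \iff \alpha \le F_i(t)$, I would show that $Q_1 - Q_2$ changes sign exactly at the levels $\alpha_j := F_1(x_j) = F_2(x_j)$ (the common value of the two CDFs at a crossing), with constant sign on each $(\alpha_{j-1}, \alpha_j)$, where $\alpha_0 = 0$ and $\alpha_{n+1} = 1$; consequently $\mathrm{D}_{X_1,X_2}^\mathrm{ROC}$ is monotone on each such interval. Next I would match the node values using the area identity $\int_{[0,t]} F_i + \int_{[0, F_i(t)]} Q_i = t\, F_i(t)$. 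Applying it at $t = x_j$, where $F_1(x_j) = F_2(x_j) = \alpha_j$ is common to both forecasts, and subtracting the identity for $i=2$ from that for $i=1$, gives $\int_{[0,x_j]}(F_1 - F_2) + \int_{[0,\alpha_j]}(Q_1 - Q_2) = 0$, that is, $\mathrm{D}_{X_1,X_2}^\mathrm{ROC}(\alpha_j) = \mathrm{D}_{X_1,X_2}^\mathrm{MC}(x_j) = D_j$.

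With these facts the conclusion is combinatorial. Both $\mathrm{D}_{X_1,X_2}^\mathrm{MC}$ and $\mathrm{D}_{X_1,X_2}^\mathrm{ROC}$ are piecewise monotone, vanish at the endpoints, and pass through the identical ordered node values $0 = D_0, D_1, \ldots, D_n, D_{n+1} = 0$, with the monotonicity direction on the $j$-th piece fixed by the sign of $D_j - D_{j-1}$. For a function of this form a sign change occurs in the interior of a piece precisely when the two surrounding nonzero node values have opposite signs, so the number of sign changes equals the number of sign alternations in the sequence $D_1, \ldots, D_n$, counted with zeros skipped. Since this sequence and the monotonicity pattern coincide for the two differences, they have the same number of sign changes, which is part~(b); and since a sequence of $n$ reals admits at most $n-1$ alternations, each difference has at most $n-1$ sign changes, which is part~(a).

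I expect the main obstacle to be the first step, namely making the CDF--quantile duality fully rigorous in the presence of atoms and flat stretches: one must verify that $Q_1 - Q_2$ has \emph{exactly} the sign changes inherited from $F_1 - F_2$, located at the $\alpha_j$, so that no spurious sign change of $\mathrm{D}_{X_1,X_2}^\mathrm{ROC}$ is introduced between consecutive nodes, and one must confirm the area identity at the crossing points when the $F_i$ jump. The counting argument of the final paragraph, by contrast, is routine once the matching piecewise-monotone structure is established.
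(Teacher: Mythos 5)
Your proposal follows essentially the same route as the paper's proof: both rest on Lemma \ref{lem:crossingpoints}, on the endpoint zeros forced by calibration (so that $\myE X_1 = \myE Y = \myE X_2$), and on matching the partition generated by the sign changes of $F_2 - F_1$ with the one generated by $Q_1 - Q_2$, so that the two curve differences are monotone in the same direction on matched pieces and agree at matched boundary points. The paper asserts the boundary equality and the block structure via the geometric picture in Figure \ref{fig:crossingpoints}, whereas you derive it explicitly from the area identity $\int_{[0,t]} F_i \, \mathrm{d}x + \int_{[0,F_i(t)]} Q_i \, \mathrm{d}\alpha = t\,F_i(t)$; also, your alternation count of the node sequence $D_1, \ldots, D_n$ delivers parts (a) and (b) simultaneously, where the paper obtains (a) from the separate remark that integration introduces no new sign changes and that the endpoint zeros remove one.

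The one step that is false as literally written is $\alpha_j := F_1(x_j) = F_2(x_j)$: when a sign change of $F_1 - F_2$ is effected by a jump of one CDF across the other, there is no common value, e.g.\ one can have $F_2(x_j^-) < F_1(x_j) < F_2(x_j)$. You flagged exactly this point, and it is repairable: take $\alpha_j = \min\left( F_1(x_j), F_2(x_j) \right)$, say $\alpha_j = F_1(x_j)$ in the configuration above; since $Q_2(\alpha) = x_j$ for $\alpha \in (F_2(x_j^-), F_2(x_j)]$ and hence on $(F_1(x_j), F_2(x_j)]$, subtracting the two area identities at $t = x_j$ and removing the strip $\int_{(F_1(x_j), F_2(x_j)]} Q_2(\alpha) \, \mathrm{d}\alpha = x_j \left( F_2(x_j) - F_1(x_j) \right)$ yields $\mathrm{D}_{X_1,X_2}^\mathrm{ROC}(\alpha_j) = \mathrm{D}_{X_1,X_2}^\mathrm{MC}(x_j)$ exactly as in the continuous case. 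The companion claim that $Q_1 - Q_2$ keeps a constant sign on each $(\alpha_{j-1}, \alpha_j)$, with the same sign as $F_2 - F_1$ on the matched $x$-interval, follows from the duality $Q_i(\alpha) \leq t \iff \alpha \leq F_i(t)$ by similar bookkeeping. With these patches your counting argument is sound; note that the paper's own proof passes over the identical technicality, handling it only pictorially.
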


\begin{proof}
For part (a), note that the integrand $Q_1 - Q_2$ of $\mathrm{D}_{X_1, X_2}^\mathrm{ROC}$ has the same number of sign changes as $F_1 - F_2$, and that no additional sign change can be introduced by integration.  Since $\mathrm{D}_{X_1, X_2}^\mathrm{ROC}(c)$ evaluates to 0 at both $c = 0$ (by definition) and $c = 1$ (due to calibration, which implies that $\myE X_1 = \myE X_2$), the number of sign changes of the integral must be smaller than the number of sign changes of the integrand.

For part (b), consider the two partitions of the unit interval generated by sign changes of the integrands $F_2 - F_1$ and $Q_1 - Q_2$, respectively.  As both partitions have the same number of elements, the elements can be matched pairwise to create blocks as illustrated in the right panel of Figure \ref{fig:crossingpoints}.  At the beginning and end of every block (the bottom left or top right corner, respectively), we have equality of the differences $\mathrm{D}_{X_1, X_2}^\mathrm{MC}(\theta)$ and $\mathrm{D}_{X_1, X_2}^\mathrm{ROC}(c)$, and within a block either both differences are nonincreasing or both are nondecreasing.  Therefore, in a single block, either both differences experience a single sign change or neither does.
\end{proof}

The assumption of finitely many sign changes in $F_1 - F_2$ is not particularly restrictive, as it is satisfied whenever either $X_1$ or $X_2$ has finite support (covering all empirical cases), or when both have a finite number of (potentially interval-valued) modes.  Otherwise, the statement in part (b) continues to hold whenever the number of sign changes in either $\mathrm{D}_{X_1, X_2}^\mathrm{MC}$ or $\mathrm{D}_{X_1, X_2}^\mathrm{ROC}$ is finite.  Furthermore, the assumption of calibration guarantees the existence of at least one sign change in $F_1 - F_2$ whenever $F_1 \neq F_2$, since otherwise $F_1$ and $F_2$ are stochastically ordered, which implies $\myE X_1 \neq \myE X_2$ and contradicts the assumption of calibration.

Informally, a probability forecast for a binary outcome is sharper than another if its forecast values are closer to the most confident values of 0 and 1, respectively \citep{Gneiting2008a}.  In order to formalize the notion of sharpness we follow \citet{Kruger2021} and define $X_1$ to be \textit{sharper}\/ than $X_2$ if it is greater in convex order, that is, if $\myE \, \phi(X_1) \geq \myE \, \phi(X_2)$ for all convex functions $\phi$ on the unit interval, with strict inequality for some $\phi$.  As \citet[p.~974]{Kruger2021} note, given the assumption that forecasts are calibrated, being larger in convex order implies indeed that the forecast values are more spread out toward the most confident probabilities of 0 and 1.  

In Fact \ref{fact:sharper} in Section \ref{sec:theory} we express the idea that if competing probability forecasts are calibrated, and one of them is sharper than the other, then the sharper one is superior in terms of both ROC curves and Murphy curves, and vice versa.  To state a rigorous version of this fact, we say that $X_1$ \textit{dominates}\/ $X_2$ \textit{in the ROC sense}\/ if $\mathrm{D}_{X_1,X_2}^\mathrm{ROC}(c) \leq 0$ for all $c \in [0, 1]$ with strict inequality at some $c$.  Similarly, $X_1$ \textit{dominates}\/ $X_2$ \textit{in the Murphy sense}\/ if $\mathrm{D}_{X_1,X_2}^\mathrm{MC}(\theta) \leq 0$ for all $\theta \in [0, 1]$ with strict inequality at some $\theta$.

\begin{theorem}  \label{thm:sharper}
Suppose that $X_1$ and $X_2$ are calibrated probability forecasts for the binary outcome $Y$, where $\myQ(Y = 0) \in (0,1)$.  Then the following relations are equivalent: 
\begin{enumerate}
\item[(i)] $X_1$ is sharper than $X_2$.
\item[(ii)] $X_1$ dominates $X_2$ in the ROC sense.
\item[(iii)] $X_1$ dominates $X_2$ in the Murphy sense.
\end{enumerate}
\end{theorem}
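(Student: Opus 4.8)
The plan is to route the entire three-way equivalence through the explicit integral representations of Lemma~\ref{lem:crossingpoints}, treating (iii) as the hub and leaning on the classical description of the convex order. Throughout I would use that calibration forces $\myE X_1 = \myE[\myQ(Y=1\mid X_1)] = \pi_1 = \myE X_2$, so that $F_1 - F_2$ integrates to zero over $[0,1]$. Writing the integrated CDFs $G_i(t) = \int_{[0,t]} F_i(x)\,\mathrm{d}x$ and the integrated quantile functions $L_i(c) = \int_{[0,c]} Q_i(\alpha)\,\mathrm{d}\alpha$, this common mean pins the right endpoints, $G_1(1) = G_2(1)$ and $L_1(1) = L_2(1) = \pi_1$. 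With this in hand, Lemma~\ref{lem:crossingpoints} reads $\mathrm{D}_{X_1,X_2}^{\mathrm{MC}}(\theta) = G_2(\theta) - G_1(\theta)$ and $\mathrm{D}_{X_1,X_2}^{\mathrm{ROC}}(c) = L_1(c) - L_2(c)$, so that dominance of $X_1$ in the Murphy sense is exactly the pointwise relation $G_1 \geq G_2$ on $[0,1]$, while dominance in the ROC sense is exactly $L_1 \leq L_2$ on $[0,1]$.

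First I would establish (i) $\Leftrightarrow$ (iii). Representing any convex $\phi$ on the unit interval as a nonnegative mixture of the elementary kinks $x \mapsto (t-x)_+$ together with an affine term, and noting that the affine term cancels in $\myE\,\phi(X_1) - \myE\,\phi(X_2)$ by equality of means, the convex order $X_1 \geq_{\mathrm{cx}} X_2$ becomes equivalent to $\myE[(t-X_1)_+] \geq \myE[(t-X_2)_+]$ for every $t$. Since $\myE[(t-X)_+] = \int_{[0,t]} F(x)\,\mathrm{d}x$, this is precisely $G_1 \geq G_2$, i.e.\ statement (iii). The strict clause falls out along the way: if $\myE\,\phi(X_1) = \myE\,\phi(X_2)$ for all convex $\phi$, then in particular $G_1 = G_2$ and hence $F_1 = F_2$ almost everywhere, so ``sharper'' (weak convex order with strict inequality for some $\phi$) corresponds exactly to $G_1 \not\equiv G_2$, which is the strict-at-some-$\theta$ requirement in the definition of Murphy dominance.

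Next I would prove (ii) $\Leftrightarrow$ (iii) via the conjugacy duality between $G_i$ and $L_i$. The key identity is that the convex conjugate of $G(t) = \int_{[0,t]} F$ is $G^*(s) = \int_{[0,s]} Q(\alpha)\,\mathrm{d}\alpha = L(s)$ for $s \in [0,1]$, the standard statement that the Legendre transform of the integral of a monotone function is the integral of its generalized inverse. Because each $G_i$ is finite, convex, and continuous on $[0,1]$, hence closed when extended by $+\infty$ outside, conjugation is an order-reversing involution with $G_i^{**} = G_i$, and therefore $G_1 \geq G_2 \Leftrightarrow G_1^* \leq G_2^* \Leftrightarrow L_1 \leq L_2$, which is exactly (iii) $\Leftrightarrow$ (ii); the strict clauses again align, since $G_1 \not\equiv G_2 \Leftrightarrow L_1 \not\equiv L_2$. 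As a self-contained alternative that stays within the paper, I would instead invoke Theorem~\ref{thm:crossingpoints}(b): dominance in either sense says that the corresponding difference has no sign change and is not identically zero, and since $\mathrm{D}_{X_1,X_2}^{\mathrm{MC}}$ and $\mathrm{D}_{X_1,X_2}^{\mathrm{ROC}}$ share their sign-change count, one is single-signed if and only if the other is; it then remains only to fix the common sign, for instance by comparing $Q_1 - Q_2$ near $c = 0$, where the smaller lower endpoint belongs to the sharper forecast, against $F_2 - F_1$ near $\theta = 0$.

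The main obstacle I anticipate is the rigorous treatment of the conjugacy step in the presence of atoms and the left-continuous generalized inverses $Q_i$: establishing $G^* = L$ and the biconjugate identity $G^{**} = G$ cleanly on the closed interval, and verifying the boundary behaviour at the endpoints, where calibration pins $G_1(1) = G_2(1)$ and $L_1(1) = L_2(1)$. Packaging the three strictness clauses uniformly as ``$F_1 \neq F_2$'' is then routine, using that calibration rules out a nonzero single-signed $F_1 - F_2$, since a stochastic ordering would force $\myE X_1 \neq \myE X_2$; but it must be carried out explicitly in order to match the ``with strict inequality at some point'' phrasing in the definitions of the two dominance relations and of sharpness.
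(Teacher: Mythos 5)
Your proposal is correct, but its machinery differs from the paper's proof, which is deliberately minimal: the paper obtains (i) $\Leftrightarrow$ (iii) by citing it as a special case of Theorem 3.1 of \citet{Kruger2021}, and (ii) $\Leftrightarrow$ (iii) by reusing the block-matching argument from the proof of part (b) of Theorem \ref{thm:crossingpoints}, in which the two differences $\mathrm{D}_{X_1,X_2}^{\mathrm{MC}}$ and $\mathrm{D}_{X_1,X_2}^{\mathrm{ROC}}$ agree at block endpoints and are co-monotone within blocks, so that one is single-signed precisely when the other is, with the same sign. You instead prove (i) $\Leftrightarrow$ (iii) from scratch, via the kink representation of convex functions and the identity $\myE\,(t-X)_+ = \int_{[0,t]} F$, which together with Lemma \ref{lem:crossingpoints} reduces both sharpness and Murphy dominance to $G_1 \geq G_2$; and you route (ii) $\Leftrightarrow$ (iii) through Legendre--Fenchel conjugacy, $G_i^* = L_i$ on $[0,1]$ with $G_i^{**} = G_i$, so that order reversal of conjugation gives $G_1 \geq G_2 \Leftrightarrow L_1 \leq L_2$ directly. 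Your conjugacy argument buys something real: it is global and needs no sign-change bookkeeping, and in particular it does not require the finiteness-of-sign-changes hypothesis of Theorem \ref{thm:crossingpoints} (which the paper's route must sidestep by noting that under dominance $\mathrm{D}^{\mathrm{MC}}$ has zero sign changes, so the block argument still applies); the boundary verifications you flag ($G_1(1)=G_2(1)$, $L_1(1)=L_2(1)$, and the behaviour of $G_i^*$ outside $[0,1]$, all pinned by $\myE X_1 = \myE X_2$ under calibration) are exactly the right points to check and go through. The paper's route, in exchange, is shorter given what is already proved and stays elementary. One caution on your fallback alternative: invoking only the \emph{statement} of Theorem \ref{thm:crossingpoints}(b) gives equality of sign-change counts but not the common sign, so the sign-fixing step you mention is genuinely needed; the clean way to fix it is your second suggestion (the first-block comparison $F_1 \geq F_2$ near $0$ iff $Q_1 \leq Q_2$ near $0$, by order reversal of generalized inverses), not the first, since sharpness is not available when proving (ii) $\Leftrightarrow$ (iii) --- this is precisely why the paper appeals to the \emph{arguments in the proof} of part (b) rather than to its statement.
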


\begin{proof} 
The equivalence of statements (i) and (iii) is a special case of Theorem 3.1 in \citet{Kruger2021}, and the equivalence of statements (ii) and (iii) follows from the arguments in the proof of part (b) of Theorem \ref{thm:crossingpoints}. 
\end{proof}

Theorem \ref{thm:sharper} demonstrates that if probability forecasts are calibrated, then comparisons in terms of sharpness, discrimination ability, and proper scoring rules yield congruent insights, as illustrated in Scenario C and part (c) of Figure \ref{fig:ABC} in Section \ref{sec:theory}.  Related results have been discussed by \citet[p.~670]{Krzysztofowicz1990}, \citet[p.~416]{Wilks2019}, and references therein.

\addcontentsline{toc}{section}{References}

\setstretch{0.8}
\setlength{\bibsep}{5pt}


\begin{thebibliography}{}

\bibitem[Alba et~al., 2017]{Alba2017}
Alba, A.~C., Agoritsas, T., Walsh, M., Hanna, S., Iorio, A., Devereaux, P.~J.,
  McGinn, T., and Guyatt, G. (2017).
\newblock Discrimination and calibration of clinical prediction models:
  {U}sers’ guides to the medical literature.
\newblock {\em Journal of the American Medical Association},
  318(14):1377--1384.

\bibitem[Arrieta-Ibarra et~al., 2022]{ArrietaIbarra2022}
Arrieta-Ibarra, I., Gujral, P., Tannen, J., Tygert, M., and Xu, C. (2022).
\newblock Metrics of calibration for probabilistic predictions.
\newblock {\em Journal of Machine Learning Research}, 23:1--54.

\bibitem[Ayer et~al., 1955]{Ayer1955}
Ayer, M., Brunk, H.~D., Ewing, G.~M., Reid, W.~T., and Silvermann, E. (1955).
\newblock An empirical distribution function for sampling with incomplete
  information.
\newblock {\em Annals of Mathematical Statistics}, 26(4):641--647.

\bibitem[Barnes et~al., 2016]{Barnes2016}
Barnes, G., Leka, K.~D., Schrijver, C.~J., Colak, T., Qahwaji, R., Ashamari,
  O.~W., Yuan, Y., Zhang, J., McAteer, R. T.~J., Bloomfield, D.~S., Higgins,
  P.~A., Gallagher, P.~T., Falconer, D.~A., Georgoulis, M.~K., Wheatland,
  M.~S., Balch, C., Dunn, T., and Wagner, E.~L. (2016).
\newblock A comparison of flare forecasting methods. {I}. {R}esults from the
  ``all-clear'' workshop.
\newblock {\em Astrophysical Journal}, 829:89.

\bibitem[Bradley, 1997]{Bradley1997}
Bradley, A.~P. (1997).
\newblock The use of the area under the {ROC} curve in the evaluation of
  machine learning algorithms.
\newblock {\em Pattern Recognition}, 30(7):1145--1159.

\bibitem[Brier, 1950]{Brier1950}
Brier, G.~W. (1950).
\newblock Verification of forecasts expressed in terms of probability.
\newblock {\em Monthly Weather Review}, 78(1):1--3.

\bibitem[Br{\"{o}}cker, 2012]{Brocker2012}
Br{\"{o}}cker, J. (2012).
\newblock Estimating reliability and resolution of probability forecasts
  through decomposition of the empirical score.
\newblock {\em Climate Dynamics}, 39:655--667.

\bibitem[Br{\"o}cker, 2022]{Brocker2022}
Br{\"o}cker, J. (2022).
\newblock Uniform calibration tests for forecasting systems with small lead
  time.
\newblock {\em Statistics and Computing}, 32(6):1--13.

\bibitem[Br{\"{o}}cker and Smith, 2007]{Brocker2007}
Br{\"{o}}cker, J. and Smith, L.~A. (2007).
\newblock Increasing the reliability of reliability diagrams.
\newblock {\em Weather and Forecasting}, 22(3):651--661.

\bibitem[Buja et~al., 2005]{Buja2005}
Buja, A., Stuetzle, W., and Shen, Y. (2005).
\newblock Loss functions for binary class probability estimation and
  classification: {S}tructure and application.
\newblock Working paper, available at
  \url{http://www-stat.wharton.upenn.edu/~buja/PAPERS/paper-proper-scoring.pdf}.

\bibitem[Croushore and Stark, 2019]{Croushore2019}
Croushore, D. and Stark, T. (2019).
\newblock Fifty years of the {S}urvey of {P}rofessional {F}orecasters.
\newblock {\em Economic Insights}, 4(4):1--11.

\bibitem[De~Leeuw et~al., 2009]{deLeeuw2009}
De~Leeuw, J., Hornik, K., and Mair, P. (2009).
\newblock Isotone optimization in \textsf{R}: {P}ool-adjacent-violators
  algorithm ({PAVA}) and active set methods.
\newblock {\em Journal of Statistical Software}, 32(5):1--24.

\bibitem[Defazio and Campbell, 2020]{R_classifierplots}
Defazio, A. and Campbell, H. (2020).
\newblock classifierplots: {G}enerates a visualization of classifier
  performance as a grid of diagnostic plots.
\newblock \textsf{R} package version 1.4.0, available at
  \url{https://cran.r-project.org/package=classifierplots}.

\bibitem[DeLong et~al., 1988]{DeLong1988}
DeLong, E.~R., DeLong, D.~M., and Clarke-Pearson, D.~L. (1988).
\newblock Comparing the areas under two or more correlated receiver operating
  characteristic curves: {A} nonparametric approach.
\newblock {\em Biometrics}, 44(3):837--845.

\bibitem[Dimitriadis et~al., 2022]{Dimitriadis2022}
Dimitriadis, T., D{\"u}mbgen, L., Henzi, A., Puke, M., and Ziegel, J. (2022).
\newblock Honest calibration assessment for binary outcome predictions.
\newblock {\em Biometrika}.
\newblock In press, \url{ https://doi.org/10.1093/biomet/asac068}.

\bibitem[Dimitriadis et~al., 2021]{Dimitriadis2021}
Dimitriadis, T., Gneiting, T., and Jordan, A.~I. (2021).
\newblock Stable reliability diagrams for probabilistic classifiers.
\newblock {\em Proceedings of the National Academy of Sciences},
  118(8):e2016191118.

\bibitem[Dimitriadis and Jordan, 2023]{replication_triptych}
Dimitriadis, T. and Jordan, A.~I. (2023).
\newblock Replication material for ``{E}valuating probabilistic classifiers:
  {T}he triptych''.
\newblock Available at \url{https://github.com/TimoDimi/replication_triptych}.

\bibitem[Drummond and Holte, 2006]{Drummond2006}
Drummond, C. and Holte, R. (2006).
\newblock Cost curves: {A}n improved method for visualizing classifier
  performance.
\newblock {\em Machine Learning}, 65:95--130.

\bibitem[Egan, 1975]{Egan1975}
Egan, J.~P. (1975).
\newblock {\em Signal Dectection Theory and {ROC} Analysis}.
\newblock Academic Press, New York.

\bibitem[Ehm et~al., 2016]{Ehm2016}
Ehm, W., Gneiting, T., Jordan, A., and Kr\"uger, F. (2016).
\newblock Of quantiles and expectiles: {C}onsistent scoring functions,
  {C}hoquet representations and forecast rankings.
\newblock {\em Journal of the Royal Statistical Society Series B},
  78(3):505--562.

\bibitem[Fawcett, 2006]{Fawcett2006}
Fawcett, T. (2006).
\newblock An introduction to {ROC} analysis.
\newblock {\em Pattern Recognition Letters}, 27(8):861--874.

\bibitem[Fawcett and Niculescu-Mizil, 2007]{Fawcett2007}
Fawcett, T. and Niculescu-Mizil, A. (2007).
\newblock {PAV} and the {ROC} convex hull.
\newblock {\em Machine Learning}, 68(1):97--106.

\bibitem[Ferro and Fricker, 2012]{Ferro2012}
Ferro, C. A.~T. and Fricker, T.~E. (2012).
\newblock A bias-corrected decomposition of the {B}rier score.
\newblock {\em Quarterly Journal of the Royal Meteorological Society},
  138:1954--1960.

\bibitem[Filho et~al., 2021]{Filho2021}
Filho, T.~S., Song, H., Perell{\'{o}}{-}Nieto, M., Santos{-}Rodr{\'{\i}}guez,
  R., Kull, M., and Flach, P. (2021).
\newblock Classifier calibration: {H}ow to assess and improve predicted class
  probabilities: {A} survey.
\newblock Preprint, available at \url{https://arxiv.org/abs/2112.10327}.

\bibitem[Flach, 2017]{Flach2017}
Flach, P.~A. (2017).
\newblock Classifier calibration.
\newblock In Sammut, C. and Webb, G.~I., editors, {\em Encyclopedia of Machine
  Learning and Data Mining}, pages 210--217, New York. Springer.

\bibitem[Gneiting et~al., 2007]{Gneiting2007}
Gneiting, T., Balabdaoui, F., and Raftery, A.~E. (2007).
\newblock Probabilistic forecasts, calibration and sharpness.
\newblock {\em Journal of the Royal Statistical Society Series B},
  69(2):243--268.

\bibitem[Gneiting and Raftery, 2007]{Gneiting2007a}
Gneiting, T. and Raftery, A.~E. (2007).
\newblock Strictly proper scoring rules, prediction, and estimation.
\newblock {\em Journal of the American Statistical Association},
  102(477):359--378.

\bibitem[Gneiting and Ranjan, 2013]{Gneiting2013}
Gneiting, T. and Ranjan, R. (2013).
\newblock Combining predictive distributions.
\newblock {\em Electronic Journal of Statistics}, 7:1747--1782.

\bibitem[Gneiting and Resin, 2021]{Gneiting2021}
Gneiting, T. and Resin, J. (2021).
\newblock Regression diagnostics meets forecast evaluation: {C}onditional
  calibration, reliability diagrams, and coefficient of determination.
\newblock Preprint, available at \url{https://arxiv.org/abs/2108.03210}.

\bibitem[Gneiting et~al., 2008]{Gneiting2008a}
Gneiting, T., Stanberry, L.~A., Grimit, E.~P., Held, L., and Johnson, N.~A.
  (2008).
\newblock Rejoinder on: {A}ssessing probabilistic forecasts of multivariate
  quantities, with an application to ensemble predictions of surface winds.
\newblock {\em Test}, 17:256--264.

\bibitem[Gneiting and Vogel, 2022]{Gneiting2022a}
Gneiting, T. and Vogel, P. (2022).
\newblock Receiver operating characteristic ({ROC}) curves: {E}quivalences,
  beta model, and minimum distance estimation.
\newblock {\em Machine Learning}, 111:2147--2159.

\bibitem[Gneiting and Walz, 2022]{Gneiting2022b}
Gneiting, T. and Walz, E.-M. (2022).
\newblock Receiver operating characteristic ({ROC}) movies, universal {ROC}
  ({UROC}) curves, and coefficient of predictive ability ({CPA}).
\newblock {\em Machine Learning}, 111:2769--2797.

\bibitem[Gneiting et~al., 2022]{Gneiting2023}
Gneiting, T., Wolffram, D., Resin, J., Kraus, K., Bracher, J., Dimitriadis, T.,
  Hagenmeyer, V., Jordan, A.~I., Lerch, S., Phipps, K., and Schienle, M.
  (2022).
\newblock Model diagnostics and forecast evaluation for quantiles.
\newblock {\em Annual Review of Statistics and Its Application}.
\newblock In press,
  \url{https://doi.org/10.1146/annurev-statistics-032921-020240}.

\bibitem[Guo et~al., 2017]{Guo2017}
Guo, C., Pleiss, G., Sun, Y., and Weinberger, K.~Q. (2017).
\newblock On calibration of modern neural networks.
\newblock In {\em Proceedings of the 34th International Conference on Machine
  Learning}, pages 1321--1330.

\bibitem[Hand, 2009]{Hand2009}
Hand, D.~J. (2009).
\newblock Measuring classifier performance: {A} coherent alternative to the
  area under the {ROC} curve.
\newblock {\em Machine Learning}, 77(1):103–123.

\bibitem[Hand and Anagnostopoulos, 2022]{Hand2022}
Hand, D.~J. and Anagnostopoulos, C. (2022).
\newblock Notes on the {H}-measure of classifier performance.
\newblock {\em Advances in Data Analysis and Classification}.
\newblock In press, \url{https://doi.org/10.1007/s11634-021-00490-3}.

\bibitem[Hanley and McNeil, 1982]{Hanley1982}
Hanley, A. and McNeil, J. (1982).
\newblock The meaning and use of the area under a receiver operating
  characteristic ({ROC}) curve.
\newblock {\em Radiology}, 143(1):29--36.

\bibitem[Hern\'{a}ndez-Orallo et~al., 2011]{HernandezOrallo2011}
Hern\'{a}ndez-Orallo, J., Flach, P., and Ferri, C. (2011).
\newblock Brier curves: {A} new cost-based visualisation of classifier
  performance.
\newblock In {\em International Conference on Machine Learning}, pages
  585--592.

\bibitem[Hern\'{a}ndez-Orallo et~al., 2012]{HernandezOrallo2012}
Hern\'{a}ndez-Orallo, J., Flach, P., and Ferri, C. (2012).
\newblock A unified view of performance metrics: {T}ranslating threshold choice
  into expected classification loss.
\newblock {\em Journal of Machine Learning Research}, 13(91):2813--2869.

\bibitem[Hern\'{a}ndez-Orallo et~al., 2013]{HernandezOrallo2013}
Hern\'{a}ndez-Orallo, J., Flach, P., and Ferri, C. (2013).
\newblock {ROC} curves in cost space.
\newblock {\em Machine Learning}, 93:71--91.

\bibitem[Hewson and Pillosu, 2021]{Hewson2021}
Hewson, T.~D. and Pillosu, F.~M. (2021).
\newblock A low-cost post-processing technique improves weather forecasts
  around the world.
\newblock {\em Communications Earth \& Environment}, 2:132.

\bibitem[Holzmann and Eulert, 2014]{Holzmann2014}
Holzmann, H. and Eulert, M. (2014).
\newblock The role of the information set for forecasting --- with applications
  to risk management.
\newblock {\em Annals of Applied Statistics}, 8:595--621.

\bibitem[Janssens, 2020]{Janssens2020}
Janssens, A. C. J.~W. (2020).
\newblock {ROC} curves for clinical prediction models part 4. {S}election of
  the risk threshold --- once chosen, always the same?
\newblock {\em Journal of Clinical Epidemiology}, 126:224--225.

\bibitem[Jordan et~al., 2022]{Jordan2022}
Jordan, A.~I., M\"{u}hlemann, A., and Ziegel, J.~F. (2022).
\newblock Characterizing the optimal solutions to the isotonic regression
  problem for identifiable functionals.
\newblock {\em Annals of the Institute of Statistical Mathematics},
  74:489--514.

\bibitem[Kr{\"{u}}ger and Ziegel, 2021]{Kruger2021}
Kr{\"{u}}ger, F. and Ziegel, J.~F. (2021).
\newblock Generic conditions for forecast dominance.
\newblock {\em Journal of Business \& Economic Statistics}, 39(4):972--983.

\bibitem[Krzysztofowicz and Long, 1990]{Krzysztofowicz1990}
Krzysztofowicz, R. and Long, D. (1990).
\newblock Fursion of detection probabilities and comparison of multisensor
  systems.
\newblock {\em IEEE Transactions on Systems, Man, and Cybernetics},
  20(3):665--677.

\bibitem[Lahiri and Wang, 2013]{Lahiri2013}
Lahiri, K. and Wang, J.~G. (2013).
\newblock {Evaluating probability forecasts for GDP declines using alternative
  methodologies}.
\newblock {\em International Journal of Forecasting}, 29(1):175--190.

\bibitem[Leka and Park, 2019]{Leka2019b}
Leka, K.~D. and Park, S.-H. (2019).
\newblock A comparison of flare forecasting methods {II}: {D}ata and supporting
  code.
\newblock Harvard Dataverse, V1, \url{https://doi.org/10.7910/DVN/HYP74O}.

\bibitem[Leka et~al., 2019]{Leka2019a}
Leka, K.~D., Park, S.-H., Kusano, K., Andries, J., Barnes, G., Bingham, S.,
  Bloomfield, D.~S., McCloskey, A.~E., Delouille, V., Falconer, D., Gallagher,
  P.~T., Georgoulis, M.~K., Kubo, Y., Lee, K., Lee, S., Lobzin, V., Mun, J.,
  Murray, S.~A., Nageem, T. A. M.~H., Qahwaji, R., Sharpe, M., Steenburgh,
  R.~A., Steward, G., and Terkildsen, M. (2019).
\newblock A comparison of flare forecasting methods. {II}. {B}enchmarks,
  metrics, and performance results for operational solar flare forecasting
  systems.
\newblock {\em Astrophysical Journal Supplement Series}, 243:36.

\bibitem[Makridakis et~al., 2022]{Makridakis2022}
Makridakis, S., Petropoulos, F., and Spiliotis, E. (2022).
\newblock Introduction to the {M5} forecasting competition {S}pecial {I}ssue.
\newblock {\em International Journal of Forecasting}, 38:1279--1282.

\bibitem[Marzban, 2004]{Marzban2004}
Marzban, C. (2004).
\newblock The {ROC} curve and the area under it as performance measures.
\newblock {\em Weather and Forecasting}, 19:1106--1114.

\bibitem[Merkle and Steyvers, 2013]{Merkle2013}
Merkle, E.~C. and Steyvers, M. (2013).
\newblock Choosing a strictly proper scoring rule.
\newblock {\em Decision Analysis}, 10(4):292--304.

\bibitem[Murphy, 1973]{Murphy1973}
Murphy, A.~H. (1973).
\newblock A new vector partition of the probability score.
\newblock {\em Journal of Applied Meteorology}, 12(4):595--600.

\bibitem[Murphy and Winkler, 1977]{Murphy1977a}
Murphy, A.~H. and Winkler, R.~L. (1977).
\newblock Reliability of subjective probability forecasts of precipitation and
  temperature.
\newblock {\em Journal of the Royal Statistical Society Series C: Applied
  Statistics}, 26:41--47.

\bibitem[Murphy and Winkler, 1992]{Murphy1992}
Murphy, A.~H. and Winkler, R.~L. (1992).
\newblock Diagnostic verification of probability forecasts.
\newblock {\em International Journal of Forecasting}, 7(4):435--455.

\bibitem[Naeini et~al., 2015]{Naeini2015}
Naeini, P.~N., Cooper, G.~F., and Hausknecht, M. (2015).
\newblock Obtaining well calibrated probabilities using {B}ayesian binning.
\newblock In {\em Proceedings of the Twenty-Ninth AAAI Conference on Artificial
  Intelligence}, pages 2901--–2907.

\bibitem[{NCAR - Research Applications Laboratory}, 2015]{R_verification}
{NCAR - Research Applications Laboratory} (2015).
\newblock verification: {W}eather forecast verification utilities.
\newblock \textsf{R} package version 1.42, available at
  \url{https://cran.r-project.org/package=verification}.

\bibitem[Pesce et~al., 2010]{Pesce2010}
Pesce, L.~L., Metz, C.~E., and Berbaum, K.~S. (2010).
\newblock On the convexity of {ROC} curves estimated from radiological test
  results.
\newblock {\em Academic Radiology}, 17(8):960--968.

\bibitem[Prati et~al., 2011]{Prati2011}
Prati, R.~C., Batista, G. E. A. P.~A., and Monard, M.~C. (2011).
\newblock A survey on graphical methods for classification predictive
  performance evaluation.
\newblock {\em IEEE Transactions on Knowledge and Data Engineering},
  23(11):1601--1618.

\bibitem[{\textsf{R} Core Team}, 2022]{R}
{\textsf{R} Core Team} (2022).
\newblock {\em \textsf{R}: A Language and Environment for Statistical
  Computing}.
\newblock \textsf{R} Foundation for Statistical Computing, Vienna, Austria.

\bibitem[Reichman et~al., 2001]{Reichman2001}
Reichman, N.~E., Teitler, J.~O., Garfinkel, I., and McLanahan, S.~S. (2001).
\newblock Fragile families: {S}ample and design.
\newblock {\em Children and Youth Services Review}, 23(4):303--326.

\bibitem[Richardson, 2012]{Richardson2012}
Richardson, D.~S. (2012).
\newblock Economic value and skill.
\newblock In Jolliffe, I.~T. and Stephenson, D.~B., editors, {\em Forecast
  Verification: A Practitioner's Guide in Atmospheric Science}, chapter~9,
  pages 167--184. Wiley, second edition.

\bibitem[Robin et~al., 2011]{Robin2011}
Robin, X., Turck, N., Hainard, A., Tiberti, N., Lisacek, F., Sanchez, J.~C.,
  and M{\"{u}}ller, M. (2011).
\newblock {pROC}: {A}n open-source package for \textsf{R} and \textsf{S+} to
  analyze and compare {ROC} curves.
\newblock {\em BMC Bioinformatics}, 12(77):1--8.

\bibitem[Roelofs et~al., 2022]{Roelofs2022}
Roelofs, R., Cain, N., Shlens, J., and Mozer, M.~C. (2022).
\newblock Mitigating bias in calibration error estimation.
\newblock In {\em Proceedings of the 25th International Conference on
  Artificial Intelligence and Statistics (AISTATS)}.

\bibitem[Salganik et~al., 2020a]{Salganik2020a}
Salganik, M., Lundberg, I., Kindel, A., and McLanahan, S. (2020a).
\newblock Replication material for ``{M}easuring the predictability of life
  outcomes using a scientific mass collaboration''.
\newblock Harvard Dataverse, V3, \url{https://doi.org/10.7910/DVN/CXSECU}.

\bibitem[Salganik et~al., 2020b]{Salganik2020b}
Salganik, M.~J., Lundberg, I., Kindel, A.~T., Ahearn, C.~E., Al-Ghoneim, K.,
  Almaatouq, A., Altschul, D.~M., Brand, J.~E., Carnegie, N.~B., Compton,
  R.~J., Datta, D., Davidson, T., Filippova, A., Gilroy, C., Goode, B.~J.,
  Jahani, E., Kashyap, R., Kirchner, A., McKay, S., Morgan, A.~C., Pentland,
  A., Polimis, K., Raes, L., Rigobon, D.~E., Roberts, C.~V., Stanescu, D.~M.,
  Suhara, Y., Usmani, A., Wang, E.~H., Adem, M., Alhajri, A., AlShebli, B.,
  Amin, R., Amos, R.~B., Argyle, L.~P., Baer-Bositis, L., B{\"u}chi, M., Chung,
  B.-R., Eggert, W., Faletto, G., Fan, Z., Freese, J., Gadgil, T., Gagn{\'e},
  J., Gao, Y., Halpern-Manners, A., Hashim, S.~P., Hausen, S., He, G., Higuera,
  K., Hogan, B., Horwitz, I.~M., Hummel, L.~M., Jain, N., Jin, K., Jurgens, D.,
  Kaminski, P., Karapetyan, A., Kim, E.~H., Leizman, B., Liu, N., M{\"o}ser,
  M., Mack, A.~E., Mahajan, M., Mandell, N., Marahrens, H., Mercado-Garcia, D.,
  Mocz, V., Mueller-Gastell, K., Musse, A., Niu, Q., Nowak, W., Omidvar, H.,
  Or, A., Ouyang, K., Pinto, K.~M., Porter, E., Porter, K.~E., Qian, C., Rauf,
  T., Sargsyan, A., Schaffner, T., Schnabel, L., Schonfeld, B., Sender, B.,
  Tang, J.~D., Tsurkov, E., van Loon, A., Varol, O., Wang, X., Wang, Z., Wang,
  J., Wang, F., Weissman, S., Whitaker, K., Wolters, M.~K., Woon, W.~L., Wu,
  J., Wu, C., Yang, K., Yin, J., Zhao, B., Zhu, C., Brooks-Gunn, J.,
  Engelhardt, B.~E., Hardt, M., Knox, D., Levy, K., Narayanan, A., Stewart,
  B.~M., Watts, D.~J., and McLanahan, S. (2020b).
\newblock Measuring the predictability of life outcomes with a scientific mass
  collaboration.
\newblock {\em Proceedings of the National Academy of Sciences},
  117(15):8398--8403.

\bibitem[Salganik et~al., 2021]{Salganik2021}
Salganik, M.~J., Lundberg, I., Kindel, A.~T., Ahearn, C.~E., Al-Ghoneim, K.,
  Almaatouq, A., Altschul, D.~M., Brand, J.~E., Carnegie, N.~B., Compton,
  R.~J., Datta, D., Davidson, T., Filippova, A., Gilroy, C., Goode, B.~J.,
  Jahani, E., Kashyap, R., Kirchner, A., McKay, S., Morgan, A.~C., Pentland,
  A., Polimis, K., Raes, L., Rigobon, D.~E., Roberts, C.~V., Stanescu, D.~M.,
  Suhara, Y., Usmani, A., Wang, E.~H., Adem, M., Alhajri, A., AlShebli, B.,
  Amin, R., Amos, R.~B., Argyle, L.~P., Baer-Bositis, L., B{\"u}chi, M., Chung,
  B.-R., Eggert, W., Faletto, G., Fan, Z., Freese, J., Gadgil, T., Gagn{\'e},
  J., Gao, Y., Halpern-Manners, A., Hashim, S.~P., Hausen, S., He, G., Higuera,
  K., Hogan, B., Horwitz, I.~M., Hummel, L.~M., Jain, N., Jin, K., Jurgens, D.,
  Kaminski, P., Karapetyan, A., Kim, E.~H., Leizman, B., Liu, N., M{\"o}ser,
  M., Mack, A.~E., Mahajan, M., Mandell, N., Marahrens, H., Mercado-Garcia, D.,
  Mocz, V., Mueller-Gastell, K., Musse, A., Niu, Q., Nowak, W., Omidvar, H.,
  Or, A., Ouyang, K., Pinto, K.~M., Porter, E., Porter, K.~E., Qian, C., Rauf,
  T., Sargsyan, A., Schaffner, T., Schnabel, L., Schonfeld, B., Sender, B.,
  Tang, J.~D., Tsurkov, E., van Loon, A., Varol, O., Wang, X., Wang, Z., Wang,
  J., Wang, F., Weissman, S., Whitaker, K., Wolters, M.~K., Woon, W.~L., Wu,
  J., Wu, C., Yang, K., Yin, J., Zhao, B., Zhu, C., Brooks-Gunn, J.,
  Engelhardt, B.~E., Hardt, M., Knox, D., Levy, K., Narayanan, A., Stewart,
  B.~M., Watts, D.~J., and McLanahan, S. (2021).
\newblock Correction for {S}alganik et al., {M}easuring the predictability of
  life outcomes with a scientific mass collaboration.
\newblock {\em Proceedings of the National Academy of Sciences},
  118(50):2118703118.

\bibitem[Savage, 1971]{Savage1971}
Savage, L.~J. (1971).
\newblock Elicitation of personal probabilities and expectations.
\newblock {\em Journal of the American Statistical Association},
  66(336):783--801.

\bibitem[Schervish, 1989]{Schervish1989}
Schervish, M.~J. (1989).
\newblock A general method for comparing probability assessors.
\newblock {\em Annals of Statistics}, 17:1856--1879.

\bibitem[Selten, 1998]{Selten1998}
Selten, R. (1998).
\newblock Axiomatic characterization of the quadratic scoring rule.
\newblock {\em Empirical Economics}, 1:43--62.

\bibitem[Sing et~al., 2005]{Sing2005}
Sing, T., Sander, O., Beerenwinkel, N., and Lengauer, T. (2005).
\newblock {ROCR}: Visualizing classifier performance in \textsf{R}.
\newblock {\em Bioinformatics}, 21(20):3940--3941.

\bibitem[Swets, 1973]{Swets1973}
Swets, J.~A. (1973).
\newblock The relative operating characteristic in psychology.
\newblock {\em Science}, 182(4116):990--1000.

\bibitem[Taillardat and Mestre, 2020]{TaillardatMestre2020}
Taillardat, M. and Mestre, O. (2020).
\newblock From research to applications -- examples of operational ensemble
  post-processing in {F}rance using machine learning.
\newblock {\em Nonlinear Processes in Geophysics}, 27(2):329--347.

\bibitem[Wilks, 2019]{Wilks2019}
Wilks, D.~S. (2019).
\newblock {\em Statistical Methods in the Atmospheric Sciences}.
\newblock Elsevier, Amsterdam, 4th edition.

\bibitem[Xenopoulos et~al., 2023]{Xenopoulos2023}
Xenopoulos, P., Rulff, J., Nonato, L.~G., Barr, B., and Silva, C. (2023).
\newblock Calibrate: {I}nteractive analysis of probabilistic model output.
\newblock {\em IEEE Transactions on Visualization and Computer Graphics},
  29(1):853--863.

\bibitem[Zadrozny and Elkan, 2002]{Zadrozny2002}
Zadrozny, B. and Elkan, C. (2002).
\newblock Transforming classifier scores into accurate multiclass probability
  estimates.
\newblock In {\em Proceedings of the Eighth ACM SIGKDD International Conference
  on Knowledge Discovery and Data Mining}, pages 694--699.

\end{thebibliography}
\end{document}